\theoremstyle{plain}
\newtheorem{theorem}{Theorem}[section]
\theoremstyle{plain}
\newtheorem{proposition}[theorem]{Proposition}
\theoremstyle{plain}
\newtheorem{lemma}[theorem]{Lemma}
\theoremstyle{plain}
\theoremstyle{plain}
\newtheorem*{corollary}{Corollary}
\theoremstyle{plain}
\crefname{open-problem}{Open Problem}{Open Problems}
\newtheorem{open-problem}[theorem]{Open Problem}
\theoremstyle{plain}
\crefname{result}{Result}{Results}
\theoremstyle{definition}
\newtheorem{definition}{Definition}[section]
\theoremstyle{definition}
\theoremstyle{definition}
\theoremstyle{definition}
\theoremstyle{definition}
\crefname{axiom}{Axiom}{Axioms}
\newtheorem{axiom}{Axiom}[section]
\theoremstyle{remark}
\newtheorem*{remark}{Remark}
\theoremstyle{remark}
\newtheorem*{note}{Note}
\theoremstyle{remark}
\theoremstyle{remark}
\newtheorem*{indeed}{Indeed}
\crefname{equation}{Equation}{Equations}
\crefname{figure}{Figure}{Figures}
\newcommand{\p}{\phantom{-}}
\newsavebox\mywatermarkbox\savebox\mywatermarkbox{  \tikz[color=red,opacity=0.15]\node[text width=12cm,align=center]{        arXiv v1: \today\\
        J. N. Clark
  };}
\begin{document}
\newcommand{\dissertationtitle}{Shapley-like values without symmetry}
\title{\dissertationtitle}

\author{Jacob North Clark\thanks{Department of Mathematics, University
    of Missouri-Columbia,~\texttt{jnclark@mail.missouri.edu}}\\Stephen
  Montgomery-Smith\thanks{Department of Mathematics, University of
    Missouri-Columbia,~\texttt{stephen@missouri.edu}}}
\date{}
\maketitle

\section*{\centering Abstract}

Following the work of Lloyd Shapley on the Shapley value, and
tangentially the work of Guillermo Owen, we offer an alternative
non-probabilistic formulation of part of the work of Robert J. Weber
in his 1978 paper ``Probabilistic values for games.''  Specifically,
we focus upon efficient but not symmetric allocations of value for
cooperative games. We retain standard efficiency and linearity, and offer an alternative condition,
``reasonableness,''  to
replace the other usual axioms. In the pursuit of the result, we discover
properties of the linear maps that describe the allocations. This
culminates in a special class of games for which any other map that is
``reasonable, efficient'' can be written as a convex combination of
members of this special class of allocations, via an application of
the Krein-Milman theorem.

\makeatletter{\renewcommand*\@makefnmark{}
  \footnotetext{\emph{Mathematics subject classification:} 91A12, 52A20}
  \footnotetext{\emph{Key words and phrases:} Shapley value,
    cooperative game, reasonable, efficient}
\makeatother}

\section{Introduction}\label{cha:introduction}

\subsection{History and Applications}

\subsubsection{The initial work}
In~\citeauthor{Shapley53}'s \citeyear{Shapley53} work, entitled
``A value for n-person games.'', Lloyd~\citeauthor{Shapley53} established an important idea in the theory of
collaborative games. In \citeauthor{Shapley53}'s own words,
``the possibility of evaluating games is therefore of critical
importance.'' A player in the game needs to know their
prospects, what they might receive compared to what they might
produce on their own. \citeauthor{Shapley53}'s work set forth an axiomatically
based way to do just that.

\subsubsection{Applications and iterations}
Since its first appearance, the Shapley value has been utilized in
numerous contexts.

One context in which the Shapley value appears is in social network
analysis. A person or organization might want to know who is the most important, or most
influential in a network.
In social contexts, one might want to impartially find the leader of a
community or rank the importance of members of a team.
In a strictly economic sense, this could be used to target demonstrations
or free samples of products, or used to target
advertising dollars to the ``taste makers'' of a network. More detail
regarding these ideas can be found in the work of
\citeauthor{GGMOPT03,NaNa11, PaGi11}~\cite{GGMOPT03,NaNa11, PaGi11},
and more generally the seminal work of
\citeauthor{Myerson77}~\cite{Myerson77} on ``Graphs and Cooperation in Games''.

Additionally, the Shapley value has been used in more general economic and political
applications. \citeauthor{Mertens02} has a compact
writeup, ``Some Other Economic Applications of the Value''~\cite{Mertens02} which discussed some
of these applications, such as taxation and redistribution, and
economies with fixed prices. Additionally, for voting games, the
Shapley-Shubik power index builds on the ideas of the Shapley value to
measure the power of each vote in voting games~\cite{ShaShu54},
something of interest to the field of political science among other fields. 

In many, if not all, cases in a usable context, the computations
necessary to calculate this information are
numerous, if not computationally prohibitive. As such, many
approximation schemes have appeared, as seen in the
papers by \citeauthor{Owen72,Fatima08} and \citeauthor{Castro09} in
\citeyear{Owen72,Fatima08} and \citeyear{Castro09}
respectively~\cite{Owen72,Fatima08,Castro09}. Algorithms
using linear and polynomial techniques have been considered, among others.

With all of this activity, one might question Shapley's initial
axioms. What is the fairness that his axioms describe? There have been
many explorations of variations of the Shapley 
value concept, such as probabilistic values and indices of power  as
summarized in ``Variations on the shapley value'' of \citeauthor{MonSam02}
\cite{MonSam02}. What happens when an axiom is weakened or removed? In
\citeauthor{Weber78}'s paper of \citeyear{Weber78},
``Probabilistic values for games,'' there was an initial investigation of some of
these ideas for the probabilistic value view of the Shapley value.
For other explorations of \citeauthor{Weber78}'s work,
\citeauthor{Derks05} \cite{Derks05} offers another proof.

This paper covers some of the same ideas as the above with alternative
assumptions placed on the allocations, aiming to make apparent the intermediary
implications of the work, as well as offering another method of proof
for the main result.

\subsubsection{Contributions of the paper}

In the paper~\cite{Weber78}, \citeauthor{Weber78} gave many results on
the theory of probabilistic values for games. One in particular is the
fact that one can characterize games that are efficient without
symmetry, i.e.\ random order values, as probabilistic
values~\cite[Theorems 12 and 13]{Weber78}. In this
paper, we offer an alternative idea and path to the results in
the world of these non-symmetric games, focusing upon intermediary
results, with the results below.

\begin{note}
  These results were inspired by the papers~\cite{Owen72,Shapley53}
  and without knowledge of~\cite{Weber78, Derks87}, until later on in idea
  development. The main difference between this paper, and the one of
  \citeauthor{Weber78} is we begin with more restricted, but
  reasonable assumptions, the Krein-Millman theorem is used
  in the proof of the main result, and properties of allocations
  themselves are looked into individually, rather than the whole
  process at once.
  Our process also differs from the presentation of \cite{Derks87}, in
  our focus upon the matrices and the Krein-Millman theorem as a
  vehicle to understand  the process.
\end{note}

\subsection{Review \& Definitions}\label{sec:literature-review}

To begin, we must first familiarize ourselves with the notion of an
\(n\)-person cooperative game in the style of
\citeauthor{Shapley53}~\cite{Shapley53}, or for a more modern
presentation see the exposition of  \citeauthor{Maschler13}~\cite{Maschler13}. In this
\lcnamecref{cha:introduction}, and the ones following, a cooperative game can 
be characterized as the following \lcnamecrefs{sec:literature-review} describe.

\subsubsection{Characteristic functions of games}

We begin with a set of players \(N = \{1,2,3,4,\ldots, n\}\), who may
or may not be cooperating with one another. For convenience, we denote
games with this number of players \(|N|\)-player games, or more
commonly \(n\)-player games. With our set of players, we now endeavor
to find a convenient way to mathematically express the possible gains
that various subsets of players would receive if they collaborated. To
accomplish this, we utilize characteristic functions. 

\begin{definition}
  Given an \(n\)-player cooperative game, with players coming from the
  set \(N\), we characterize the game in
  terms of possible collaborations, via its \emph{characteristic function}
  \(v\), where
  \(
    v:\mathbb{P}\left(N\right) \rightarrow
    \mathbb{R}^{\geq 0} 
  \)
  or, alternatively the domain is \(\{0,1\}^{|N|}\),
  i.e.\ in each situation, either a player is participating in a
  collaboration, or not, and the characteristic function assigns some
  value, or ``gains'' to this collaboration.
\end{definition}

Now, we wish to obtain new information about these
characteristic functions. First off, one may view them as a vector,
with each entry in the vector corresponding to a member 
\(T \in \mathbb{P}\left(N\right)\), applying some logical
ordering scheme to the vector, such as increasing cardinality from the
top to bottom of the vector. This vector view of a characteristic
function will be useful in the
considerations to come.

A characteristic function can exhibit several useful properties,
described below. 

\begin{definition}[Monotonicity]\label{def:monotone}
  A characteristic function \(v\) is called \emph{monotone} if given sets
  \(S\) and \(T\), with
  \(S\subseteq T\), then
  \(
    v(S) \leq v(T).
  \)
\end{definition}

\begin{definition}[Superadditivity]\label{def:superadditvity}
  A characteristic function \(v\) is called \emph{superadditive} if for
  all \(S, T \subset N\), if \(S\cap T = \varnothing\), then
  \(
    v(S\cup T) \geq v(S) +v(T).
  \)
\end{definition}

\subsubsection{Shapley's value and the Collaborative Game}

With this information about the game, we now shift focus to that of
allocating the spoils of the collaboration to each player. Typically,
this solution is viewed as a vector, \(\phi(N;v)\) and the gains
assigned 
to each player are denoted \(\phi_i(N;v)\) for player \(i\). One can
call 
this \(\phi\) an allocation.

\begin{definition}
  An \emph{allocation} is a function \(\phi\) with two inputs, the set of all
  players \(N\), and a characteristic function \(v\) with players from
  \(N\), with the output of the function in \(\mathbb{R}^{|N|}\).
\end{definition}

The familiar
Shapley value is one such allocation. To arrive at the Shapley value,
we need 
to familiarize ourselves with his axioms for a ``fair'' solution
\(\phi\) to the problem of dividing spoils. For the following, let us assume \(v(\varnothing) = 0\).

\begin{axiom}[Efficiency]\label{ax:eff}
  An allocation \(\phi\) is \emph{efficient} if for every
  coalitional game \((N;v)\), 
  \[\sum_{i \in N} \phi_i(N;v) = v(N).\] 
\end{axiom}

\begin{definition}
  Let \((N;v)\) be a coalitional game, and let \(i, j \in N\). Players
  \(i\) 
  and \(j\) are \emph{symmetric} if for every coalition 
  \(S \subseteq N \setminus \left\{i,j\right\} \), we have 
  \(v\left(S\cup \left\{i\right\}\right) = v\left(S \cup \left\{j\right\}\right).\)
\end{definition}

\begin{axiom}[Symmetry]\label{ax:sym}
  An allocation \(\phi\) is \emph{symmetric} if for every coalitional
  game \((N;v)\) and every pair of symmetric players \(i\) and \(j\)
  in the game: 
  \(\phi_i(N;v) = \phi_j(N;v)\)
\end{axiom}

\begin{definition}
  A player \(i\) is called a \emph{null player} in a game \((N;v)\) if for
  every coalition \(S \subseteq N\), including the empty coalition
  one has 
  \(v(S) = v(S\cup \left\{i\right\}) \).
\end{definition}

\begin{axiom}[Null player property]\label{ax:npp}
  An allocation \(\phi\) satisfies the \emph{null player property}
  if 
  for every coalitional game \((N;v)\) and every null player \(i\)
  in the game,
  \(\phi_i(N;v) = 0.\)
\end{axiom}

\begin{axiom}[Additivity]\label{ax:add}
  An allocation \(\phi\) satisfies \emph{additivity} if for every 
  pair of coalitional games \((N;v)\) and \((N;w)\),
  \(\phi(N;v+w) = \phi(N;v)+ \phi(N;w)\).
\end{axiom}

Putting together all of our axioms, we can finally obtain the Shapley
value. 

\begin{theorem}[Shapley value]
  There is a unique allocation \(\phi_i(N;v)\) satisfying efficiency, addativity, the
  null player property, and symmetry. 
  This is the Shapley value.
\end{theorem}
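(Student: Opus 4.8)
The plan is to split the statement into its two halves: uniqueness of an allocation satisfying all four axioms, and existence of one. The engine for both is the observation that the games form a real vector space — identifying a game with its characteristic-function vector, and relaxing the nonnegativity convention so that \Cref{ax:add} has room to operate — on which the axioms constrain \(\phi\) strongly enough to pin it down on a convenient basis. The familiar basis to use is that of the unanimity games.

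First I would establish that the \emph{unanimity games} form a basis. For each nonempty \(T \subseteq N\) let \(u_T\) be the game with \(u_T(S) = 1\) when \(T \subseteq S\) and \(u_T(S) = 0\) otherwise. There are \(2^n - 1\) of these, matching the dimension of the space of games with \(v(\varnothing) = 0\), so it suffices to exhibit the inverse change of basis. The M\"obius-type inversion
\[
  v = \sum_{\varnothing \neq T \subseteq N} c_T(v)\, u_T,
  \qquad
  c_T(v) = \sum_{S \subseteq T} (-1)^{|T| - |S|}\, v(S),
\]
does exactly this; verifying it is a short induction on \(|T|\) (equivalently a direct inclusion--exclusion computation), and is the one genuinely computational ingredient.

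Next comes uniqueness. I would compute \(\phi\) on a single scaled unanimity game \(c\,u_T\) (with \(c \in \mathbb{R}\) arbitrary) using three of the four axioms and crucially \emph{not} additivity. Every player \(i \notin T\) is a null player of \(c\,u_T\), since adjoining such an \(i\) to any coalition \(S\) leaves \(u_T(S)\) unchanged; so \Cref{ax:npp} forces \(\phi_i(N; c\,u_T) = 0\). Any two players \(i, j \in T\) are symmetric in \(c\,u_T\), because for \(S \subseteq N \setminus \{i,j\}\) neither \(S \cup \{i\}\) nor \(S \cup \{j\}\) can contain all of \(T\); so \Cref{ax:sym} equalizes their shares. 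Efficiency then fixes the common value: the shares sum to \(c\,u_T(N) = c\), whence \(\phi_i(N; c\,u_T) = c/|T|\) for \(i \in T\). Finally, decomposing an arbitrary \(v = \sum_T c_T(v)\,u_T\) and applying additivity across this finite sum yields
\[
  \phi_i(N; v) = \sum_{T \ni i} \frac{c_T(v)}{|T|},
\]
so there is at most one such allocation.

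For existence it remains to check that the formula just derived — equivalently the classical expression \(\phi_i(N;v) = \sum_{S \subseteq N \setminus \{i\}} \frac{|S|!\,(n - |S| - 1)!}{n!}\bigl(v(S \cup \{i\}) - v(S)\bigr)\) — actually satisfies all four axioms. Additivity and the null-player property are immediate from the dividend form; efficiency follows at once because \(\sum_i \phi_i(N;v) = \sum_T |T| \cdot \frac{c_T(v)}{|T|} = \sum_T c_T(v) = v(N)\); and symmetry reduces to the permutation invariance of the coefficients. I expect the main obstacle to be not any single deduction but the bookkeeping around the basis change: getting the inversion formula and the index set \(T \ni i\) exactly right, and being careful that the nonnegativity convention on characteristic functions is genuinely relaxed to a vector space, so that both the decomposition and the repeated use of additivity are legitimate.
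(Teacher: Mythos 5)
The paper does not actually prove this theorem: it appears in the review section as quoted background, with the proof deferred to Shapley's original paper and the exposition of Maschler et al. So there is no in-paper argument to compare against, and your proposal must be judged as a reconstruction of the classical proof --- which it is, and a correct one: the unanimity games \(u_T\) as a basis via M\"obius inversion, determination of \(\phi\) on each scaled game \(c\,u_T\) using only the null-player property (\cref{ax:npp}), symmetry (\cref{ax:sym}) and efficiency (\cref{ax:eff}), and then additivity (\cref{ax:add}) applied across the finite dividend decomposition. You also correctly sidestep the classical trap: additivity does not yield homogeneity, so one may not write \(\phi(c\,u_T)=c\,\phi(u_T)\); computing \(\phi\) on the scaled game directly from the other three axioms, as you do, is the right move.

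One step deserves more care than ``relaxing the nonnegativity convention.'' The paper defines characteristic functions to take values in \(\mathbb{R}^{\geq 0}\), and its axioms quantify only over such games. If you enlarge the domain to the full vector space, you prove uniqueness among allocations satisfying the axioms on the \emph{larger} class --- formally a weaker statement, since an allocation satisfying the axioms only on nonnegative games is subject to fewer constraints and need not a priori extend to the larger space. The standard repair stays inside nonnegative games: split the dividends by sign and apply additivity to the identity
\[
  v + \sum_{T :\, c_T(v) < 0} \bigl(-c_T(v)\bigr)\, u_T
  \;=\; \sum_{T :\, c_T(v) > 0} c_T(v)\, u_T,
\]
in which every game appearing (and every partial sum) is nonnegative. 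Since \(\phi\) is already pinned down on each scaled unanimity game with a positive scalar, this determines \(\phi_i(N;v)=\sum_{T\ni i}c_T(v)/\lvert T\rvert\) without ever leaving the paper's class of games. With that one substitution your argument is complete.
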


\begin{definition}
  The \emph{Shapley value} is given by the equation
  \[
    \phi_i(N;v) = \sum_{S\subseteq N\setminus\left\{i\right\}}
    \frac{|S|! (|N|-|S| - 1)!}{|N|!}\left(v(S\cup \left\{i\right\}) -
      v(S)\right).
  \]
\end{definition}

The Shapley value can also be determined via a path integral
calculation using a multi-linear extension of \(v\) as described
by~\citeauthor{Owen72}~\cite{Owen72}. This idea led, somewhat
tangentially, to the formulation and results of this paper.

\subsubsection{Analysis background}

In the proofs of our results, we invoke several analytical results. So,
to make the explanations clear, we present the results and
concepts from functional analysis we shall draw from.

\subsubsection{Extreme Points}

We familiarize ourselves first with the concept of extreme points.

\begin{definition}
  Let \(X\) be a vector space, and suppose \(K\) is a subset of   \(X\). A point \(x\in K\) is an \emph{extreme point} of \(K\) if it
  does not lie on a line segment in \(K\). To be more explicit, \(x\)
  cannot be written as a (generalized) linear combination of distinct
  values in \(K\). 
\end{definition}

We shall denote the set of extreme points of \(K\)
\(\text{ex}(K)\). Typically, we consider convex \(K\). 

Another way to view the definition of an extreme point \(x\),
following~\citeauthor{Bowers14}~\cite{Bowers14}, is if \(u\) and \(v\)
are elements of \(K\)
such that \(x=(1-t)u+tv\) for some \(t\in(0,1)\), then
\(x=u=v\). Namely, we cannot write an extreme point as the convex
combination of two distinct points in the set.

\subsubsection{Metrizable topological vector spaces}
\label{sec:metr-topol-vect}

Following the exposition by~\citeauthor{AlBo06}~\cite{AlBo06}, we explore some facts about
metrizable topological vector spaces, that will also be useful in
proving our results. (Although, we do not need the full power of any
of the statements.) 

\begin{definition}
  A \emph{neighborhood base at 0} is a collection of sets \(\mathcal{B}\)
  of neighborhoods of \(0\)  with the property that if \(U\)
  is any neighborhood of \(0\), there exists a \(B\in \mathcal{B}\)
  such that \(B\subset U\).
\end{definition}

\begin{theorem}
  A Hausdorff topological vector space is metrizable if and
  only if zero has a countable neighborhood base.
\end{theorem}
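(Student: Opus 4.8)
The plan is to establish the two implications separately. The ``only if'' direction is immediate, while the ``if'' direction---manufacturing a compatible metric out of a countable local base---is where essentially all of the work lies, via the Birkhoff--Kakutani construction.

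For the ``only if'' direction, suppose the topology of $X$ is induced by a metric $d$. Then the open balls $B_n = \{x : d(x,0) < 1/n\}$, indexed by $n \in \mathbb{N}$, form a countable collection of neighborhoods of $0$; and since every neighborhood of $0$ contains some such ball by definition of the metric topology, $\{B_n\}$ is a countable neighborhood base at $0$.

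For the ``if'' direction, suppose $0$ has a countable neighborhood base $\{U_n\}$. I would first refine it using continuity of the vector space operations: in any topological vector space, every neighborhood $U$ of $0$ contains a \emph{balanced} neighborhood $W$ (one with $\alpha W \subseteq W$ whenever $|\alpha| \le 1$) satisfying $W + W + W \subseteq U$, a consequence of the joint continuity of addition and of scalar multiplication at the origin. Iterating, I would build a decreasing sequence of balanced neighborhoods $\{V_n\}$ with $V_{n+1} + V_{n+1} + V_{n+1} \subseteq V_n$ and $V_n \subseteq U_n$, so that $\{V_n\}$ remains a neighborhood base at $0$. I then define the metric through dyadic rationals: for $r = \sum_k c_k(r) 2^{-k} \in [0,1)$ with digits $c_k(r) \in \{0,1\}$ only finitely many nonzero, set $A(r) = \sum_k c_k(r) V_k$, and $A(r) = X$ for $r \ge 1$; put $f(x) = \inf\{r : x \in A(r)\}$ and $d(x,y) = f(x - y)$. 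Symmetry $f(x) = f(-x)$ follows from each $V_n$ being balanced.

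The main obstacle is the triangle inequality for $f$, which I expect to be the technical heart of the argument. It reduces to the containment $A(r) + A(s) \subseteq A(r+s)$ for all admissible $r,s$, proved by induction on the number of nonzero dyadic digits; the hypothesis $V_{n+1} + V_{n+1} + V_{n+1} \subseteq V_n$ is exactly what absorbs the carries arising when the dyadic expansions of $r$ and $s$ are added. Granting this lemma, $f(x+y) \le f(x) + f(y)$ is immediate, so $d$ is a translation-invariant pseudometric. Finally I would check that $d$ induces the original topology---the balls $\{x : f(x) < \varepsilon\}$ and the sets $V_n$ are mutually cofinal, so by translation invariance the two topologies agree---and that $d$ is a genuine metric. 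It is precisely here that the Hausdorff hypothesis enters: it forces $\bigcap_n V_n = \{0\}$, whence $d(x,y) = 0$ implies $x = y$.
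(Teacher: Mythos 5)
The paper does not prove this theorem at all: it appears in the background section on metrizable topological vector spaces, quoted from the exposition of Aliprantis and Border, and the authors explicitly remark that they do not need its full power. So your proof stands alone, and it is correct in outline---it is the standard Birkhoff--Kakutani construction. The ``only if'' direction is exactly right. For the ``if'' direction, the refinement to a decreasing base of balanced sets $V_n$ with $V_{n+1}+V_{n+1}+V_{n+1}\subseteq V_n$ is indeed available from joint continuity of addition and of scalar multiplication, and you correctly isolate the crux, the containment $A(r)+A(s)\subseteq A(r+s)$: from it follow monotonicity of $A$ (since $0\in A(s-r)$), the triangle inequality for $f$, and---because $A(2^{-n})=V_n$ interlaces with the $f$-balls---compatibility of the resulting invariant metric with the original topology via translation invariance. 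The Hausdorff hypothesis is also deployed in the right place, solely to force $\bigcap_n V_n=\{0\}$ so that the pseudometric separates points. The one piece you defer, the carry-absorption induction over dyadic digits, is genuinely the fiddly part but works exactly as you describe; for comparison, Rudin's version of the same proof uses only the weaker condition $V_{n+1}+V_{n+1}\subseteq V_n$ at the cost of more delicate carry-chasing, so your triple-sum hypothesis simply buys a cleaner induction.
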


\begin{theorem}
  In a complete metrizable locally convex space, the closed convex
  hull of a compact set is compact.
\end{theorem}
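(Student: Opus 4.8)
The plan is to reduce the statement to the standard metric-space fact that, in a complete space, a set is compact if and only if it is closed and totally bounded. Since the closed convex hull $\overline{\operatorname{conv}(K)}$ is closed by construction and the ambient space is complete, it is itself complete as a closed subset; hence it suffices to prove that $\overline{\operatorname{conv}(K)}$ is totally bounded. Because the closure of a totally bounded set is again totally bounded, the whole matter comes down to a single lemma: in a locally convex space, the convex hull of a totally bounded set is totally bounded. A compact set $K$ is in particular totally bounded, so this lemma will apply to $K$.

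To prove the lemma, I would fix an arbitrary neighborhood $U$ of $0$ and, using continuity of addition together with local convexity, choose a convex neighborhood $V$ of $0$ with $V + V \subseteq U$. Total boundedness of $K$ gives a finite set $F = \{a_1, \dots, a_m\}$ with $K \subseteq F + V$. The crucial observation is that $\operatorname{conv}(F)$ lies in a finite-dimensional subspace and is the continuous image of a simplex, hence compact, and therefore totally bounded: there is a finite set $G$ with $\operatorname{conv}(F) \subseteq G + V$.

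The key step is then to show $\operatorname{conv}(K) \subseteq \operatorname{conv}(F) + V$. Given a convex combination $x = \sum_i \lambda_i x_i$ of points $x_i \in K$, write each $x_i = a_{j(i)} + v_i$ with $a_{j(i)} \in F$ and $v_i \in V$. Then $x = \sum_i \lambda_i a_{j(i)} + \sum_i \lambda_i v_i$, where the first sum lies in $\operatorname{conv}(F)$ and the second, being a convex combination of elements of the convex set $V$, lies in $V$. Combining the two containments yields $\operatorname{conv}(K) \subseteq \operatorname{conv}(F) + V \subseteq G + V + V \subseteq G + U$, so $\operatorname{conv}(K)$ is covered by finitely many translates of $U$; as $U$ was arbitrary, $\operatorname{conv}(K)$ is totally bounded.

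I expect the main obstacle to be precisely this last containment, and specifically the point where convexity of $V$ is used to absorb the remainder term $\sum_i \lambda_i v_i$ into $V$: this is the only place where local convexity is genuinely essential, and it is what fails for a general (non-locally-convex) topological vector space. Everything else — passing to the closure, using completeness to upgrade ``totally bounded'' to ``compact,'' and the finite-dimensional compactness of $\operatorname{conv}(F)$ — is routine. I would close by assembling the pieces: $\overline{\operatorname{conv}(K)}$ is both totally bounded and complete, hence compact.
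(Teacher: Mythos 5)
The paper does not prove this statement at all: it appears in the background section as a quoted result from the functional analysis literature (the exposition of Aliprantis and Border), so there is no proof of the paper's to compare yours against. Your argument is the standard one and is correct: reduce compactness to closedness plus completeness plus total boundedness, and establish the key lemma that in a locally convex space the convex hull of a totally bounded set is totally bounded, with convexity of $V$ used exactly where you say it is, to absorb the remainder $\sum_i \lambda_i v_i$ into $V$. The only point you leave implicit is that a metrizable topological vector space admits a translation-invariant compatible metric, which is what makes the metric-space notion of total boundedness you invoke (finitely many $\varepsilon$-balls) agree with the vector-topology notion you actually verify (finitely many translates of an arbitrary neighborhood of $0$), and likewise makes metric completeness agree with completeness of the vector uniformity; with that remark added, the proof is complete.
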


\subsubsection{The Krein-Milman Theorem}

The Krein-Millman Theorem, of functional analysis, is yet another
result we shall utilize in our processes.

\begin{theorem}[Krein-Milman]\label{thm:krein-Millman}
  Suppose \(E\) is a locally convex Hausdorff topological vector space.
  If \(K\) is a nonempty compact, convex subset of \(E\), then
  \[
    K = \overline{\text{co}}\left(\text{ex}K\right)
  \]
  where \(\text{ex}\) is the set of extreme points, and
  \(\overline{\text{co}}\) is the closed convex hull.
  In particular, \(\text{ex}(K)\neq \varnothing\) 
\end{theorem}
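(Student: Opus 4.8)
The plan is to prove the two assertions — that \(K = \overline{\text{co}}(\text{ex}K)\) and that \(\text{ex}(K)\neq\varnothing\) — by first manufacturing a supply of extreme points through a maximality argument, and then using separation to show these points already generate all of \(K\). The central auxiliary notion I would introduce is that of an \emph{extreme subset} (or face) of \(K\): a nonempty closed set \(F\subseteq K\) such that whenever \(x=(1-t)u+tv\) with \(u,v\in K\) and \(t\in(0,1)\) lies in \(F\), both \(u\) and \(v\) lie in \(F\). Extreme points are exactly the one-point extreme subsets, and the key structural fact I would record is transitivity: an extreme subset of an extreme subset of \(K\) is again an extreme subset of \(K\).

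First I would establish \(\text{ex}(K)\neq\varnothing\) via Zorn's lemma. Order the collection of extreme subsets of \(K\) by reverse inclusion. Given a chain, its intersection is a decreasing family of nonempty closed subsets of the compact set \(K\), hence nonempty by the finite intersection property, and one checks directly that it is again an extreme subset; so every chain has an upper bound. A minimal extreme subset \(M\) then exists. I claim \(M\) is a single point: if \(M\) contained two distinct points, the Hahn–Banach separation theorem supplies a continuous linear functional \(f\) that is nonconstant on \(M\), and the set on which \(f\) attains its maximum over \(M\) (nonempty by compactness) is a proper closed extreme subset of \(M\), hence of \(K\) by transitivity, contradicting minimality. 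Thus \(M=\{p\}\) with \(p\in\text{ex}(K)\).

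Next, let \(C=\overline{\text{co}}(\text{ex}K)\). Since \(K\) is compact and convex it contains \(C\). Suppose \(C\neq K\), and pick \(x_0\in K\setminus C\). As \(C\) is closed and convex and \(E\) is locally convex, Hahn–Banach separation gives a continuous linear functional \(f\) with \(f(x_0)>\sup_{y\in C}f(y)\). The face \(K_f=\{x\in K : f(x)=\max_{K}f\}\) is nonempty (by compactness), closed, convex, and an extreme subset of \(K\). Applying the first step to the compact convex set \(K_f\) yields an extreme point \(p\) of \(K_f\), which by transitivity is an extreme point of \(K\) and so lies in \(C\). Then \(f(p)=\max_{K}f\geq f(x_0)>\sup_{C}f\geq f(p)\), a contradiction. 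Hence \(C=K\).

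I expect the main obstacle to be the careful handling of the extreme-subset formalism — specifically verifying the transitivity property and confirming that the functional-maximum sets are genuinely extreme subsets — since these are the hinges on which both halves of the argument turn. The appeals to compactness (finite intersection property and attainment of maxima) and to Hahn–Banach separation in the locally convex Hausdorff setting are essential and must be invoked at exactly the right moments, but are themselves standard.
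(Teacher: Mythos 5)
Your proof is correct, and it is the classical argument for the Krein--Milman theorem: Zorn's lemma applied to the family of closed extreme subsets (faces) ordered by reverse inclusion, compactness via the finite intersection property to handle chains, Hahn--Banach separation to show a minimal face is a singleton, and a second separation argument to show the closed convex hull of the extreme points exhausts \(K\). The hinges you flag --- transitivity of faces and the fact that the maximum set of a continuous linear functional over a face is again a face --- do check out routinely: if \(F\) is a face of \(G\) and \(G\) a face of \(K\), any proper convex decomposition in \(K\) of a point of \(F\) has its endpoints forced first into \(G\), then into \(F\); and linearity of \(f\) forces both endpoints of a decomposition of a maximizer to be maximizers. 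The one place where the Hausdorff and locally convex hypotheses are genuinely consumed is exactly where you use them: continuous linear functionals must separate points (for the singleton step) and separate points from closed convex sets (for the exhaustion step).

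The point of comparison with the paper is that there is nothing to compare: the paper does not prove this theorem. It is quoted as background from functional analysis, with the remark that its proof is non-constructive, and it is then used as a black box in the proof of the main result (\cref{thm:saavConvexHullRAAV}), where the set \(\mathcal{R}\) of reasonable, efficient allocations is shown to be compact and convex so that \(\mathcal{R} = \overline{\text{co}}\left(\text{ex}\,\mathcal{R}\right)\). So you have supplied a complete proof of a result the authors deliberately imported rather than established; your argument is the standard one found in functional analysis texts (e.g.\ the sources the paper cites for its analysis background), and it is sound. One small remark: since the paper only ever applies the theorem to a finite-dimensional compact convex set, a much more elementary induction-on-dimension proof (Minkowski/Carath\'eodory style) would also suffice for the paper's purposes, avoiding Zorn's lemma entirely; your proof, by contrast, delivers the theorem in the full generality in which the paper states it.
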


The proof of the Krein-Millman Theorem is non-constructive,
however, the power of this result allows us to prove our results more
intuitively.

\section{Allocations of Value}
\label{cha:allocations-value}

\subsection{An introduction to allocations}

The Shapley value is a very specialized concept, and the given axioms
might be too specific in some situations. From this point forward, we
see if we can
generalize the idea of division of total value \(v(N)\) among the
players in \(N\), while reducing the number
of required axioms. In addition, we shall see what properties we can
determine based on these axioms.

The first thing we notice is the fact that these
allocations \(\phi\) can be viewed as linear maps, assuming we adopt
\cref{ax:add} (\nameref{ax:add}). As such, they can be viewed 
as \(|N| \times 2^{|N|}\) matrices. This view works quite well with
the vector view of the characteristic functions discussed
previously. Thus, we assume that \cref{ax:add} holds in all of our
further 
considerations. It is helpful to note that in our considerations,
\(N\) is fixed, so \(\phi\) can be viewed as a function of \(v\)
only. We also often make the identification between \(\phi\) and
its matrix counterpart. The majority of our results are proved using
this identification.

Inspired by the ideas presented by~\citeauthor{Owen72}~\cite{Owen72},
one can consider the path integrals along the edges of the region of
integration, rather than the main diagonal (corresponding to the
Shapley value). One may quickly see that the resulting allocations are
well behaved. They can be defined via set chains of the
players in \(N\), specifically set chains that contain all players
introduced one by one.  More formally, allocations are ``special'' if
they are the allocations described below.
\begin{definition}
  A \emph{special allocation} \(\phi\) is an allocation that
  assigns marginal contributions directly to players in the following
  way. Given a set chain
  \[
    \varnothing = M_0 \subset M_1 \subset M_2\subset \ldots \subset
    M_{|N|-1} \subset M_{|N|}=N
  \]
  with \(\left|M_{m+1}\setminus M_m\right| = 1\), player
  \(M_{m+1}\setminus M_m = \{i\}\) is assigned the ``gains''
  \[
    \phi_{i}(N,v) = v(M_{m+1}) - v(M_m).
  \]
\end{definition}

Thinking about this in matrix form we can see there is
``special'' structure here as well. Using the set chain
made up of \(M_m\), with \(m=0\) to \(m=|N|\) starting with
\(M_0=\varnothing\) with the restriction for all integer \(m\) between
\(0\) and \(\left|N\right|-1\) that
\[
  \left|M_{m+1}\setminus M_m\right| = 1
\]
we see this means that we are adding a single player to \(M_j\) at
each step in the chain. This corresponds with the matrix of the
allocation directly, namely for \(m\) from \(0\) to
\(\left|N\right|-1\), we interpret the set chain as follows: in the
row for player \(M_{m+1}\setminus M_m =\{i\}\), we place a \(-1\) in
the column associated with \(M_m\) and a \(1\) in the column
associated with \(M_{m+1}\). Looking at this, one might see why such
allocations are ``special''.

We quickly see that there are nice consequences of viewing these
special allocations as matrices.
The sum of elements in the first column is \(-1\) and the sum of
elements in the last column is \(1\). All other columns sum to
\(0\). The sum of the absolute value of the row elements is \(2\), and
additionally the sum of the absolute value of the interior column
elements (not in the first or last column) is \(2\)
as well. We will see all these consequences appear again, in more
general context.

These chains are in turn in one to one correspondence with the set of
all permutations of \(|N|\) letters, which we can use to find the
number of such allocations.

With this new terminology,
we endeavor to build a representative set of allocations from which we can
write any allocation. It turns out, we can do so for so-called
``reasonable, efficient'' allocations, and it turns out that this
set of representatives  is the set of all the special allocations. 

\subsubsection{Efficiency in allocations}

We offer a slight generalization of Shapley's efficiency useful to out
situation to begin

\begin{axiom}\label{def:newEfficiency}
  An allocation \(\phi\) with matrix \(A\) to be
  \emph{efficient} if
  \[
    \sum_{i=1}^n \phi_i(N;v) = v(N) -v(\varnothing)
  \]
  for all monotone \(v\), where \(\phi_j(N;v) = A_j \cdot v\), with
  \(A_j\) denoting the \(j^{th}\) row of \(A\).
\end{axiom}

\begin{note}
  For most practical applications, we can assume that \(v(\varnothing
  )=0\). With this assumption, we will divide all the spoils, even
  those present when no work is done by any player, among those
  participating in the game. Thus, we could think of efficiency as
  \[
    \sum_{i=1}^n \phi_i(N;v) = v(N) -v(\varnothing) = v(N).
  \]
\end{note}
This is of course closely related to \cref{ax:eff} (\nameref{ax:eff}).

It turns out that the row-wise sum properties we observed in the
special allocation's matrices are true of any efficient one.

\begin{lemma}\label{lem:rowwisesum}
  Column-wise, the sum of all the row elements in each column of the
  matrix of an
  efficient allocation is
  \[
      {\left(-1,0,\ldots,0,1\right)}.
  \]
\end{lemma}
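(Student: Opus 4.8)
The plan is to read off the column sums as the coefficients of a single linear functional and then pin those coefficients down using \cref{def:newEfficiency}. Write $c_T = \sum_{i=1}^{n} A_{i,T}$ for the sum of the entries in the column indexed by $T \in \mathbb{P}(N)$. Since $\phi_j(N;v) = A_j \cdot v$, summing over players and exchanging the order of summation gives
\[
  \sum_{i=1}^{n} \phi_i(N;v) = \sum_{i=1}^{n} \sum_{T \subseteq N} A_{i,T}\, v(T) = \sum_{T \subseteq N} c_T\, v(T).
\]
So the map $L(v) := \sum_{T} c_T\, v(T)$ is exactly the left-hand side of the efficiency condition, while its right-hand side is the functional $R(v) := v(N) - v(\varnothing)$. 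By \cref{def:newEfficiency} we have $L(v) = R(v)$ for every monotone $v$, and the claim is precisely that $L$ and $R$ share the same coefficient vector: $-1$ on the column for $\varnothing$, $+1$ on the column for $N$, and $0$ elsewhere. With the columns ordered by increasing cardinality this reads $(-1,0,\ldots,0,1)$.

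The heart of the argument, and the main obstacle, is that efficiency is only \emph{assumed} for monotone $v$, whereas matching the coefficients of two linear functionals requires them to agree on a spanning set. I would close this gap by showing that every $x \in \mathbb{R}^{2^{|N|}}$ is a difference of two nonnegative monotone functions. Given $x$, take the ``steep'' function $w(S) = M|S| + c$ and set $v = x + w$; for $M$ and $c$ large enough, $w$ is a nonnegative monotone characteristic function and so is $v$, since for $S \subsetneq T$ one has $v(T) - v(S) \ge \bigl(x(T) - x(S)\bigr) + M$, which is positive once $M$ exceeds the largest difference among the entries of $x$, and $c$ can be enlarged to force both functions nonnegative. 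Both $v$ and $w$ are then admissible inputs of \cref{def:newEfficiency}, so linearity of $L$ and $R$ gives
\[
  L(x) = L(v) - L(w) = R(v) - R(w) = R(x),
\]
and hence $L = R$ as functionals on all of $\mathbb{R}^{2^{|N|}}$.

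Finally I would compare coefficients directly. Evaluating the identity $L = R$ on the standard basis vector $e_T$ (the vector equal to $1$ in the coordinate for $T$ and $0$ in all others) yields $c_T = L(e_T) = R(e_T) = e_T(N) - e_T(\varnothing)$, which is $+1$ when $T = N$, $-1$ when $T = \varnothing$, and $0$ for every proper nonempty $T$. Ordering the columns by increasing cardinality places $\varnothing$ first and $N$ last, so the vector of column sums is $(-1,0,\ldots,0,1)$, as claimed. I expect the spanning step — realizing an arbitrary vector as a difference of admissible monotone functions — to be the conceptual crux, while the estimate $v(T) - v(S) \ge (x(T)-x(S)) + M$ and the coefficient read-off are routine.
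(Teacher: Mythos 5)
Your proposal is correct, and its core is the same as the paper's: you express the sum $\sum_i \phi_i(N;v)$ as a linear functional whose coefficients are the column sums of $A$, equate it with the functional $v \mapsto v(N) - v(\varnothing)$ via \cref{def:newEfficiency}, and read off the coefficients. The genuine difference is what you do about the restriction to monotone $v$. The paper's proof multiplies by the all-ones row vector, invokes efficiency, arrives at $[1,\ldots,1]A\cdot v = [-1,0,\ldots,0,1]v$, and then concludes with ``As this is true for all $v$'' --- but efficiency was only assumed for monotone $v$, so that step silently requires that agreement on monotone functions forces agreement everywhere. You identify this as the crux and close it explicitly: every $x \in \mathbb{R}^{2^{|N|}}$ is a difference of two nonnegative monotone functions (via the steep function $w(S) = M|S| + c$), so the two functionals agree on a spanning set and hence everywhere, after which evaluating on the standard basis vectors $e_T$ gives the column sums. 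Your argument is therefore strictly more complete than the paper's; the paper's version is shorter but leaves exactly this spanning step unjustified. The only cost of your route is the extra half-page needed for the decomposition $x = v - w$, which is routine once stated.
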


\begin{proof}
  Suppose we have a \(n\) player game, with set of players \(N\).
  Let us also suppose we have an efficient allocation \(\phi\), with
  matrix \(A\). 
  By definition,
  \[
    A\cdot v = \phi(N;v)
  \]
  for all \(v\),
  with \(\phi_j(N;v)\) being the allocation of value to each player.
  Taking this information, we can now multiply both sides of the
  equality by a row univector of length \(n\), and obtain
  \[
    [1, \ldots, 1] A\cdot v = [1, \ldots, 1] \phi(N;v).
  \]
  Now, taking the right hand side, we notice
  \[
    [1, \ldots, 1] \phi(N;v)= \sum_{j=1}^n \phi_j(N;v).
  \]
  Via efficiency,
  \[
    \sum_{j=1}^n \phi_j(N;v) = v(N) - v(\varnothing).
  \]
  Of course,
  \[
    v(N) - v(\varnothing) = [-1, 0, \ldots, 0, 1] v.
  \]
  Putting all of this together,
  \[
    [1, \ldots, 1]A\cdot v = [-1, 0, \ldots, 0, 1] v.
  \]
  As this is true for all \(v\), we obtain
  \[
    [1, \ldots, 1]A = [-1, 0, \ldots, 0, 1].
  \]
  Therefore, the sum of the rows of \(A\) is what we require.
\end{proof}

\begin{note}
  The converse of this result is trivially true, namely if the sum of
  all the row elements in each column is
  \[
    \left(-1 , 0 , \ldots , 0 , 1\right)
  \]
  then the allocation is efficient. 
\end{note}

\subsubsection{Reasonableness in allocations}

We begin our study in earnest by by proposing a new axiom,
``reasonableness''.

\begin{axiom}\label{def:reasonable}\label{ax:reasonable}
  An allocation \(\phi\) with matrix \(A\)
  \emph{reasonable}\footnote{This 
    condition clearly implies several other conditions sometimes used
    in the explorations of allocations, namely \citeauthor{Weber78}'s
    \emph{dummy axiom}, and the \emph{null-player property}
    (\cref{ax:npp}). Recall, the dummy axiom is a
    generalization of the null player property.
    A player \(m\) is \emph{dummy} in the game if
    \[
      v(S\cup \{i\}) = v(S) + v(\{i\}) \text{ for all } S \subset
      N\setminus\{i\}.
    \]
    The dummy axiom is simply
    if player \(i\) is a \emph{dummy} in the game \(v\), then
    \(\phi_i(v)=v(\{i\})\).
  }
  if for all monotone \(v\),
  \begin{equation}\label{def:reasonablebounds}
    \min_{S: i\notin S}\left\{ v\left(S \cup\{i\}\right) -v(S)\right\}
    \leq A_i\cdot v = \phi_i(N;v) \leq \max_{S: i \notin
      S}\left\{v\left(S \cup \{i\}\right) - v(S)\right\},   
  \end{equation}
  where the maximum and minimum are taken over all \(S\), with \(i\notin S\).  
\end{axiom}

Why one might say this is ``reasonable'' is clear. A logical player
in a game would not expect to get less than the smallest contribution
they make to a group. In the same way, an impartial observer of a game
would not expect a player to receive more than the maximum
contribution a player made to any collaboration.

\begin{lemma}\label{lem:maxplayergains}
  Given a player \(m\), there exists a monotone \(v\) so
  \[
    \min_{S:m \notin S}\left\{ v\left(S\cup\{m\}\right) - v(S)\right\} =
    \max_{S:m\notin S}\left\{v\left(S \cup \{m\}\right) - v(S)\right\}.    
  \]
  i.e., the inequalities in \cref{ax:reasonable} are equalities.
\end{lemma}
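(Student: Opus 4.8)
The plan is to prove the lemma by exhibiting a single explicit monotone characteristic function for which player \(m\)'s marginal contribution is constant across every coalition it might join. The natural witness is the game in which player \(m\) acts as a ``dictator'' controlling one unit of value: define \(v(S) = 1\) if \(m \in S\) and \(v(S) = 0\) otherwise. (This is the unanimity game associated to the singleton \(\{m\}\).) Since the whole content of the statement is an existence claim, producing one such \(v\) and verifying its properties suffices.

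First I would confirm that this \(v\) is a legitimate, monotone characteristic function. Nonnegativity is immediate because \(v\) takes values only in \(\{0,1\}\), and \(v(\varnothing) = 0\) since \(m \notin \varnothing\). For monotonicity in the sense of \cref{def:monotone}, suppose \(S \subseteq T\): if \(m \in S\) then \(m \in T\), so \(v(S) = v(T) = 1\); if \(m \notin S\) then \(v(S) = 0 \le v(T)\). Hence \(v(S) \le v(T)\) in every case, so \(v\) is monotone.

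Next I would compute the marginal contribution of \(m\). For any \(S\) with \(m \notin S\) we have \(v(S) = 0\) and \(v(S \cup \{m\}) = 1\), so \(v(S \cup \{m\}) - v(S) = 1\) for every such \(S\). Therefore both the minimum and the maximum over \(\{S : m \notin S\}\) equal \(1\), which is exactly the claimed equality; in particular the two bounds framing \(A_m \cdot v\) in \cref{ax:reasonable} coincide.

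I do not expect a genuine obstacle here, since the lemma only asks for one example and the dictator game supplies it directly. The sole point demanding care is checking \emph{monotonicity} specifically (as opposed to, say, superadditivity), but that reduces to the short case analysis above. If a nonzero common value is desired for later use, the identical argument applies after scaling by any constant \(c \ge 0\), showing the shared value of the two bounds can be taken to be any nonnegative real.
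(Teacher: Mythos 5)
Your proof is correct and matches the paper's own argument essentially verbatim: the paper constructs exactly the same game \(v_m(S)=1\) if \(m\in S\) and \(0\) otherwise, notes it is monotone by construction, and observes that the marginal contribution \(v_m(S\cup\{m\})-v_m(S)=1\) for every \(S\) with \(m\notin S\), forcing the minimum and maximum to coincide. Your version simply spells out the monotonicity case analysis a bit more explicitly, which is fine.
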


\begin{proof}
  To begin, take \(m\) to be a given player in your game. We wish to
  build a binary characteristic function (or simple game) \(v_m\) so the minimum is equal
  to the maximum. We construct \(v_m\) as follows: If \(m \in S\) for each
  place, put 1 in that place, if not, place a 0. By construction, this
  is monotone. Also by construction, the difference
  \[v_m\left(S\cup\{m\}\right)-v_m(S) = 1\] for all \(S\) with
  \(S\cap\{m\} = \varnothing\). 
\end{proof}

\begin{note}
  This tells us
  \(\phi_m(N;v_m) = 1\), and \(\phi_l(N;v_m) = 0\) for \(l \neq m\).
\end{note}

The next two results follow quickly from the definitions.

\begin{lemma}\label{lem:charFuncSuperadd}
  The characteristic functions constructed in
  \cref{lem:maxplayergains} are %
  superadditive, as
  defined in \cref{def:superadditvity}.%
\end{lemma}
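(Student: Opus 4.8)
The plan is to verify the superadditivity inequality of \cref{def:superadditvity} directly against the explicit formula for the characteristic function $v_m$ built in \cref{lem:maxplayergains}, namely $v_m(S) = 1$ when $m \in S$ and $v_m(S) = 0$ otherwise. First I would fix the player $m$ and take two disjoint coalitions $S, T \subset N$ with $S \cap T = \varnothing$; the goal is to establish $v_m(S \cup T) \geq v_m(S) + v_m(T)$.

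The single observation that drives everything is that disjointness forbids $m$ from belonging to both $S$ and $T$ simultaneously. Consequently the sum $v_m(S) + v_m(T)$ can never reach $2$: it equals $1$ exactly when $m$ lies in one of the two sets, and $0$ when $m$ lies in neither. I would then split into two cases. If $m \in S \cup T$, then $v_m(S \cup T) = 1$, while $m$ sits in precisely one of $S$, $T$, so that $v_m(S) + v_m(T) = 1$, and the inequality holds with equality. If $m \notin S \cup T$, then $m$ is absent from each of $S$, $T$, and $S \cup T$, so every term vanishes and $0 \geq 0 + 0$ holds trivially. In both cases the required inequality is satisfied, whence $v_m$ is superadditive.

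There is essentially no obstacle here: the content of the lemma reduces to the remark that a $0$--$1$ indicator of membership of a single fixed element cannot be ``double-counted'' across disjoint sets, which is exactly why the inequality is never strict in the nontrivial case. The only point worth stating carefully is the use of $S \cap T = \varnothing$ to rule out the situation $m \in S$ and $m \in T$ at once; without disjointness the claim would fail, since one could then have $v_m(S) + v_m(T) = 2 > 1 = v_m(S \cup T)$. Thus the hypothesis of disjointness in \cref{def:superadditvity} is precisely what is needed, and the proof is a short two-case check.
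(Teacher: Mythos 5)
Your proof is correct: the two-case check (whether or not $m \in S \cup T$, using disjointness to rule out double membership) verifies superadditivity of $v_m$ directly from its definition. The paper itself offers no written proof, stating only that the result ``follows quickly from the definitions,'' and your argument is precisely that quick verification made explicit.
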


\begin{proposition}\label{prop:convexComboMaps}
  The convex combination of two reasonable allocations is again
  reasonable.
\end{proposition}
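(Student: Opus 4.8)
The plan is to exploit the crucial feature of \cref{ax:reasonable}: the upper and lower bounds on $\phi_i(N;v)$ depend only on the characteristic function $v$ and the player $i$, and \emph{not} on the allocation itself. First I would fix two reasonable allocations $\phi$ and $\psi$, with matrices $A$ and $B$, a scalar $t\in[0,1]$, and form the convex combination $\chi = t\phi + (1-t)\psi$, whose matrix is $C = tA + (1-t)B$. Since both $A$ and $B$ act linearly on the vector of characteristic-function values (we are already assuming \cref{ax:add}), $C$ is again the matrix of an allocation, so $\chi$ is a genuine allocation and it is meaningful to ask whether it is reasonable.

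Next I would fix an arbitrary monotone $v$ and an arbitrary player $i$, and abbreviate the common bounds as $m_i = \min_{S: i\notin S}\{v(S\cup\{i\}) - v(S)\}$ and $M_i = \max_{S: i\notin S}\{v(S\cup\{i\}) - v(S)\}$. Reasonableness of $\phi$ gives $m_i \le \phi_i(N;v) \le M_i$, and reasonableness of $\psi$ gives $m_i \le \psi_i(N;v) \le M_i$, with the \emph{same} $m_i$ and $M_i$ appearing in both chains. Multiplying the first chain by $t\ge 0$ and the second by $1-t\ge 0$ and adding then yields
\[
  m_i = t m_i + (1-t) m_i \le t\phi_i(N;v) + (1-t)\psi_i(N;v) \le t M_i + (1-t) M_i = M_i,
\]
and since $\chi_i(N;v) = t\phi_i(N;v) + (1-t)\psi_i(N;v)$, this is precisely the reasonableness condition of \cref{ax:reasonable} for $\chi$ at player $i$. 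As $v$ and $i$ were arbitrary, $\chi$ is reasonable.

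There is no serious obstacle here; the whole argument rests on the single observation that the bracketing quantities $m_i$ and $M_i$ are intrinsic to the game $(N;v)$ and independent of which allocation is being evaluated. The only point worth stating carefully is that a convex combination of two numbers, each lying in the closed interval $[m_i, M_i]$, again lies in $[m_i, M_i]$ — that is, that intervals are convex — which is exactly what the nonnegativity of $t$ and $1-t$ together with $t + (1-t) = 1$ provides. I would also remark that the identical computation shows more generally that any convex combination of finitely many reasonable allocations is reasonable, a fact that will be convenient later when the Krein–Milman machinery is brought to bear.
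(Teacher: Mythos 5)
Your proof is correct, and it coincides with what the paper intends: the paper gives no written proof at all, stating only that this result ``follows quickly from the definitions,'' and your argument---that the bounds $m_i$ and $M_i$ in \cref{ax:reasonable} depend only on $(N;v)$ and $i$, so a convex combination of two numbers in $[m_i,M_i]$ stays in $[m_i,M_i]$---is exactly that quick derivation.
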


\begin{lemma}\label{lem:order1s}
  Given a matrix of a reasonable allocation \(A\) with each row
  only containing a single -1 and a single
  1 and the rest of the entries all being 0, if a -1 falls in the
  column for a set \(S\), then the associated 1 in that row must fall
  in a superset of \(S\), \(S \cup T\), where \(|T|-|S\cap T| > 0\).
\end{lemma}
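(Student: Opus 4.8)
The plan is to extract from \cref{ax:reasonable} a single scalar inequality and then test it against a carefully chosen monotone characteristic function. First I would fix the player $i$ whose row is under consideration and write that row explicitly: by hypothesis it has a $-1$ in the column indexed by some set $S$ and a $1$ in the column indexed by some set $Q$, with all other entries zero. Since $\phi_i(N;v) = A_i \cdot v$, this means $\phi_i(N;v) = v(Q) - v(S)$ for every characteristic function $v$. The key observation is that the lower bound in \cref{def:reasonablebounds} is automatically nonnegative for monotone $v$: each difference $v(S' \cup \{i\}) - v(S')$ is $\geq 0$ by \cref{def:monotone}, so the minimum over $S'$ with $i \notin S'$ is $\geq 0$. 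Hence reasonableness forces $\phi_i(N;v) \geq 0$, that is $v(Q) \geq v(S)$, for every monotone $v$.

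Having reduced the statement to ``$v(Q) \geq v(S)$ for all monotone $v$,'' the second step is to show this inequality can hold only when $S \subseteq Q$. I would argue the contrapositive by exhibiting a monotone $v$ that separates $S$ and $Q$ whenever $S \not\subseteq Q$. The natural choice is the indicator of the supersets of $S$: set $v(R) = 1$ if $R \supseteq S$ and $v(R) = 0$ otherwise. This $v$ is monotone, and $v(S) = 1$; but if $S \not\subseteq Q$ then $Q \not\supseteq S$, so $v(Q) = 0$, giving $v(Q) - v(S) = -1 < 0$ and contradicting the nonnegativity just established. Therefore $S \subseteq Q$.

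Finally, $Q$ and $S$ index distinct columns (one carries the $1$, the other the $-1$), so $Q \neq S$, and combined with $S \subseteq Q$ this yields the proper containment $S \subsetneq Q$. Writing $Q = S \cup T$ with $T = Q \setminus S$ gives $|T| - |S \cap T| = |T| = |Q \setminus S| > 0$, which is exactly the claimed form. I expect the only delicate point to be the separating-function construction in the second step, since everything else is a direct reading of the axioms; and even there the indicator of the supersets of $S$ does the job cleanly, so I do not anticipate a genuine obstacle. It is worth noting that only the lower bound of reasonableness, together with monotonicity, is used — the upper bound plays no role here.
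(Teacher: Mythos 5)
Your proof is correct, and its core engine is the same as the paper's: reasonableness applied to a monotone \(\{0,1\}\)-valued game forces \(\phi_i(N;v)\ge 0\), and a suitably chosen indicator function then rules out a misplaced \(1\). The difference lies in which indicator is chosen and, consequently, how much of the negation gets covered. The paper argues by contradiction but only sets up the case where ``the ordering is in reverse,'' i.e.\ the \(1\) sits in a \emph{subset} of the column carrying the \(-1\), and it separates these using the indicator of the proper supersets of the \(1\)'s column. That construction does not extend to the remaining case of the negation, where the two columns are incomparable: the indicator of proper supersets of the \(1\)'s column vanishes on both columns and yields no contradiction there. Your choice---the indicator of all sets containing the \(-1\)'s column \(S\), including \(S\) itself---separates \(S\) from \emph{every} \(Q\) with \(S\not\subseteq Q\), so the nested-reverse case and the incomparable case are handled in one stroke. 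Your version is therefore stronger as written: it proves the lemma in full, whereas the paper's proof, read literally, leaves the incomparable configuration unaddressed. Your final bookkeeping step (\(Q\neq S\) because the two nonzero entries occupy distinct columns, hence \(S\subsetneq Q\) and \(|T|-|S\cap T|=|Q\setminus S|>0\)) is also spelled out more carefully than in the paper.
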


\begin{proof}
  Suppose to the contrary, we have an allocation matrix \(A\), for
  which the 
  ordering is in reverse, namely the \(-1\) is in \(S\cup T\) and
  \(1\) in \(S\), in the row associated with player \(j\).
  Build \(v_S\) as the monotone function
  with \(v_S(S) = 0\) and \(v_S(S \cup T) = 1\) for all \(T\) such
  that  \(S \cap T \neq \varnothing\). This function is
  monotone by construction. When we utilize our map \(A\) to
  determine how to split the spoils, there is an immediate
  contradiction. For the player \(j\), \(\phi_j(v) = -1\). This
  contradicts the fact that reasonable allocations do not assign 
  players 
  negative spoils. At the very worst, for a monotone binary game \(v\),
  \[
    0\leq \phi_i(v,N)\leq 1 
  \]
  which is non-negative.
\end{proof}

\subsection{Extreme points of the reasonable efficient allocations}
By observation, with small sets of players one might infer that the
special allocations are the set of extreme points for the reasonable,
efficient allocations, see for example the appendices of \cite{ClarkDis}
for some finite results. We now
proceed to provide a proof of this assertion in general.

\subsubsection{Special allocations and sets}
\label{sec:saav-sets}

To begin, recall we defined each special allocation based on a
strictly increasing (by exactly one member at each step) chain of
sets. For example, the
following special allocation matrix for the game with \(3\) players,
\[
  \begin{pmatrix}
    -1&\p1&\p0&\p0&\p0&\p0&\p0&\p0\\
    \p0&-1&\p0&\p0&\p1&\p0&\p0&\p0\\
    \p0&\p0&\p0&\p0&-1&\p0&\p0&\p1
  \end{pmatrix}
\]
ordered usually, as follows
\[
  {\left[
    \varnothing , \{1\}, \{2\}, \{3\}, \{1,2\}, \{1,3\}, \{2,3\}, \{1,2,3\}
  \right]}^t,
\]
is associated with the chain of sets
\begin{equation*}
  \varnothing \subset \{1\} \subset \{1,2\} \subset \{1,2,3\}.
\end{equation*}
This idea of set chains and their connection to the special allocations will prove
integral to our following arguments. 
Additionally, it allows us to see the cardinality of the set of all
special allocations quite
quickly, as we know that the number of these chains corresponds
directly to the number of permutations of \(n\) letters, that is \(n!\).%

To prove our result, we will suppose we have an extreme reasonable,
efficient allocation not
listed in the set of all special allocations. We 
must show this is impossible. To obtain this result, we must first
note that 
any reasonable, efficient allocation has the following properties.

\subsubsection{Structural constraints on reasonable efficient allocations}

Each reasonable, efficient allocation has many properties, as
demonstrated previously, and shown explicitly in
\cite{ClarkDis}. Via those requirements, we
can find even 
more structure that will help us reach our result. To begin, we
will find the row-wise paring of elements. To make the proof more clear, the
following notation is introduced. 

\begin{definition}\label{def:matrixEntries} 
  Each entry in the matrix of an allocation can be referred
  to by a player and set, given \(A\),  we denote each entry by
  \[
    A_{i,S}
  \]
  where \(i\) denotes the player (row), as before, and \(S\) denotes
  the column associated.
\end{definition}

\begin{note}
  \Cref{def:matrixEntries} allows us to prove things
  independent from the ordering of the sets making up the columns of
  our matrices. Thus, if we can prove the statement that follows for a
  single row, we have it for all rows.
\end{note}

This allows us to write the payout to any specific player simply, as
\[
  \phi_i(v,N) = A_i\cdot v = \sum_{S\subset N} A_{i,S}v(S),
\]
recalling \(A_i\) is the \(i^{th}\) row of \(A\), and the sum is taken
over all \(S\subset N\).  

\subsubsection{Truncations of characteristic functions}

\begin{definition}
  We call a set \(S\) \emph{minimal} in the sense of the characteristic
  function if
  \(v(S)>0\) and there exists no set \(T \subsetneq S\) with
  \(v(T)>0\).
\end{definition}

\begin{definition}
  A \emph{truncation}\footnote{This is similar in spirit to~\citeauthor{Weber78}'s
    \emph{deletion}~\cite[Section 6]{Weber78}. The language truncation
    remains as we are thinking of these characteristic functions as
    vectors.} of a characteristic
  function \(v\) is the characteristic function \(w\) such that
  \(w(S) = 0\) for some minimal \(S\), and \(w(T) = v(T)\) for \(T\neq S\).
\end{definition}

\begin{remark}
  We call \(S\) the truncating set. 
\end{remark}

\begin{lemma}
  If \(v\) is monotone, then any truncation of \(v\) is monotone.
\end{lemma}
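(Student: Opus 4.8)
The plan is to fix the truncating set $S$—which is minimal, so $v(S)>0$ but $v(T)=0$ for every $T\subsetneq S$—and verify the defining inequality $w(A)\le w(B)$ directly for an arbitrary pair $A\subseteq B$, splitting into cases according to whether $A$ or $B$ coincides with $S$. Since $w$ agrees with $v$ everywhere except at $S$, where it is forced down to $0$, the only pairs that can possibly cause trouble are those in which $S$ appears, so a short case analysis should suffice.

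First I would dispose of the generic case where neither $A$ nor $B$ equals $S$. There $w(A)=v(A)$ and $w(B)=v(B)$, so monotonicity of $v$ gives $w(A)=v(A)\le v(B)=w(B)$ immediately. Next, suppose $A=S$. Then $w(A)=0$, and because every characteristic function takes values in $\mathbb{R}^{\geq 0}$ we have $w(B)\ge 0=w(A)$, whether or not $B=S$. This case is therefore free.

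The one case that requires the minimality hypothesis, and the step I expect to be the crux, is $B=S$ with $A\subsetneq S$ (the subcase $A=B=S$ being trivial). Here $w(B)=0$, so the required inequality reads $w(A)=v(A)\le 0$, i.e.\ $v(A)=0$. This is precisely where monotonicity of $v$ alone is insufficient: a monotone $v$ could a priori assign a positive value to a proper subset of $S$, and then truncating $S$ to $0$ would break monotonicity downward. It is the minimality of $S$ that rescues us—by definition there is no $T\subsetneq S$ with $v(T)>0$, and since $A$ is exactly such a proper subset we must have $v(A)=0$, hence $w(A)=0\le 0=w(B)$.

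Having covered all cases, I would conclude that $w(A)\le w(B)$ for every $A\subseteq B$, so $w$ is monotone. The only genuinely load-bearing ingredient is the minimality of the truncating set in the single downward-facing case; everywhere else monotonicity of $v$, or merely the non-negativity of characteristic functions, does the work.
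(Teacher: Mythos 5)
Your proof is correct and follows essentially the same route as the paper: a case analysis in which pairs not involving the truncating set inherit monotonicity from \(v\), supersets of \(S\) are handled easily, and the minimality of \(S\) is the decisive ingredient forcing \(v(A)=0\) for every \(A\subsetneq S\). The only cosmetic difference is that for supersets of \(S\) you invoke nonnegativity of the characteristic function where the paper uses \(w(T)=v(T)\geq v(S)>0\); both are valid.
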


\begin{proof}
  Suppose \(v\) is a monotone characteristic function. Then, we know
  \(v(T) \geq v(S)\) for all \(S\subseteq T\) by definition. Let us
  let \(w\) be a truncation of \(v\), with truncating set
  \(S_t\). Recall, \(v(T) = w(T)\) for all \(T\neq S_t\), 
  and our inequality stands without much work for the majority
  of our places. However, we must concern ourselves of the cases when
  \(S_t\) appears, as \(v(S_t) > w(S_t) = 0\). This is no obstacle for
  \(T\) with \(S_t \subset T\), as
  \[
    w(T) = v(T) \geq v(S_t) > w(S_t)
  \]
  and
  \[
    v(S_t) > w(S_t)=0= v(S) = w(S)
  \]
  for \(S\subset S_t\). 
  Recall, of course, the truncating set is minimal, and there are no
  subsets with \(w(S)>0\). 
  Therefore, any truncation of \(v\) is again monotone.
\end{proof}

\begin{lemma}
  If \(v\) is superadditive, then any truncation of \(v\) is superadditive.
\end{lemma}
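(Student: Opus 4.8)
The plan is to exploit the fact that a truncation $w$ of $v$ differs from $v$ in exactly one coordinate: at the (minimal) truncating set $S_t$ we have $w(S_t) = 0 \le v(S_t)$, while $w(T) = v(T)$ for every $T \ne S_t$. So $w \le v$ pointwise, with equality off $S_t$. To verify superadditivity, I fix disjoint $A, B \subset N$ and must establish $w(A \cup B) \ge w(A) + w(B)$. Since $A$ and $B$ are disjoint and $S_t \ne \varnothing$ (minimality demands $v(S_t) > 0$, whereas $v(\varnothing) = 0$), the set $S_t$ can coincide with at most one of $A$, $B$, and possibly with $A \cup B$. I would therefore case-split on the role played by $S_t$ in the inequality.

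The routine cases are those in which lowering the value at $S_t$ can only help. If $S_t \notin \{A, B, A \cup B\}$, then $w$ and $v$ agree on all three arguments and the inequality reduces to superadditivity of $v$. If $S_t = A$ (the case $S_t = B$ being symmetric), then $w(A) = 0$ shrinks the right-hand side, and provided $B \ne \varnothing$ we have $A \cup B \supsetneq S_t$, so $w(A \cup B) = v(A \cup B) \ge v(A) + v(B) \ge v(B) = w(B)$, using $v(S_t) \ge 0$; the degenerate subcase $B = \varnothing$ gives $w(A \cup B) = w(S_t) = 0 = w(A) + w(B)$.

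The step I expect to be the main obstacle is the case $S_t = A \cup B$, because there the left-hand side collapses to $w(S_t) = 0$, and I must show the right-hand side is no larger, i.e. $w(A) + w(B) \le 0$. This is precisely where minimality of the truncating set is indispensable. When $A$ and $B$ are both nonempty with $A \cup B = S_t$, each is a proper subset $A, B \subsetneq S_t$, so minimality forces $v(A) = v(B) = 0$; since neither equals $S_t$, we get $w(A) = w(B) = 0$, and hence $w(A) + w(B) = 0 = w(A \cup B)$. If instead one of the two sets is empty, the inequality again degenerates to $0 \ge 0$. Having dispatched every case, superadditivity of $w$ follows. I would emphasize in the write-up that minimality is not a convenience but a necessity here: truncating at a non-minimal set could leave a proper subset with positive value on the right-hand side while zeroing the left-hand side, breaking the inequality; minimality guarantees the right-hand side vanishes exactly when the left-hand side does.
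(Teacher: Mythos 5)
Your proof is correct, and its overall template matches the paper's: compare \(w\) with \(v\) pointwise (they agree except at the truncating set \(S_t\), where \(w\) is smaller) and do casework on where \(S_t\) enters the inequality \(w(A\cup B)\ge w(A)+w(B)\). However, your casework is strictly more complete than the paper's. The paper treats only the case in which \(S_t\) appears as one of the two disjoint summands, arguing \(w(S_t\cup T)=v(S_t\cup T)\ge v(S_t)+v(T)> w(S_t)+w(T)\) for nonempty \(T\) disjoint from \(S_t\), and calling the empty-\(T\) case trivial. It never addresses the case you correctly single out as the main obstacle, \(S_t=A\cup B\) with \(A,B\) nonempty and disjoint, where the left-hand side collapses to \(0\) and the inequality could in principle fail. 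That case is precisely where minimality of the truncating set is indispensable: as you observe, \(A,B\subsetneq S_t\) forces \(v(A)=v(B)=0\), hence \(w(A)+w(B)=0\). Notably, the paper's proof never invokes minimality at all, even though the lemma is false without it (truncating a non-minimal set can leave a positive-valued proper subset on the right while zeroing the left). So your write-up fills a genuine, if small, gap in the paper's argument; the paper's version buys brevity at the cost of silently omitting the one case where the hypothesis on \(S_t\) does real work.
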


\begin{proof}
  Suppose \(v\) is a superadditive characteristic function. Then, we know
  \(v(S\cup T) \geq v(S) +v(T)\) for all \(S\) and \(T\) with
  \(S\cap T = \varnothing\), by definition. Let us let \(w\) be a
  truncation of \(v\), with truncating set \(S_t\). Recall,
  \(v(T) = w(T)\) for all \(T\neq S_t\), so the inequality
  stands without much work for the majority of our places. However, we
  must concern ourselves with the cases when \(S_t\) appears, as
  \(v(S_t) > w(S_t) = 0\). Note, however
  \[
    w(S_t\cup T) = v(S_t\cup T) \geq v(S_t) +v(T) > w(S_t) +w(T)
  \]
  if \(T\neq \varnothing\), and the statement is trivial if \(T\)
  is empty. Therefore, any truncation of \(v\) is again superadditive.
\end{proof}

\begin{definition}
  A \emph{pair truncation} of the characteristic function \(v\) is two successive
  truncations of \(v\) with truncating sets \(S\) and \(S\cup\{p\}\)
  respectively, for a player \(p\) with \(p\notin S\).
\end{definition}

\begin{lemma}\label{lem:truncFunc}
  A pair truncation \(w\) of a binary characteristic function \(v\) with marginal
  contribution of player \(p\) equal to 0 with truncating sets \(S\)
  and \(S\cup\{p\}\), where \(S\) is any minimal set with
  \(p\notin S\), retains the same 0 marginal contribution for p.
\end{lemma}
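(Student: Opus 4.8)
The plan is to unwind the definition of a pair truncation and exploit the fact that it alters $v$ on exactly the two sets $S$ and $S\cup\{p\}$, leaving every other coordinate of the characteristic function fixed. Writing $w$ for the pair truncation, we have $w(S)=w(S\cup\{p\})=0$ and $w(R)=v(R)$ for every other set $R$. I would read the hypothesis ``marginal contribution of player $p$ equal to $0$'' as the statement that $v(T\cup\{p\})-v(T)=0$ for every $T$ with $p\notin T$, and the conclusion as the same statement with $w$ in place of $v$.

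The heart of the argument is a bookkeeping observation. The marginal contributions of $p$ are indexed by the pairs $(T,T\cup\{p\})$ as $T$ ranges over subsets with $p\notin T$, and the two sets whose values are altered both belong to one and the same pair, namely the pair at $T=S$. Indeed, $S$ cannot occur as an upper set $T\cup\{p\}$ because $p\notin S$, and $S\cup\{p\}$ cannot occur as a lower set $T$ because $p\in S\cup\{p\}$; moreover $S\cup\{p\}$ is the upper set only when $T=S$. Consequently every pair other than the one at $T=S$ is built from sets untouched by the truncation, so there $w(T\cup\{p\})-w(T)=v(T\cup\{p\})-v(T)=0$; and for the single exceptional pair one computes directly $w(S\cup\{p\})-w(S)=0-0=0$, matching the hypothesized value. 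Hence $p$ retains marginal contribution $0$ throughout, which is exactly the claim.

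The one point that genuinely needs care --- and the step I expect to be the main obstacle --- is checking that the pair truncation is well defined, i.e.\ that after truncating the minimal set $S$ the set $S\cup\{p\}$ is itself minimal in the intermediate function and so is a legal second truncating set. Here I would combine binariness and monotonicity of $v$ with the vanishing of $p$'s marginal contribution: since $S$ is minimal we have $v(S)=1$, whence $v(S\cup\{p\})=v(S)=1>0$, while every proper subset of $S\cup\{p\}$ is either a subset of $S$ (value $0$ by minimality of $S$) or of the form $R\cup\{p\}$ with $R\subsetneq S$, whose value equals $v(R)=0$ again by the vanishing marginal contribution together with minimality of $S$. After resetting $v(S)$ to $0$, the set $S\cup\{p\}$ then has no proper subset of positive value and is therefore minimal, so the second truncation is legitimate. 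With well-definedness secured, the case analysis above completes the proof.
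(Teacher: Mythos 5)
Your proof is correct and takes essentially the same route as the paper's: both reduce the claim to the observation that the only marginal-contribution pair \(\left(T, T\cup\{p\}\right)\) touched by the truncation is the one at \(T=S\), where the difference \(1-1=0\) simply becomes \(0-0=0\), every other pair being built from unchanged values. Your additional verification that \(S\cup\{p\}\) remains minimal after the first truncation (so that the second truncation is a legal truncation at all) addresses a well-definedness point the paper leaves implicit, and it is a worthwhile supplement rather than a departure from the paper's argument.
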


\begin{proof}
  To begin, let us take a characteristic function \(v\) with the marginal
  contribution of \(p\) equal to 0. Take the pairwise truncation of
  this \(v\) with the truncating sets \(S\) and \(S\cup\{p\}\), as
  described, where \(S\) is any minimal set with \(p\notin S\).
  Recall for \(v\), \(v(S)\) and \(v(S\cup \{p\})=1\).  Prior to pair
  truncation, we note that, for sets \(T \cap \{p\} =\varnothing\),
  \(v(T \cup \{p\}) - v(T) = 0\), as the marginal contribution of
  \(p\) is 0.  Naturally, following the pair truncation,
  \(w(T \cup \{p\}) - w(T)\) is still equal to \(0\) for all sets
  \(T\).  For all the unchanged places, this is clear, and for the two
  changed places, rather than seeing \(1-1=0\) for the case of
  \(T = S\), our truncating set, we now observe \(0-0=0\).
\end{proof}

\begin{note}
  This of course works on the characteristic functions one can
  construct following \cref{lem:maxplayergains} for player \(m\), and
  they all have the same (zero) marginal contribution for any player
  \(p \in N\setminus \{m\}\).
\end{note}

\subsubsection{Extensions of characteristic functions}

\begin{definition}
  Given a characteristic function \(v_M\)  on the set \(M\subsetneq N\), we can extend it to a
  characteristic function \(v_N\) on \(N\) by setting
  \[
    v_N(S) = v_M(S\cap M).
  \]
\end{definition}

\begin{lemma}
  If a characteristic function \(v_{M}\)  on \(M\subsetneq N\) is monotone, then its
  extension \(v_N\) to \(N\) is also monotone.
\end{lemma}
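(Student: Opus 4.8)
The plan is to prove monotonicity of \(v_N\) directly from the definition, reducing the claim on \(N\) to the assumed monotonicity of \(v_M\) on \(M\) via the defining formula \(v_N(S) = v_M(S\cap M)\). So I would fix two arbitrary subsets \(S \subseteq T\) of \(N\) and aim to show \(v_N(S) \leq v_N(T)\), which by \cref{def:monotone} is exactly what monotonicity of \(v_N\) requires.

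The single substantive step is the observation that intersection with a fixed set preserves inclusion: from \(S \subseteq T\) it follows that \(S \cap M \subseteq T \cap M\), and both of these are themselves subsets of \(M\). Once this is in hand, the hypothesis that \(v_M\) is monotone applies to the pair \(S \cap M \subseteq T \cap M\) inside \(M\), giving \(v_M(S\cap M) \leq v_M(T\cap M)\). Unwinding the definition on both sides then yields
\[
  v_N(S) = v_M(S\cap M) \leq v_M(T\cap M) = v_N(T),
\]
which completes the argument.

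There is essentially no obstacle here; the result is a one-line consequence of the fact that \(S \mapsto S\cap M\) is order-preserving on the power set, composed with the monotonicity of \(v_M\). The only point worth stating explicitly, to keep the proof self-contained, is that \(S\cap M\) and \(T\cap M\) genuinely lie in \(M\) so that the hypothesis on \(v_M\) is applicable to them; beyond that, no case analysis or auxiliary construction is needed.
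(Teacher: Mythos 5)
Your proof is correct and is essentially identical to the paper's own argument: both proceed by noting that \(S \subseteq T\) implies \(S\cap M \subseteq T\cap M\), applying the monotonicity of \(v_M\) to this pair, and unwinding the definition \(v_N(S) = v_M(S\cap M)\) on both sides. No differences worth noting.
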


\begin{proof}
  As \(v_{M}\) is monotone, we have
  \[
    v_{M}(T) \geq v_{M}(S) 
  \]
  for all \(S\) and \(T\) with \(S\subset T\).
  For the extension, we note, as \(S\subset T\), \(S\cap M\subset
  T\cap M\)
  \begin{align*}
    v_{N}(T) &=v_{M}(T\cap M)\\
             &\geq v_M(S\cap M)\\
             &= v_N(S),
  \end{align*}
  and therefore, the extension is also monotone.
\end{proof}

\begin{lemma}
  If a characteristic function \(v_{M}\)  on \(M\subsetneq N\) is superadditive, then its
  extension \(v_N\) to \(N\) is also superadditive.
\end{lemma}

\begin{proof}
  As \(v_{M}\) is superadditive, we have
  \[
    v_{M}(S\cup T) \geq v_{M}(S) + v_{M}(T) 
  \]
  for all \(S\) and \(T\) with \(S\cap T = \varnothing\).
  For the extension, we note
  \begin{align*}
    v_{N}(S\cup T) &=v_{M}((S \cup T)\cap M)\\
                     &= v_{M}((S\cap M) \cup
                       (T\cap M))\\
                   &\geq v_M(S\cap M) + v_M(T\cap M)\\
                     &= v_N(S) + v_N(T),
  \end{align*}
  and thus, the extension is also superadditive.
\end{proof}

\begin{lemma}
  For the extensions of characteristic functions
  \(M=N\setminus\{i\}\), the inequalities in the definition of
  reasonableness taken for player \(i\)
  are equalities and
  \[
    \phi_i(N;v) = 0.
  \]
\end{lemma}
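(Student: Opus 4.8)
The plan is to observe that the extension construction makes player $i$ behave as a null player in $v_N$, so that \emph{every} marginal contribution of $i$ is zero; the reasonableness bounds then have coinciding endpoints and pin $\phi_i(N;v_N)$ to $0$.

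First I would use the defining relation $v_N(S) = v_M(S \cap M)$ together with the fact that $i \notin M = N \setminus \{i\}$. For any $S$ with $i \notin S$ we have $S \subseteq M$, hence $S \cap M = S$ and also $(S \cup \{i\}) \cap M = S$. Substituting into the extension formula gives
\[
  v_N(S \cup \{i\}) - v_N(S) = v_M(S) - v_M(S) = 0
\]
for every such $S$. Since all of these marginal contributions equal $0$, the minimum and maximum appearing in \cref{ax:reasonable} agree:
\[
  \min_{S: i \notin S}\bigl\{v_N(S \cup \{i\}) - v_N(S)\bigr\} = 0 = \max_{S: i \notin S}\bigl\{v_N(S \cup \{i\}) - v_N(S)\bigr\},
\]
which is exactly the assertion that the reasonableness inequalities are equalities.

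To finish, I would invoke reasonableness itself. Assuming $v_M$ is monotone, the preceding extension lemma guarantees that $v_N$ is monotone, so \cref{ax:reasonable} applies to $v_N$ and yields $0 \le \phi_i(N;v_N) \le 0$, forcing $\phi_i(N;v_N) = 0$. I do not anticipate a genuine obstacle here: the entire content is the single observation that intersecting with $M$ erases any dependence on $i$, so the crux is purely the computation above rather than any delicate estimate. The only point worth flagging is that monotonicity of $v_M$ (hence of $v_N$) is precisely what licenses the application of reasonableness in the last step.
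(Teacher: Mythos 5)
Your proposal is correct and follows essentially the same route as the paper: the defining relation \(v_N(S) = v_M(S\cap M)\) with \(i\notin M\) makes every marginal contribution \(v_N(S\cup\{i\}) - v_N(S)\) vanish, so the reasonableness bounds collapse and force \(\phi_i(N;v_N)=0\). The paper states this in one line; your version merely spells out the computation \((S\cup\{i\})\cap M = S\) and correctly flags that monotonicity of \(v_N\) (inherited from \(v_M\) via the preceding extension lemma) is what licenses applying the reasonableness axiom.
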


\begin{proof}
  This is clear, as \[v(S\cup \{i\}) = v(S)\]
  by the definition of an extension, and by the definition of reasonableness.
\end{proof}

\subsubsection{Pairing behavior in the rows of a reasonable allocation}

In the following theorem, we establish a strong condition on the
reasonable allocations, specifically their matrix counterparts. To our main
result, this theorem serves as an integral component.

\begin{theorem}[Pairing of row elements]\label{thm:rowWisePairing}
  Given a player (row) \(i\)  of a reasonable allocation's matrix
  \(A\), the elements pair off in the following manner:
  \begin{equation*}
    A_{i,S} = -A_{i,S\cup\{i\}}
  \end{equation*}
  for sets \(S\) with \(S \cap \{i\} = \varnothing\).
\end{theorem}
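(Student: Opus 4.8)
The plan is to fix a player \(i\), set \(M = N\setminus\{i\}\), and exploit the preceding lemma stating that whenever \(v_N\) is the extension to \(N\) of a monotone characteristic function \(v_M\) on \(M\), reasonableness forces \(\phi_i(N;v_N)=0\). The underlying intuition is that an extension cannot distinguish a set \(S\subseteq M\) from \(S\cup\{i\}\) — player \(i\) contributes nothing — so the requirement that \(i\) receive zero will couple the two columns \(A_{i,S}\) and \(A_{i,S\cup\{i\}}\) together, which is exactly the pairing we want.

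Concretely, I would begin from the row expansion \(\phi_i(N;v)=\sum_{T\subseteq N}A_{i,T}\,v(T)\) and split the index set into those \(T\) not containing \(i\), written as \(S\subseteq M\), and those containing \(i\), written as \(S\cup\{i\}\) with \(S\subseteq M\). For an extension \(v_N\) of a monotone \(v_M\) one has \(v_N(S)=v_M(S\cap M)=v_M(S)\) and \(v_N(S\cup\{i\})=v_M\bigl((S\cup\{i\})\cap M\bigr)=v_M(S)\), since \(i\notin M\). Thus the two halves of the sum carry the identical weights \(v_M(S)\), and collecting terms while invoking the extension lemma gives
\[
  0=\phi_i(N;v_N)=\sum_{S\subseteq M}\bigl(A_{i,S}+A_{i,S\cup\{i\}}\bigr)\,v_M(S).
\]
Because the extension of any monotone function is monotone, this identity holds for \emph{every} monotone \(v_M\) on \(M\).

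The main obstacle is the final step: deducing that each coefficient \(A_{i,S}+A_{i,S\cup\{i\}}\) vanishes. This is not automatic, because the monotone characteristic functions form only a convex cone, not a linear subspace, so the identity above does not instantly annihilate the coefficients. To get past this I would exhibit a basis of \(\mathbb{R}^{2^{|M|}}\) consisting entirely of monotone functions, namely the threshold functions \(u_T\) given by \(u_T(R)=1\) when \(T\subseteq R\) and \(u_T(R)=0\) otherwise, for \(T\subseteq M\); each \(u_T\) is plainly monotone, and the family \(\{u_T:T\subseteq M\}\) has cardinality \(2^{|M|}\) and is linearly independent (by the usual triangularity/Möbius-inversion argument on the subset lattice), hence is a basis — these generalize the singleton construction already used in \cref{lem:maxplayergains}. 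Reading the displayed identity as a linear functional in \(v_M\) that vanishes on every basis element \(u_T\), linearity forces it to vanish on all of \(\mathbb{R}^{2^{|M|}}\), and evaluating on the coordinate indicators yields \(A_{i,S}+A_{i,S\cup\{i\}}=0\) for every \(S\subseteq M\), that is, \(A_{i,S}=-A_{i,S\cup\{i\}}\). Since the argument treats an arbitrary row \(i\) and, by the note following \cref{def:matrixEntries}, is independent of the column ordering, the pairing holds for every row.
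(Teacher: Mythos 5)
Your argument is correct, and it reaches the pairing by a genuinely different route than the paper's own proof. Both proofs rest on the same underlying fact, namely linearity plus the observation that reasonableness forces \(\phi_i(N;v)=0\) whenever every marginal contribution of player \(i\) vanishes, but they extract the matrix identity differently. The paper never forms a global identity over all monotone \(v_M\) and never needs a linear-independence argument: for each fixed \(S\) with \(i\notin S\) it builds two explicit games, the threshold game \(v^S_a\) (your \(u_S\), extended to \(N\)) and its pair truncation \(v^S_b\), which zeroes out exactly the places \(S\) and \(S\cup\{i\}\); both have all marginal contributions of \(i\) equal to zero, so \(\phi_i(N;v^S_a)=\phi_i(N;v^S_b)=0\), and the single subtraction \(A_i\cdot(v^S_a-v^S_b)=A_{i,S}+A_{i,S\cup\{i\}}=0\) yields the pairing directly, because \(v^S_a-v^S_b\) is precisely the indicator of the two columns in question. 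In effect the paper performs your M\"obius inversion by hand, one coordinate pair at a time, via a difference of two monotone games, whereas you prove one identity and invert it with the triangularity of the threshold family; your version is arguably cleaner and makes transparent exactly why the monotone cone is ``big enough'' (it contains a basis). What the paper's more local construction buys is bookkeeping of which test games are actually needed: for \(S\neq\varnothing\) its two games are superadditive, and this is what later allows the pairings to be recovered from superadditive games alone in \cref{prop:superaddPairing} (except \(A_{i,\varnothing}=-A_{i,\{i\}}\), which there requires \cref{ax:rowSum0}). Note in this connection that your basis element \(u_\varnothing\) is the all-ones game, which is monotone but not superadditive; this is fine under \cref{ax:reasonable} as stated, since it quantifies over all monotone \(v\), and it exactly parallels the paper's own use of \(v^{\varnothing}_a\) for that one pairing.
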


\begin{proof}
   To obtain this result, we consider a player \(i\). (Any other
   player's information can be obtained identically.)
  Construct the following superadditive \(v\), given \(S\) with
  \(S\cap \{i\} = \varnothing\)  
  \[
    v^S_a(T) =
    \begin{cases}
      1 & \text{if } S \subset T\\
      0 & \text{else}
    \end{cases}
  \]
  and
  \[
    v^S_b(T) =
    \begin{cases}
      1 & \text{if } S \subsetneq T\setminus\{i\}\\
      0 & \text{else}.
    \end{cases}
  \]
  Observe, \(v^S_a\) is superadditive trivially, as
  \[
    v^S_a(Q\cup R) \geq v^S_a(Q)+v^S_a(R)
  \]
  for \(Q\) and \(R\) with \(Q\cap R=\varnothing\). This can quickly
  be seen as \(S\subset Q\) or \(S\subset R\), but not both. So,
  at the very worst, \(1\geq 1+0\) or \(1\geq 0+1\).  

  Notice \(v^S_b\) is a pair truncation of \(v^S_a\), by construction, as
  it zeros out the \(S\) and \(S\cup\{i\}\) places precisely.
  Thus, \(v^S_b\)  is also superadditive. \(v^S_a\) also has the property that
  the marginal contribution of player \(i\) is 0, observe, by
  construction, for \(Q\) with \(Q\cap\{i\} = \varnothing\), 
  \[
    v^S_a(Q\cup\{i\}) -v^S_a(Q) =
    \begin{cases}
      1 - 1 = 0 & \text{if } S\subset Q\\
      0-0 = 0 & \text{if } S \not\subset Q
    \end{cases}.
  \]
  Recall, \(i\notin S\) by our initial choice of \(S\). So, to obtain
  the pairing for \(A_{i,S}\) and \(A_{i,S\cup\{i\}}\), we observe
  \begin{align*}
    \phi_i(N;v^S_a) = A_i \cdot v^S_a &= 0\\
    \phi_i(N;v^S_b) = A_i \cdot v^S_b &= 0.
  \end{align*}
  However, using this to our advantage, notice
  \begin{align*}
    0&= \phi_i(N;v^S_a) - \phi_i(N;v^S_b) \\
     &= A_i \cdot (v^S_a-v^S_b)\\
     &= A_{i,S} + A_{i,S\cup\{i\}}.
  \end{align*}
  Rearranging,
  \begin{equation}
    A_{i,S} = -A_{i,S\cup\{i\}},
  \end{equation}
  as we wished. Repeating this process for all
  \(S\subset N\setminus\{i\}\) gives us all the pairings we desire.
\end{proof}

\subsubsection{Bounds on the matrix elements}

\begin{lemma}\label{lem:orderSigns}
  Given a row \(i\) of a reasonable allocation's matrix \(A\), 
  for nonempty \(S\), with \(S\cap \{i\} = \varnothing\)
  \begin{equation*}
    0\leq A_{i,S\cup\{i\}} \leq 1
  \end{equation*}
  and 
  \begin{equation*}
    0\geq A_{i,S} \geq -1.
  \end{equation*}
  Namely, the non-negative entries fall in the columns associated with
  sets \(S\cup\{i\}\) and the associated non-positive entries fall in
  the columns associated with \(S\).
\end{lemma}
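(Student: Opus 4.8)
The plan is to reduce both inequalities to a single estimate on the ``with-\(i\)'' entry \(A_{i,S\cup\{i\}}\), leaning on the pairing already established in \cref{thm:rowWisePairing}. Since that theorem gives \(A_{i,S} = -A_{i,S\cup\{i\}}\), once I show \(0 \le A_{i,S\cup\{i\}} \le 1\) the companion bound \(0 \ge A_{i,S} \ge -1\) is immediate. So everything comes down to isolating \(A_{i,S\cup\{i\}}\) as the value \(\phi_i(N;v)\) of a single cleverly chosen monotone game and then invoking \cref{ax:reasonable}.

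First I would rewrite the payout to player \(i\) in ``marginal contribution form.'' Starting from \(\phi_i(N;v) = \sum_{T\subset N} A_{i,T} v(T)\) and grouping the column for each \(T\) with \(i\notin T\) together with the column for \(T\cup\{i\}\), the pairing \(A_{i,T} = -A_{i,T\cup\{i\}}\) yields
\[
  \phi_i(N;v) = \sum_{T:\, i\notin T} A_{i,T\cup\{i\}}\bigl(v(T\cup\{i\}) - v(T)\bigr),
\]
so that each entry \(A_{i,T\cup\{i\}}\) is exactly the weight attached to the marginal contribution of \(i\) to \(T\). The task then becomes to build a monotone game whose only nonzero marginal contribution for \(i\) occurs at \(T = S\).

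Next I would take the binary characteristic function \(v(T) = 1\) when \(T \supsetneq S\) and \(v(T) = 0\) otherwise; this is monotone because its support is upward closed. A short check shows that for \(i\notin T\) the difference \(v(T\cup\{i\}) - v(T)\) equals \(1\) precisely at \(T = S\): indeed \(v(S\cup\{i\}) = 1\) while \(v(S) = 0\), whereas for every proper superset \(T\supsetneq S\) both \(v(T)\) and \(v(T\cup\{i\})\) equal \(1\), and for \(T\not\supseteq S\) both vanish. Hence all marginal contributions of \(i\) lie in \(\{0,1\}\), the value \(1\) is attained at \(S\), and the value \(0\) is attained (using nonemptiness of \(S\), e.g.\ at \(T=\varnothing\), where \(v(\{i\})=0\)). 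Substituting this \(v\) into the marginal contribution form collapses the sum to the lone term \(\phi_i(N;v) = A_{i,S\cup\{i\}}\), while \cref{ax:reasonable} forces the minimum \(0\) and maximum \(1\) of the marginal contributions to bracket it, giving \(0 \le A_{i,S\cup\{i\}} \le 1\). The pairing then delivers \(-1 \le A_{i,S} \le 0\).

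The main obstacle I anticipate is engineering the test game so that exactly one marginal contribution survives: monotonicity must be preserved while every marginal difference except the one at \(S\) is annihilated, and it is the upward-closed ``proper-superset'' indicator that accomplishes precisely this. Confirming that it collapses the weighted sum to the single entry \(A_{i,S\cup\{i\}}\)---where \cref{thm:rowWisePairing} does the essential bookkeeping---is the crux; the reasonableness bounds then apply mechanically.
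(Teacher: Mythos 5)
Your proof is correct, and it hinges on the same test game as the paper's: the proper-superset indicator (the paper's \(v^S_c\)), whose marginal contributions for player \(i\) are all \(0\) except for a single \(1\) at \(T=S\), so that \cref{ax:reasonable} brackets the payout between \(0\) and \(1\). Where you genuinely differ is in how that payout is identified with the single entry \(A_{i,S\cup\{i\}}\). The paper introduces a second game \(v^S_b\) (a pair truncation that also zeros out the place \(S\cup\{i\}\)); since every marginal contribution of \(i\) for \(v^S_b\) vanishes, reasonableness squeezes \(\phi_i(N;v^S_b)=0\), and because \(v^S_c - v^S_b\) is the indicator of the single column \(S\cup\{i\}\), subtracting the two payouts isolates \(A_{i,S\cup\{i\}} = \phi_i(N;v^S_c)\). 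You instead invoke \cref{thm:rowWisePairing} up front to rewrite the entire row's action in marginal-contribution form and let the sum collapse---essentially the identity the paper only writes down later, inside \cref{lem:altReasonable}. Both routes are sound (your appeal to the pairing at \(T=\varnothing\) is covered by the statement of \cref{thm:rowWisePairing}, and your observation that nonemptiness of \(S\) guarantees the minimum \(0\) is attained is exactly the point that restricts the lemma to nonempty \(S\)). The trade-off: the paper's two-game comparison obtains \(0\le A_{i,S\cup\{i\}}\le 1\) without using the pairing theorem at all, reserving it solely to transfer the bound to \(A_{i,S}\), whereas your argument leans on the pairing twice; in exchange, you need only one test game, and your derivation makes transparent why \(A_{i,S\cup\{i\}}\) is precisely the weight attached to the marginal contribution of \(i\) to \(S\).
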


\begin{proof}
  Using techniques seen in the proof of \cref{thm:rowWisePairing}, we
  can make short work of this statement.
  Consider
  \[
    v^S_b(T) =
    \begin{cases}
      1 & \text{if } S \subsetneq T\setminus\{i\}\\
      0 & \text{else}
    \end{cases}
  \]
  as seen previously, and
  \[
    v^S_c(T) =
    \begin{cases}
      1 & \text{if } S \subsetneq T\\
      0 & \text{else}.
    \end{cases}
  \]
  Note, \(v^S_c\) is a truncation (not a pairwise truncation) of the \(v^S_a\) from the
  superadditive proof of \cref{thm:rowWisePairing}. Hence, it is
  superadditive. The only difference between \(v^S_b\) and \(v^S_c\) is
  \(v^S_b(S\cup\{i\}) = 0\), while \(v^S_c(S\cup\{i\}) = 1\). Recall, with
  \(v^S_b\), the marginal contribution of player \(i\) is always
  0. Note, also, for \(v^S_c\), the marginal contribution of player
  \(i\) falls between 0 and 1, as the only difference from \(v^S_b\)
  occurs at the place associated with \(S\), which results in
  \[
    v^S_c(S\cup\{i\}) -v^S_c(S) = 1-0 = 1
  \]
  rather than 0.
  Now, to use this to our advantage, we note
  \begin{align*}
    \phi_i(N;v^S_b) = A_i \cdot v^S_b &= 0.
  \end{align*}
  and
  \begin{equation*}
    0\leq \phi_i(N;v^S_c) = A_i \cdot v^S_c \leq 1
  \end{equation*}
  via reasonableness. However, there is little difference
  between \(A_i \cdot v^S_b\) and \(A_i \cdot v^S_c\), notice
  \begin{equation}\label{eq:collapsedRow}
    A_i \cdot v^S_c - A_i \cdot v^S_b = A_{i,S\cup\{i\}}.
  \end{equation}
  So, we may conclude
  \begin{align*}
    0\leq\phi_i(N;v^S_c) &- \phi_i(N;v^S_b) \leq 1\\
    0\leq A_i \cdot v^S_c &- A_i \cdot v^S_b \leq 1\\
  \end{align*}
  and utilizing \cref{eq:collapsedRow},
  \begin{equation}
    0\leq A_{i,S\cup\{i\}} \leq 1
  \end{equation}
  as we wished to show. By the pairings, we obtain
  \begin{align*}
    0\leq -A_{i,S} \leq 1,
  \end{align*}
  i.e.
  \begin{equation}
    0\geq A_{i,S} \geq -1
  \end{equation}
  again, as we wished to show.
\end{proof}

\begin{note}
  Via this lemma, we have the signs of nearly all of the matrix
  entries. The only missing are, for row \(i\) \(A_{i,\varnothing}\)
  and \(A_{i,\{i\}}\). This is covered by \cref{lem:lastcolumnBd}
  below.
\end{note}

\begin{lemma}\label{lem:lastcolumnBd}\label{lem:column1upperBd}
  Given a reasonable, efficient allocation matrix \(A\), we have
  \(1\geq A_{i,\{i\}} \geq 0\) for all players \(i\), and hence, by the pairing
  \(-1\leq A_{i,\varnothing} \leq 0\).
\end{lemma}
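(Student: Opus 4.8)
The plan is to reduce everything to a bound on the single entry \(A_{i,\{i\}}\): once we know \(0 \le A_{i,\{i\}} \le 1\), the companion bound \(-1 \le A_{i,\varnothing} \le 0\) is immediate from the pairing \(A_{i,\varnothing} = -A_{i,\{i\}}\) supplied by \cref{thm:rowWisePairing} applied with the empty set \(S = \varnothing\). So the real work is to control \(A_{i,\{i\}}\), and I would do this by mimicking the construction used in the proof of \cref{lem:orderSigns}, specialized to the degenerate case \(S = \varnothing\) that \cref{lem:orderSigns} explicitly excluded.

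Concretely, I would introduce two monotone games that agree everywhere except on the coordinate \(\{i\}\):
\[
  v^{\varnothing}_c(T) = \begin{cases} 1 & T \neq \varnothing \\ 0 & T = \varnothing \end{cases}
  \qquad\text{and}\qquad
  v^{\varnothing}_b(T) = \begin{cases} 1 & T \setminus \{i\} \neq \varnothing \\ 0 & \text{else.} \end{cases}
\]
Both are monotone by inspection, and they differ only at \(T = \{i\}\), where \(v^{\varnothing}_c(\{i\}) = 1\) while \(v^{\varnothing}_b(\{i\}) = 0\). Hence \(A_i \cdot v^{\varnothing}_c - A_i \cdot v^{\varnothing}_b = A_{i,\{i\}}\), exactly the isolation of a single entry that made \cref{lem:orderSigns} work.

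Then I would read off the marginal contributions of player \(i\) and feed them into reasonableness. For \(v^{\varnothing}_b\), adjoining \(i\) to any \(i\)-free coalition \(Q\) never changes the value (both sides detect only whether \(Q\) contains a player other than \(i\)), so every marginal contribution of \(i\) is \(0\) and reasonableness forces \(\phi_i(N;v^{\varnothing}_b) = 0\). For \(v^{\varnothing}_c\), the marginal contribution of \(i\) is \(1\) when \(Q = \varnothing\) and \(0\) otherwise, so reasonableness gives \(0 \le \phi_i(N;v^{\varnothing}_c) \le 1\). Subtracting yields \(A_{i,\{i\}} = \phi_i(N;v^{\varnothing}_c) - \phi_i(N;v^{\varnothing}_b) \in [0,1]\), as desired, and the pairing then finishes the statement for \(A_{i,\varnothing}\).

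The main obstacle is the bookkeeping at the empty coalition: unlike the nonempty-\(S\) case of \cref{lem:orderSigns}, here the strict and non-strict containment distinctions collapse, and one must check carefully that \(v^{\varnothing}_c\) and \(v^{\varnothing}_b\) truly differ at \(\{i\}\) and nowhere else, and that neither breaks monotonicity (note these games are \emph{not} superadditive, but reasonableness demands only monotone \(v\), so that is harmless). A secondary point worth verifying is that \cref{thm:rowWisePairing} is genuinely valid at \(S = \varnothing\); if one prefers to sidestep it, the identity \(A_{i,\varnothing} = -A_{i,\{i\}}\) can be re-derived directly by comparing \(v^{\varnothing}_c\) against the constant game \(T \mapsto 1\), whose player-\(i\) marginals all vanish. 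Finally, for readers who want to see where the efficiency hypothesis enters, the lower bound \(A_{i,\{i\}} \ge 0\) also drops straight out of \cref{lem:rowwisesum}: the column sum over \(\{i\}\) is \(0\), while \cref{lem:orderSigns} forces \(A_{j,\{i\}} \le 0\) for every \(j \neq i\), so \(A_{i,\{i\}} = -\sum_{j \neq i} A_{j,\{i\}} \ge 0\).
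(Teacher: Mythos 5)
Your proof is correct, but it takes a genuinely different route from the paper's. The paper proves this lemma inside its efficiency machinery: it recalls that \(A_{j,\{i\}}\le 0\) for every \(j\neq i\) (pairings plus \cref{lem:orderSigns}), then runs a chain-following contradiction argument (the same informal argument that later yields \cref{cor:colsum1}) to show \(\sum_{j\neq i}A_{j,\{i\}}\ge -1\), and finally solves the column-\(\{i\}\) sum equation \(\sum_{j\neq i}A_{j,\{i\}}+A_{i,\{i\}}=0\) from \cref{lem:rowwisesum} to get \(0\le A_{i,\{i\}}\le 1\). You instead isolate the single entry \(A_{i,\{i\}}\) with two explicit monotone test games --- exactly the \(v^S_b,v^S_c\) construction of \cref{lem:orderSigns} pushed into the excluded case \(S=\varnothing\) --- and your verification is sound: the two games differ only at \(\{i\}\), player \(i\)'s marginal contributions pin \(\phi_i(N;v^{\varnothing}_b)=0\) and \(0\le\phi_i(N;v^{\varnothing}_c)\le 1\), and you correctly observe that the loss of superadditivity is harmless since reasonableness quantifies over monotone games. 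Your caution about \cref{thm:rowWisePairing} at \(S=\varnothing\) is also well placed (its stated proof leans on a superadditivity claim that degenerates there), and your constant-game workaround repairs it. What your route buys is a strictly stronger statement: both bounds follow from reasonableness alone, with efficiency never invoked (your closing remark shows efficiency gives an alternative proof of the lower bound only). What the paper's route buys is economy within its own development --- it reuses the column-sum and chain apparatus it needs anyway --- though its chain argument is considerably less rigorous than your computation. Notably, your technique is essentially the one the paper itself deploys later, in the proof of the converse to \cref{lem:altReasonable}, where the vectors \({[1,1,\ldots,1]}^t\) and \({[0,1,\ldots,1]}^t\) (the latter being your \(v^{\varnothing}_c\)) recover these same signs without efficiency.
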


\begin{proof}
  We can utilize the information we have gained thus far to infer this
  information. First, recall that \(-1\leq A_{j,\{i\}} \leq 0\) for all
  \(j\neq i\), via the pairings, as by the previous
  \cref{lem:orderSigns}, \(1\geq A_{j,\{i,j\}} \geq 0\). More
  specifically,
  \[
    0\geq \sum_{j\neq i} A_{j,\{i\}} \geq -1.
  \]
  This is indeed the case, as if not, we shall reach a contradiction.
  Let us assume that
  \[
    \sum_{j\neq i} A_{j,\{i\}} \leq -1.
  \]
  Via the pairings, we can follow each of the nonzero elements in the
  sum up the chain to \(A_{j,\{i,j\}}\) and there must be a paired
  positive value there, for each \(j\). Their sum, even though they
  might not remain in the same column remains more than 1. In columns
  \(\{i,j\}\) for each of these \(j\), there must be a negative
  contribution to make the column-wise sum 0. (It is possible there
  might be a splitting between two or more rows. Keep in mind no entry
  can exceed 1 in absolute value by \cref{lem:orderSigns} at this
  point, if we have any entry greater than \(1\) in absolute value, we
  have a contradiction. The sum, of course must remain 0.) Each of
  those entries pairs off with a superset in the same row. We can
  continue this process until we reach the set of all players
  \(N\). In this column, we have the tail of every (possibly split up)
  chain we traveled along, and due to the pairing of the elements, the
  overall sum is greater than or equal to \(1\) (as each chain must
  carry at the very least all of its value along, even if it splits or
  combines along the way due to the pairings).  This contradicts
  reasonableness, and thus we have that
  \(\sum_{j\neq i} A_{j,\{i\}} \geq -1\) as we wished.
  Noting this, we see
  that it is imperative that \(A_{i,\{i\}}\) must be greater than or
  equal to \(0\) and less than or equal to \(1\), as, via efficiency,
  the sum of all elements in the column \(\{i\}\)
  is \(0\) by \cref{lem:rowwisesum}. More explicitly,
  \[
    \sum_{j\neq i} A_{i,\{j\}} + A_{i,\{i\}} = 0
  \]
  so
  \[
    A_{i,\{i\}} = -\left(\sum_{j\neq i} A_{i,\{j\}} \right)
  \]
  and
  \[
    0 \leq A_{i,\{i\}} \leq 1.
  \]
  Clearly, as a result,
  \(-1 \leq A_{i,\varnothing} \leq 0\) via the pairings.
\end{proof}

So, we now have the signs for all of the row elements, and we can
summarize our results in the following convenient way.

\begin{theorem}\label{thm:signRowElt}
  Given a reasonable, efficent allocation with matrix \(A\), for player \(i\),
  \begin{alignat*}{10}
    1 &\geq~A_{i,S}~\geq 0 &\text{ if } &i\in S \\
    -1 &\leq~A_{i,S}~\leq 0 &\text{ if } &i\notin S
  \end{alignat*}
\end{theorem}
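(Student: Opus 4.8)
The plan is to recognize that this theorem is a consolidation of the bounds already established in \cref{lem:orderSigns} and \cref{lem:lastcolumnBd}, so the only work remaining is to verify that between them these two lemmas cover every column \(S \subseteq N\) and that the signs line up with the two-case statement. First I would fix a player \(i\) and partition the columns of the row \(A_i\) according to their relationship to \(i\): every \(S \subseteq N\) is exactly one of (a) the empty set, (b) the singleton \(\{i\}\), (c) a nonempty set with \(i \notin S\), or (d) a set of the form \(S' \cup \{i\}\) with \(S' = S \setminus \{i\}\) nonempty. These four cases are plainly exhaustive and mutually exclusive.

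For the two generic families (c) and (d), I would invoke \cref{lem:orderSigns} directly: it gives \(-1 \leq A_{i,S} \leq 0\) for nonempty \(S\) with \(i \notin S\), matching the \(i \notin S\) branch, and \(0 \leq A_{i,S'\cup\{i\}} \leq 1\) for nonempty \(S'\), matching the \(i \in S\) branch whenever \(S \supsetneq \{i\}\). For the two boundary columns (a) and (b), which \cref{lem:orderSigns} explicitly excludes since it assumes \(S\) nonempty, I would appeal to \cref{lem:lastcolumnBd}: it supplies \(0 \leq A_{i,\{i\}} \leq 1\) and \(-1 \leq A_{i,\varnothing} \leq 0\), covering the remaining \(i \in S\) and \(i \notin S\) instances respectively. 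Assembling the four cases yields exactly the two-line display in the statement, and since the argument makes no reference to the ordering of the columns it holds uniformly for every row \(i\).

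There is essentially no analytic obstacle here, because the substantive work — constructing the superadditive test functions, using reasonableness to bound the payouts, and propagating bounds along the pairing \(A_{i,S} = -A_{i,S\cup\{i\}}\) from \cref{thm:rowWisePairing} — has already been carried out in the preceding lemmas. The only point demanding genuine care is the bookkeeping: confirming that \cref{lem:orderSigns} was stated only for nonempty \(S\), which is precisely why the separate boundary lemma \cref{lem:lastcolumnBd} is needed, and checking that the four-way case split neither double-counts nor omits a column. I would therefore present this theorem as a brief case-based corollary of the two lemmas rather than reprove anything from first principles.
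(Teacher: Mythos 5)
Your proposal is correct and matches the paper's approach exactly: the paper's entire proof is a citation of \cref{lem:orderSigns,lem:lastcolumnBd}, which is precisely the consolidation you describe. Your version merely spells out the four-way case split (with the boundary columns \(\varnothing\) and \(\{i\}\) handled by \cref{lem:lastcolumnBd}) that the paper leaves implicit.
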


\begin{proof}
  Refer to \cref{lem:orderSigns,lem:lastcolumnBd}.
\end{proof}

We utilize some of the ideas found in the proof of \cref{lem:column1upperBd} to get
more general results, along with
\cref{thm:signRowElt,thm:rowWisePairing}. So, as a consequence of
these results, we can find another important fact for our reasonable, efficient allocations.

\begin{lemma}\label{cor:colsum1}
  Each column of a reasonable, efficient allocation matrix can contain no
  more than a sum of \(-1\) of negative elements and a sum of \(1\) in
  positive elements.
\end{lemma}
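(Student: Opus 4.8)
The plan is to recast the matrix as a nonnegative flow on the Boolean lattice of subsets of $N$ and then read off the two inequalities as statements about flow through a node. Concretely, for each row $i$ and each $S$ with $i\notin S$, Theorem~\ref{thm:rowWisePairing} gives $A_{i,S\cup\{i\}}=-A_{i,S}$, and Theorem~\ref{thm:signRowElt} says this common magnitude $a_{i,S}:=A_{i,S\cup\{i\}}=-A_{i,S}$ is nonnegative. I would interpret $a_{i,S}$ as the amount of flow sent along the lattice edge from $S$ to $S\cup\{i\}$ (the edge adjoining player $i$). In this language the column-sum identity of Lemma~\ref{lem:rowwisesum} becomes a conservation law: the net flow out of a node $T$ equals minus the $T$-th column sum, so $\varnothing$ is a source of strength $1$, $N$ is a sink of strength $1$, and flow is conserved at every interior node.

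The payoff of this reformulation is that the two quantities in the statement are exactly the in- and out-flows at a node. For the column indexed by $S$, the positive entries are precisely the $A_{i,S}$ with $i\in S$, and their sum $\sum_{i\in S}A_{i,S}=\sum_{i\in S}a_{i,S\setminus\{i\}}$ is the total flow into $S$; the negative entries are the $A_{i,S}$ with $i\notin S$, and their sum $\sum_{i\notin S}A_{i,S}=-\sum_{i\notin S}a_{i,S}$ is minus the total flow out of $S$. Thus the lemma is equivalent to the assertion that at most one unit of flow passes into, and out of, any single node.

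I would then close the argument with the structural observation that every lattice edge strictly increases cardinality, so the flow graph is acyclic. A nonnegative flow of value $1$ on a directed acyclic graph decomposes into $\varnothing$-to-$N$ paths whose weights sum to $1$ (there are no directed cycles to absorb extra mass), and the flow through any fixed node is the sum of the weights of the paths that meet it, hence at most $1$. This yields $\sum_{i\in S}A_{i,S}\le 1$ and $\sum_{i\notin S}A_{i,S}\ge -1$ simultaneously. The boundary cases are immediate: when $S=\varnothing$ the positive sum is empty and the negative sum is the column sum $-1$, and when $S=N$ the negative sum is empty and the positive sum is the column sum $1$.

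The step I expect to be the crux is justifying that the flow through a single node cannot exceed the total flow value; this is exactly where acyclicity is essential, and it is the rigorous form of the heuristic in the proof of Lemma~\ref{lem:column1upperBd} about chains ``carrying their value along even if they split or combine.'' If one prefers not to invoke flow decomposition as a black box, the same conclusion follows by the inductive version of that heuristic: assume one of the two inequalities fails at $S$, push the excess mass up the lattice one rank at a time using conservation at interior nodes, and conclude that the positive mass eventually accumulating in column $N$ would exceed $1$, contradicting Lemma~\ref{lem:rowwisesum}.
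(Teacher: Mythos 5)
Your proof is correct, and while it rests on the same underlying mechanism as the paper's --- mass propagating up the Boolean lattice via the row pairings and the zero column sums --- your formalization is a genuinely different and tighter route. The paper argues by contradiction: it supposes some column has negative entries summing below \(-1\), follows each such entry to its positive partner via \cref{thm:rowWisePairing}, uses \cref{lem:rowwisesum} to find compensating negative entries in the higher columns, and iterates until the excess reaches the column for \(N\), where it is said to contradict reasonableness; positive overflow is handled symmetrically. The fragile step there is the parenthetical bookkeeping about chains that may ``split or combine'' --- the paper asserts, but never actually proves, that the traveling mass is conserved under such splitting and recombination. Your recasting supplies exactly that missing argument: \cref{thm:rowWisePairing} and \cref{thm:signRowElt} turn the matrix into a nonnegative flow \(a_{i,S}=A_{i,S\cup\{i\}}=-A_{i,S}\) on the covering edges of the lattice, \cref{lem:rowwisesum} becomes conservation with source \(\varnothing\) and sink \(N\) of strength \(1\), and flow decomposition on a finite acyclic network (paths only, no cycles; each monotone path meets a given node at most once) bounds the in-flow and out-flow at every node by the total value \(1\) --- which is precisely the two column bounds, obtained directly rather than by contradiction. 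What the paper's route buys is elementarity, never leaving matrix manipulations; what yours buys is that the ``carrying value along'' heuristic becomes a theorem, and the same argument would also firm up the nearly identical chain-chasing in the proof of \cref{lem:column1upperBd}. Two small points if you write this up: state the flow-decomposition step with its short proof (repeatedly extract a source-to-sink path of positive flow and subtract its minimum edge value; acyclicity and conservation guarantee the path exists and the process terminates), since the paper never proves such a statement; and note that the sign information you import from \cref{thm:signRowElt} uses efficiency (through \cref{lem:lastcolumnBd}), not reasonableness alone --- harmless here, since the lemma assumes both.
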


\begin{proof}
  Suppose the contrary, that there is a sum of more than -1 of the
  negative elements in one column. Our aim is to show this is not
  possible. Let us assume that this overflow of negatives occurs in
  the column \(\{i,j\}\), in rows \(l, \ldots\), and \(m\), say.
  Now, via the
  pairing, we can follow each of these chains up, there must be a
  paired positive value in \(A_{l,\{i,j,l\}}, \ldots\), and \(A_{m,\{i,j,m\}}\)
  respectively, and their sum remains more than 1. In columns
  \(\{i,j,l\}, \ldots\), and \(\{i,j,m\}\), there must be a negative
  contribution to make the column-wise sum 0. (It is possible there
  might be a splitting between two or more rows. Keep in mind no entry
  can exceed 1 in absolute value, if we have any entry greater than
  \(1\) in absolute value, we have a contradiction. The sum, of course
  must remain 0.)
  Each of those entries pairs off with a superset in
  the same row. We can continue this process until we reach the set of
  all players \(N\). In this column, we have the tail of every
  (possibly split up) chain we traveled along, and due to the pairing
  of the elements, the overall sum is greater than or equal to \(1\)
  (as each chain must carry all of its value along). This contradicts
  reasonableness.

  The case of two or more positive values with sum greater than \(1\)
  is handled nearly identically, one just starts the chain at this
  point, noting the sum of the column must be 0, and thus, there must
  be elements summing to greater than \(-1\) in the column to
  compensate.

  Therefore, we have what we set out to show, each column can contain
  no more than a sum of \(-1\) of negative elements and a sum of \(1\)
  in positive elements.
\end{proof}

\begin{lemma}\label{thm:partialRowSum1}
  For a reasonable allocation \(\phi\) with matrix  \(A\), the
  partial row-wise sum satisfies the equality
  \[
    \sum_{S: i\notin S} A_{i,S \cup \{i\}} = 1.
  \]
\end{lemma}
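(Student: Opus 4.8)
The plan is to evaluate the allocation on the one test game that makes the row sum collapse to a single number, namely the binary characteristic function $v_i$ constructed for player $i$ in \cref{lem:maxplayergains}. Recall that $v_i(S) = 1$ precisely when $i \in S$, and $v_i(S) = 0$ otherwise. First I would write out $\phi_i(N;v_i)$ using the matrix identification $\phi_i(N;v) = A_i \cdot v = \sum_{S \subseteq N} A_{i,S}\, v(S)$. Since $v_i$ vanishes on every column whose indexing set omits $i$ and equals $1$ on every column whose set contains $i$, this inner product reduces immediately to $\phi_i(N;v_i) = \sum_{S: i \in S} A_{i,S}$, that is, the sum of the entries in row $i$ lying in the columns indexed by sets containing $i$.

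The second step is a purely combinatorial reindexing. The subsets of $N$ containing $i$ are in bijection with the subsets not containing $i$ via $S \mapsto S \cup \{i\}$, with inverse $U \mapsto U \setminus \{i\}$. Applying this bijection to the sum above gives
\[
  \sum_{S: i \in S} A_{i,S} = \sum_{S: i \notin S} A_{i, S \cup \{i\}},
\]
whose right-hand side is exactly the quantity in the statement.

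Finally I would invoke the consequence of reasonableness recorded in the note following \cref{lem:maxplayergains}: because $v_i$ was built so that the minimum and maximum marginal contributions of player $i$ both equal $1$, the reasonableness bounds of \cref{ax:reasonable} force $\phi_i(N;v_i) = 1$. Chaining this equality with the two displays above yields $\sum_{S: i \notin S} A_{i, S \cup \{i\}} = 1$, as desired.

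There is no real obstacle here; the only points requiring care are the bookkeeping in the reindexing step and making explicit that $v_i$ is a legitimate monotone test function (which is precisely what \cref{lem:maxplayergains} guarantees), so that reasonableness may be applied to it. It is worth remarking that the pairing identity of \cref{thm:rowWisePairing} furnishes an alternative route, since it lets one rewrite each $A_{i,S\cup\{i\}}$ as $-A_{i,S}$; but this detour is unnecessary, as evaluating on $v_i$ delivers the value $1$ directly.
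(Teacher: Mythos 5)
Your proposal is correct and is essentially the paper's own argument: evaluate the allocation on the test game \(v_i\) from \cref{lem:maxplayergains} (the binary game with \(v_i(S)=1\) iff \(i\in S\)), note that reasonableness squeezes \(\phi_i(N;v_i)=1\), and identify \(A_i\cdot v_i\) with the partial row sum \(\sum_{S:\,i\notin S} A_{i,S\cup\{i\}}\). The only difference is that you spell out the reindexing bijection \(S\mapsto S\cup\{i\}\) explicitly, which the paper leaves implicit.
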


\begin{proof}
  Recall, from \cref{lem:maxplayergains}, we know there exists a
  \(v_i\) for which 
  \[
    \phi_i(N,v_i) = 1
  \]
  by the squeezing of the reasonableness condition.
  This \(v_i\) is precisely the one with a \(1\) in all places with
  \(i\in T\). 
  Now, noting that
  \[
    \phi_i(N,v_i) = A_i\cdot v_i = \sum_{S: i\notin S} A_{i,S \cup \{i\}},
  \]
  we obtain the result we desire.
\end{proof}

\begin{corollary}[to \cref{thm:partialRowSum1}]
  For a reasonable allocation \(\phi\) with matrix \(A\), the
  partial row-wise sum satisfies the equality
  \[
    \sum_{S: i\notin S} A_{i,S} = -1.
  \]
\end{corollary}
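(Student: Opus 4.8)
The plan is to obtain the equality immediately by pairing the two halves of row \(i\) against each other, combining the partial sum already computed in \cref{thm:partialRowSum1} with the pairing identity established in \cref{thm:rowWisePairing}. The corollary is essentially a one-line consequence of these two facts, so the work is in making sure the index sets line up.

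First I would recall that \cref{thm:partialRowSum1} gives \(\sum_{S: i\notin S} A_{i,S\cup\{i\}} = 1\), where the sum ranges over all subsets \(S\subset N\) with \(i\notin S\). Next, \cref{thm:rowWisePairing} supplies the termwise identity \(A_{i,S} = -A_{i,S\cup\{i\}}\) for every such \(S\); note that the hypothesis \(S\cap\{i\}=\varnothing\) of that theorem is exactly the condition \(i\notin S\), and it includes the case \(S=\varnothing\), so no column is omitted. Substituting the pairing into the sum we wish to evaluate then yields
\[
  \sum_{S: i\notin S} A_{i,S}
  = \sum_{S: i\notin S} \bigl(-A_{i,S\cup\{i\}}\bigr)
  = -\sum_{S: i\notin S} A_{i,S\cup\{i\}}
  = -1,
\]
which is precisely the claimed equality.

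I expect no genuine obstacle here; the entire content is the substitution above. The only point deserving a moment's care is confirming that both sums are indexed over the identical family \(\{S : i\notin S\}\), so that the pairing applies term by term with no reindexing, and that the empty set is legitimately among these \(S\) so that the entry \(A_{i,\varnothing}\) is paired with \(A_{i,\{i\}}\). Both facts are immediate from the cited statements, and the result follows.
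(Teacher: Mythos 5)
Your proof is correct and matches the derivation the paper intends: the paper states this as an unproved corollary of \cref{thm:partialRowSum1}, and the argument is precisely your termwise substitution of the pairing \(A_{i,S}=-A_{i,S\cup\{i\}}\) from \cref{thm:rowWisePairing} into that lemma's sum. Your care about the index sets is also warranted and correctly resolved, since in the monotone setting the pairing theorem applies to every \(S\) with \(S\cap\{i\}=\varnothing\), including \(S=\varnothing\).
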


\begin{remark}
  For a reasonable, efficient allocation with matrix \(A\), the sum over all
  players \(i\) of the elements \(A_{i,S} \geq 0\) with cardinality of
  \(S\) a fixed integer between 0 and \(|N|\) is
  \(\sum_i A_{i,S} = 1\).\footnote{This is similair to the
    implicit result alluded to in the proof of~\cite[Theorem
    13]{Weber78}.}
\end{remark}

\subsection{Extreme points, and the reasonable efficient allocations}

In the journey to prove our results, we find the following lemma
integral to our arguments as well.

\begin{lemma}\label{lem:altReasonable}
  For an allocation \(\phi\) with matrix \(A\), if for all
  players \(i\),
  \begin{equation}\label{eq:signElts}
    A_{i,S} \text{ is }
    \begin{cases}
      \geq 0 \text{ if } i\in S \\
      \leq 0 \text{ if } i\notin S
    \end{cases}.
  \end{equation}
  and for \(T\) with \(T\cap \{i\} = \varnothing\) 
  \begin{equation}\label{eq:pairningElts}
    A_{i,T} = -A_{i,T\cup\{i\}}
  \end{equation}
  and
  \begin{equation}\label{eq:partialRowSumElts}
    \sum_{T} A_{i,T\cup \{i\}}=1,
  \end{equation}
  then the allocation \(\phi\) is reasonable.
\end{lemma}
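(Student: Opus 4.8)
The plan is to prove reasonableness by showing, directly, that each coordinate $\phi_i(N;v) = A_i\cdot v$ is a \emph{convex combination} of the marginal contributions $v(S\cup\{i\}) - v(S)$ taken over all $S$ with $i\notin S$. Since any convex combination of real numbers lies between their minimum and maximum, this immediately produces the two-sided bound \eqref{def:reasonablebounds} required by \cref{ax:reasonable}. Thus the three hypotheses \eqref{eq:signElts}, \eqref{eq:pairningElts}, and \eqref{eq:partialRowSumElts} are exactly what is needed to recognize this convex-combination structure.

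First I would expand $\phi_i(N;v) = \sum_{T\subset N} A_{i,T}\,v(T)$ and split the sum according to whether $i\in T$. The map $T\mapsto T\cup\{i\}$ is a bijection from $\{T : i\notin T\}$ onto $\{T : i\in T\}$, so the whole sum regroups into pairs indexed by the sets $T$ with $i\notin T$:
\[
  \phi_i(N;v) = \sum_{T: i\notin T}\Big(A_{i,T}\,v(T) + A_{i,T\cup\{i\}}\,v(T\cup\{i\})\Big).
\]
Applying the pairing hypothesis \eqref{eq:pairningElts}, namely $A_{i,T} = -A_{i,T\cup\{i\}}$, collapses each pair, yielding
\[
  \phi_i(N;v) = \sum_{T: i\notin T} A_{i,T\cup\{i\}}\big(v(T\cup\{i\}) - v(T)\big).
\]

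Next I would identify the coefficients $c_T := A_{i,T\cup\{i\}}$ as the weights of a convex combination. By the sign hypothesis \eqref{eq:signElts}, since $i\in T\cup\{i\}$ we have $c_T\geq 0$ for every $T$, and by \eqref{eq:partialRowSumElts} these weights satisfy $\sum_{T: i\notin T} c_T = 1$. Hence $\phi_i(N;v)$ is a genuine convex combination of the values $v(T\cup\{i\}) - v(T)$, and therefore it lies between the least and greatest of them, which is precisely \eqref{def:reasonablebounds}. Carrying this out for every player $i$ establishes that $\phi$ is reasonable.

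I do not expect any genuine obstacle here; the only care needed is in the bookkeeping of the regrouping step and in checking that the index set $\{T : i\notin T\}$ really ranges over all marginal contributions appearing in the min and max (in particular the endpoints $T=\varnothing$ and $T=N\setminus\{i\}$). It is worth noting that monotonicity of $v$ is never invoked — the convex-combination argument delivers the bound for \emph{every} characteristic function — so the stated hypotheses are in fact slightly stronger than what is strictly required to recover \cref{ax:reasonable} as written.
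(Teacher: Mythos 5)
Your proposal is correct and is essentially the same argument as the paper's own proof: both regroup the row sum via the pairing \eqref{eq:pairningElts} into $\sum_{T:\,i\notin T} A_{i,T\cup\{i\}}\left(v(T\cup\{i\})-v(T)\right)$, then use \eqref{eq:signElts} and \eqref{eq:partialRowSumElts} to recognize a convex combination of marginal contributions, which must lie between their minimum and maximum. Your write-up merely makes the bijection/regrouping bookkeeping more explicit, and your closing observation that monotonicity of $v$ is never needed is consistent with the paper's argument as well.
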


\begin{proof}
  To begin, by \cref{eq:partialRowSumElts,eq:pairningElts}, we have
  that \(\sum_{T} A_{i,T}=-1\), and all elements of the matrix are
  determined. To check for reasonableness,
  note 
  \begin{align*}
    \phi_i(N;v) & = A_i \cdot v\\
                & = \sum_S A_{i,S} \cdot v(S)\\
                & =  \sum_{T} A_{i,T\cup \{i\}}\left(v(T\cup \{i\}) - v(T)\right)
  \end{align*}
  with \(T\cap \{i\} = \varnothing\) by our map and
  \cref{eq:pairningElts}. By \cref{eq:partialRowSumElts,eq:signElts}, we note that
  \(\phi_i(N;v)\) is a generalized linear combination of the
  marginal contributions of each player, as all the elements in the
  sum are greater than or equal to 0 and sum to 1. 
  Trivially, a generalized convex
  combination lies between the
  \[
    \min_{T: i\notin T}\left\{ v\left(T \cup\{i\}\right) -v(T)\right\}
    \text{ and } \max_{T: i\notin T}\left\{ v\left(T \cup\{i\}\right)
      -v(T)\right\}.
  \]
  Thus, the allocation is reasonable, as required.
\end{proof}

\begin{lemma}
  The converse to \cref{lem:altReasonable} is also true.
\end{lemma}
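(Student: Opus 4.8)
The plan is to read the claim as the exact reverse implication of \cref{lem:altReasonable}: assuming only that $\phi$ (with matrix $A$) is reasonable, I want to recover the three defining conditions \cref{eq:signElts,eq:pairningElts,eq:partialRowSumElts}. The pleasant feature is that almost all of the work has already been done, so the proof is mostly a matter of assembling earlier results and then plugging one small gap.

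First I would dispatch \cref{eq:pairningElts} and \cref{eq:partialRowSumElts} outright. The pairing $A_{i,T} = -A_{i,T\cup\{i\}}$ for $T$ with $i \notin T$ is exactly the content of \cref{thm:rowWisePairing}, and the partial row sum $\sum_T A_{i,T\cup\{i\}} = 1$ is exactly \cref{thm:partialRowSum1}; both were proved for reasonable allocations with no appeal to efficiency, so they transfer verbatim. For the sign condition \cref{eq:signElts}, \cref{lem:orderSigns} already gives $A_{i,S\cup\{i\}} \ge 0$ and $A_{i,S} \le 0$ for every \emph{nonempty} $S$ with $i \notin S$, which covers every column except the pair $S = \{i\}$ (where I need $A_{i,\{i\}} \ge 0$) and $S = \varnothing$ (where I need $A_{i,\varnothing} \le 0$).

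The one genuine obstacle is therefore that single corner, because \cref{lem:orderSigns} deliberately excludes the empty set and the earlier treatment in \cref{lem:lastcolumnBd} closed it using efficiency, which is not available here. I would instead pin it down directly from reasonableness with an explicit monotone test function: take $v$ with $v(\varnothing)=0$ and $v(S)=M$ for every nonempty $S$, for some fixed $M>0$. This $v$ is monotone, and the marginal contributions of $i$ are $v(\{i\})-v(\varnothing)=M$ at $T=\varnothing$ and $v(T\cup\{i\})-v(T)=0$ at every nonempty $T$, so the reasonableness bounds read $0 \le A_i\cdot v \le M$. Expanding $A_i\cdot v$ via the pairing, exactly as in the proof of \cref{lem:altReasonable}, collapses the sum to $A_{i,\{i\}}\,M$, whence $0 \le A_{i,\{i\}} \le 1$; the companion bound $-1 \le A_{i,\varnothing} \le 0$ then follows from \cref{eq:pairningElts}. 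This completes the sign condition for all $S$ and, together with the two citations above, establishes all three hypotheses of \cref{lem:altReasonable}, proving the converse.
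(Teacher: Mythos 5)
Your proof is correct and follows essentially the same route as the paper: both establish \cref{eq:pairningElts} via \cref{thm:rowWisePairing}, \cref{eq:partialRowSumElts} via \cref{thm:partialRowSum1}, most of \cref{eq:signElts} via \cref{lem:orderSigns}, and then close the remaining corner entries \(A_{i,\varnothing}\), \(A_{i,\{i\}}\) using a monotone but non-superadditive test function. The only cosmetic difference is that the paper uses two such vectors (the all-ones vector, which forces the row sum to zero by reasonableness, and then \({\left[0,1,\ldots,1\right]}^t\)), whereas you use a single vector of the second type and let the already-established pairing collapse the sum directly to \(A_{i,\{i\}}\).
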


\begin{proof}
  For the reverse, suppose matrix \(A\) and its associated
  \(\phi\)  is reasonable. In part, this 
  now equates to showing that the only possible choices for elements
  satisfy \cref{eq:signElts,eq:pairningElts,eq:partialRowSumElts}. The
  pairings, we note, can be found utilizing the results of
  \cref{thm:rowWisePairing}. Identically, the partial row sum in
  \cref{eq:pairningElts} is obtained via
  \cref{thm:partialRowSum1}. Both of these theorems use only the
  satisfaction of the reasonability condition in the body of their
  results. Taking note that we have the pairings, we can obtain almost
  all the signs we wish via \cref{lem:orderSigns}. However, in our
  proofs above, we cannot obtain the signs of \(A_{i,\varnothing}\)
  and \(A_{i,\{i\}}\) without efficiency. However, if we allow
  ourselves to utilize non-superadditive, yet monotone vectors, we
  can get the remaining signs we wish.
  To prove \(A_{i,\varnothing} \leq 0\) for all players \(i\), one can 
  observe, via multiplying the map \(A\) by the characteristic function
  \[v={\left[1,1,1,\ldots,1,1,1\right]}^t,\] focusing on the
  \(i^{th}\) row,  
  \begin{multline*}
    A_{i,\varnothing}+A_{i,\{1\}}+A_{i,\{2\}}+ \cdots 
    + A_{i,\{1,2\}}+\cdots +A_{i,N\setminus\{n\}} +\cdots
    +A_{i,N\setminus\{1\}}+A_{i,N} = 0
  \end{multline*}
  by reasonableness.  Similarly, we obtain
  \begin{multline*}
    1 \geq A_{i,\{1\}}+A_{i,\{2\}}+ \cdots 
    + A_{i,\{1,2\}}+\cdots +A_{i,N\setminus\{n\}} +\cdots 
    +A_{i,N\setminus\{1\}}+A_{i,N} \geq 0
  \end{multline*}
  by using the definition of reasonableness, \cref{ax:reasonable}, along with the vector
  \[v={\left[ 0,1,1,\ldots,1,1,1 \right]}^t.\]
  This is evident due to the fact that each player must receive no
  less than 0 and no more than 1, based on the marginal contribution
  bounds, given the fact our characteristic
  function is monotone. Combining these two facts, one quickly sees
  that
  \begin{equation}
    -1 \leq A_{i,\varnothing} \leq 0,
  \end{equation}
  as required. \(1 \geq A_{i,\{i\}}\geq 0\) is immediately picked up
  via the pairings. Therefore, we have the last piece of needed info,
  \cref{eq:signElts}.
\end{proof}

\begin{lemma}[Extreme points of the reasonable, efficient allocations]\label{lem:raavExtremePts}
  The extreme points of the set of all reasonable, efficient
  allocations are contained within   the set of special allocations.
\end{lemma}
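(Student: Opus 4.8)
The plan is to recast the set of reasonable, efficient allocations as a network-flow polytope and then read off its vertices. Given such an allocation with matrix $A$, the pairing \cref{thm:rowWisePairing} lets me work with the single quantity $f_{i,S} := A_{i,S\cup\{i\}} = -A_{i,S}$ for each player $i$ and each $S$ with $i\notin S$; by \cref{thm:signRowElt} we have $f_{i,S}\ge 0$. I read $f_{i,S}$ as a flow along the edge $S\to S\cup\{i\}$ of the Boolean lattice of subsets of $N$, directed by inclusion. In these terms the efficiency/column-sum condition of \cref{lem:rowwisesum} is exactly flow conservation at every node, with one unit emitted at the source $\varnothing$ and absorbed at the sink $N$: the column-sum at $S$ equals the inflow to $S$ minus the outflow from $S$, which is $0$ for $\varnothing\ne S\ne N$, is $-1$ at $\varnothing$, and is $+1$ at $N$. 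Since the lattice is acyclic under inclusion, any such flow decomposes into weighted maximal chains $\varnothing=M_0\subset\cdots\subset M_n=N$; as each chain crosses exactly one edge of each label $i$, the partial-row-sum identity $\sum_{S:i\notin S} f_{i,S}=1$ of \cref{thm:partialRowSum1} is automatic for every unit source-to-sink flow. Conversely, by \cref{lem:altReasonable} every nonnegative unit flow yields a reasonable, efficient allocation. Thus the reasonable, efficient allocations are precisely the nonnegative unit $\varnothing$--$N$ flows, and the special allocations are exactly those carried by a single maximal chain.

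Next I would note that, by \cref{thm:signRowElt}, every entry lies in $[-1,1]$, so the set of reasonable, efficient allocations is a bounded, closed, convex polytope; its extreme points exist by \cref{thm:krein-Millman} and are its vertices. It therefore suffices to prove that the flow of any extreme allocation is supported on a single maximal chain, and I would argue this by contraposition.

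Suppose the support $\{(i,S):f_{i,S}>0\}$, viewed as a set of lattice edges, contains an undirected cycle. Pushing $+t$ along the forward edges and $-t$ along the backward edges of this cycle preserves conservation at every node and the total emitted value, hence keeps the object a unit $\varnothing$--$N$ flow; for small $t>0$ every altered entry stays nonnegative, so both of the perturbed flows are again reasonable, efficient. (Here I never have to re-impose the partial-row-sum condition, as it rides along for free once the perturbation is still a unit source-to-sink flow.) Then $A$ is the midpoint of two distinct such allocations and is not extreme. Hence an extreme allocation has acyclic support. A nonnegative unit flow from the single source $\varnothing$ to the single sink $N$ with acyclic support cannot branch---a branch would have to re-merge for all flow to reach the unique sink, and a merge produces an undirected cycle---so its support is a single directed path, i.e.\ a maximal chain. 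Reading that chain off, and using \cref{lem:order1s} to confirm that in each row the $-1$ sits in the subset $M_m$ and its paired $+1$ in $M_{m+1}$, recovers exactly the special allocation of that chain.

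The step I expect to be the main obstacle is the cycle-pushing argument: one must check carefully that the alternating perturbation keeps the object inside the polytope (conservation, unit value, and nonnegativity are the points to verify, while the $[-1,1]$ bounds are automatic from nonnegativity together with unit value), and then establish the purely combinatorial fact that an acyclic nonnegative unit flow between a single source and a single sink is carried by one path. Everything else is bookkeeping supplied by \cref{thm:rowWisePairing,thm:signRowElt,lem:rowwisesum,thm:partialRowSum1,lem:altReasonable}.
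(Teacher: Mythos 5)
Your proposal is correct, and it reaches the statement by a genuinely different route than the paper. You identify the reasonable, efficient allocations with nonnegative unit $\varnothing$--$N$ flows on the Boolean lattice (the pairing of \cref{thm:rowWisePairing} collapses each row to edge variables $f_{i,S}$, \cref{lem:rowwisesum} becomes conservation, and \cref{lem:altReasonable} gives the converse, with the partial row sums of \cref{thm:partialRowSum1} indeed automatic by the cut/decomposition argument you sketch), and then you show an extreme point must have acyclic support via cycle-pushing, and that an acyclic unit flow with one source and one sink is a single maximal chain. The paper instead argues directly on the matrix: assuming $A$ is extreme but not special, it grows a set chain out of the $\varnothing$ column using the pairings and the zero column sums, observes that if the chain never offers a choice then $A$ was already special, and at a branch point extracts \emph{two} maximal chains with associated special allocations $S^a, S^b$, then perturbs $A \pm \epsilon\left(S^a - S^b\right)$ and checks reasonableness and efficiency of both perturbations through \cref{lem:altReasonable}. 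Mechanically your cycle-push is the abstraction of the paper's move: $S^a - S^b$ is exactly a circulation around the undirected cycle formed by the two chains' divergent portions, so both proofs are midpoint-perturbation arguments at heart. What your framing buys is twofold: the nonnegativity, conservation, and row-sum checks become transparent one-liners rather than the paper's entry-by-entry bookkeeping and $\epsilon$ selection; and, more significantly, the flow-decomposition theorem you invoke in passing already writes any such flow as a convex combination of maximal-chain flows, which proves \cref{thm:saavConvexHullRAAV} outright and constructively, making the paper's subsequent appeal to \cref{thm:krein-Millman} unnecessary. What the paper's route buys is self-containedness: it never imports network-flow theory, staying entirely within the elementary matrix identities it has developed, at the cost of a more delicate chain-construction and the nonconstructive Krein-Milman step later. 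The two details you flag (validity of the cycle perturbation, and acyclic support forcing a single path) are real proof obligations, but both close exactly as you indicate: backward edges of the cycle lie in the support so small $t$ preserves nonnegativity while conservation and unit value are untouched, and in a forest two distinct directed $\varnothing$--$N$ paths would create an undirected cycle.
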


\begin{proof}
  We prove this using the matricies of the allocations.
  Suppose first we have a reasonable, efficient allocation matrix \(A\) that is an extreme point,
  but is not a member of the set of all special allocations. Our aim is to reach
  a contradiction. To begin, via efficiency, and
  \cref{lem:rowwisesum}, we know there is at least one entry in the
  first column that is negative, in the \(i^{th}\) row, say. Starting
  at this entry, we build a set chain as introduced in
  \cref{sec:saav-sets} by utilizing the row-wise pairings 
  of \cref{thm:rowWisePairing}. Given our choice of negative element
  in the first column, \(A_{i,\varnothing}\), by the pairings, we know
  there is a positive, and equal in absolute value, entry in \(A_{i,\{i\}}\). Calling
  upon efficiency yet again, for the internal columns, we know that
  the sum of all the entries is equal to \(0\). Thus, there exists
  at least one negative element in row \(j\), say,
  \(A_{j,\{i\}}\). This entry in turn has a
  paired entry in a superset \(\{i,j\}\), \(A_{j,\{i,j\}}\). Continuing this
  process, we can continue to build a set chain to represent this path
  through the allocation. The set chain would appear as something of
  the form 
  \[
    \varnothing \subset \{i\} \subset \{i,j\} \subset \{i,j,k\}
    \subset \dots \subset N.
  \]
  If, at all stages the
  at least one player was exactly one player, we have a contradiction. Recalling
  \cref{cor:colsum1}, each column can 
  contain no more than a sum of \(-1\) in negative elements and a sum
  of \(1\) in positive elements. If there was only a single choice in
  each case, as the sum of the first column must be \(-1\), that
  forces \(A_{i, \varnothing}=-1\). In turn, \(A_{i,\{i\}}
  =1\). Continuing, it must be the case that \(A_{j,\{i\}}=-1\).
  Following this along the chain, we know every element we touched was
  either a \(-1\) or \(1\) by the pairings and efficiency
  (\cref{lem:rowwisesum}). More precisely, it was a special allocation already,
  a contradiction.  Therefore, we know that at in at least one
  instance when we were building our chain, we had two choices of
  elements (either both positive, or both negative, distinct from the
  element we started with), in row \(l\) and \(m\), say.
  If, in our initial chain, we chose row \(l\), we can make a secondary
  set chain by choosing row \(m\) at the juncture and following this alternative
  path. More explicitly, for some \(T\) with  \(T \cap\{l,m\} =
  \varnothing\) and \(l\neq m\), our two set chains would contain the links 
  \[
    T \subset T \cup \{l\}
  \]
  for the first set chain, and 
  \[
    T \subset T \cup \{m\}
  \]
  for the second set chain, respectively. Thus, we have two distinct
  set chains. From these two distinct set chains, we can
  find the associated special allocation matrices \(S^a\) and
  \(S^b\), say.
  These special allocations will be used to
  demonstrate that our assumed extreme \(A\) is
  not. Following the same ideas we did in the case of the
  characteristic functions, we
  wish to find a way to modify our \(A\) in small ways on either side,
  both modified matrices still reasonable, with a convex combination
  of the matrices equal to \(A\) itself. To do this, we choose an
  \(\epsilon\) in the following way:
  \begin{equation*}
    \epsilon < \min
    \begin{cases}
      A_{M_{m+1}\setminus M_{m}, M_{m+1}}\\
      1-A_{M_{m+1}\setminus M_{m}, M_{m+1}} & A_{M_{m+1}\setminus
        M_{m}, M_{m+1}} \neq 1
    \end{cases}
  \end{equation*}
  where \(M_m\) and \(M_{m+1}\) are consecutive elements in the set
  chains we have defined above. First, notice \(A_{M_{m+1}\setminus
    M_{m}, M_{m+1}} > 0 \) by construction.
  Notice also, as we have assumed \(A\) is not a special allocation, and have
  excluded the entries \(A_{M_{m+1}\setminus M_{m}, M_{m+1}} \neq 1\)
  from consideration in \(\epsilon\), we have \(0<\epsilon<1\). 
  By this choice of \(\epsilon\), we claim that one can
  both add and subtract \(\epsilon S^a \pm \epsilon S^b\) without
  compromising the reasonableness of the map.
  To see this, consider
  \[
    A \pm (\epsilon S^a - \epsilon S^b)
  \]
  alongside \cref{lem:altReasonable}. We note by our choice of
  \(\epsilon\), the signs of each element of
  \(A \pm (\epsilon S^a - \epsilon S^b)\) are unchanged.  Trivially,
  we satisfy the pairings for \(T\) with \(T\cap\{i\}=\varnothing\) as
  \(A\), \(S^a\) and \(S^b\) do as well, and addition and subtraction
  of matrices with the pairings produces other matrices satisfying the
  pairings. Finally, we note that each row in
  \(\pm(\epsilon S^a - \epsilon S^b)\) makes a contribution of 0 to
  the row sum of \(A' = A\pm(\epsilon S^a - \epsilon S^b)\), and the
  sum \(\sum_{T} A'_{i,T\cup \{i\}}=1\), as, in net, all that is done
  is an addition and subtraction of \(\epsilon\) to the sum. Thus, all
  of the requirements of \cref{lem:altReasonable} are fulfilled, and
  we can conclude that \(A'\) is reasonable. Clearly, as \(A\),
  \(S^a\) and \(S^b\) are efficient, the sum
  \(A' = A\pm(\epsilon S^a - \epsilon S^b)\) is efficient as well
  utilizing \cref{lem:rowwisesum} and its converse.
  Thus, we have two additional derived reasonable, efficient 
  maps, \(A_a\) and \(A_b\), with
  \begin{align*}
    A_a & = A - (\epsilon S^a - \epsilon S^b)\\
    A_b & = A + (\epsilon S^a - \epsilon S^b).
  \end{align*}
  Notice
  \[
    A = \frac{1}{2} A_a + \frac{1}{2} A_b.
  \]
  This contradicts the assumed extremeness of \(A\). Therefore, we can
  conclude that the extreme points of the set of all reasonable,
  efficient allocations are   contained within the set of special allocations. 
\end{proof}

Note, it is clear from the definitions that each special allocation is
an extreme point. So, as a result, we may state the following.

\begin{theorem}
  The extreme points of the set of all reasonable, efficient
  allocations are precisely the special allocations.
\end{theorem}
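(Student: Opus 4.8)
The plan is to prove the two set-containments separately. One direction, that every extreme point of the reasonable, efficient allocations is a special allocation, is exactly the content of \cref{lem:raavExtremePts}, so nothing further is needed there. The remaining and genuinely new direction is to show that every special allocation is in fact an extreme point of the convex set of all reasonable, efficient allocations. Convexity of this set is available from \cref{prop:convexComboMaps} together with the fact, noted after \cref{lem:rowwisesum}, that efficiency is preserved under convex combinations, so the notion of extreme point is meaningful.

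To establish this direction I would fix a special allocation with matrix $S$ and suppose, in order to apply the definition of an extreme point, that $S = (1-t)U + tV$ for some $t \in (0,1)$ and some reasonable, efficient allocation matrices $U$ and $V$. The goal is to force $U = V = S$, entry by entry. The decisive tool is \cref{thm:signRowElt}: every reasonable, efficient allocation matrix has each entry $A_{i,T}$ confined to $[0,1]$ when $i \in T$ and to $[-1,0]$ when $i \notin T$. In other words, all such matrices lie inside a fixed box whose coordinate intervals are each either $[0,1]$ or $[-1,0]$, and the entries of a special allocation are all $0$, $+1$, or $-1$, so they all sit at the endpoints of these intervals.

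The argument then proceeds coordinatewise. Consider an entry $S_{i,T}$. If $i \in T$, then $U_{i,T}$ and $V_{i,T}$ both lie in $[0,1]$: if $S_{i,T} = 1$ is the right endpoint, a convex combination of two numbers each at most $1$ equals $1$ only when both equal $1$; if instead $S_{i,T} = 0$, a convex combination of two nonnegative numbers vanishes only when both vanish. The case $i \notin T$ is symmetric, with $[-1,0]$ replacing $[0,1]$ and with $-1$ and $0$ as the relevant endpoints. In every case $U_{i,T} = V_{i,T} = S_{i,T}$, hence $U = V = S$, so $S$ is extreme. Combining this with \cref{lem:raavExtremePts} yields the claimed equality of sets.

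I do not expect a serious obstacle here, since the heavy structural work has already been carried out in \cref{thm:signRowElt} and \cref{lem:raavExtremePts}. The one point that warrants care is recognizing that a special allocation's entries are precisely the vertices of the surrounding box, so that the standard observation ``a vertex of a box lying in a convex subset is extreme in that subset'' applies directly. Invoking the convex-combination collapse with the correct sign in each case—nonnegative entries summing to the upper endpoint, nonpositive entries summing to the lower endpoint—is the only place where a slip could occur.
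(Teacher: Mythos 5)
Your proof is correct and takes essentially the same route as the paper: one containment is exactly \cref{lem:raavExtremePts}, and the other (each special allocation is extreme) is the part the paper simply asserts is ``clear from the definitions.'' Your coordinatewise box-vertex argument via \cref{thm:signRowElt}---each entry of a special allocation sits at an endpoint of the interval $[0,1]$ or $[-1,0]$ confining that coordinate for every reasonable, efficient allocation, so any convex combination realizing it must collapse---is a sound and explicit filling-in of that asserted step.
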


\subsection{Reasonable efficient allocations and the convex hull of
  the special allocations}

With all of the machinery we have established, we can now prove a nice
property of the special allocations, namely any reasonable,
efficient allocation can be written as a
positive (generalized) linear combination of the special allocations. 
We now prove the main result.

\begin{theorem}\label{thm:saavConvexHullRAAV}
  Any reasonable, efficient allocation can be written as a convex
  combination of the special allocations, more strongly, 
  \begin{center}
    An allocation is reasonable and efficient if and only if the
    allocation lies within the convex hull of the special allocations.
  \end{center}
\end{theorem}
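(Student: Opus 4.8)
The plan is to identify the set $K$ of matrices of reasonable, efficient allocations as a nonempty compact convex subset of the finite-dimensional space $\mathbb{R}^{|N| \times 2^{|N|}}$, and then to invoke the Krein-Milman theorem (\cref{thm:krein-Millman}) together with the characterization of the extreme points of $K$ already obtained. The biconditional then falls out as set equality $K = \text{co}(\text{special allocations})$.

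First I would dispatch the easy inclusion, that the convex hull of the special allocations sits inside $K$. Each special allocation is reasonable (it literally returns a marginal contribution, so the bounds of \cref{ax:reasonable} hold with equality) and efficient (its column sums are $(-1,0,\ldots,0,1)$, so \cref{lem:rowwisesum} and its converse apply). Since reasonableness is preserved under convex combinations by \cref{prop:convexComboMaps}, and efficiency is a linear condition hence also preserved under convex combinations, every convex combination of special allocations is again reasonable and efficient, so it lies in $K$.

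For the reverse inclusion I would set up the Krein-Milman machinery. The set $K$ is convex by the same two facts, and nonempty since special allocations exist. The crucial point is compactness: boundedness is immediate from \cref{thm:signRowElt}, which confines every matrix entry to the interval $[-1,1]$; closedness follows because, by \cref{lem:altReasonable} and its converse, membership in $K$ is equivalent to finitely many linear equalities (the pairings and the partial row sums) together with finitely many linear sign inequalities, each of which cuts out a closed set, while the efficiency constraint of \cref{lem:rowwisesum} is likewise linear. Thus $K$ is closed and bounded in a finite-dimensional space, hence compact; and as a finite-dimensional normed space it is a locally convex Hausdorff topological vector space, so \cref{thm:krein-Millman} yields $K = \overline{\text{co}}(\text{ex}(K))$.

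Finally I would feed in the extreme-point description. By \cref{lem:raavExtremePts} the extreme points of $K$ lie among the special allocations, and conversely each special allocation is itself extreme, so $\text{ex}(K)$ is exactly the finite set of special allocations. The convex hull of a finite set in a finite-dimensional space is already compact, hence closed, so $\overline{\text{co}}(\text{ex}(K)) = \text{co}(\text{ex}(K))$. Combining these gives $K = \text{co}(\text{special allocations})$, which is precisely the claimed equivalence. The main obstacle is the compactness verification, and specifically arguing that the reasonableness constraints, originally quantified over \emph{all} monotone $v$, genuinely reduce to the finitely many closed linear conditions of \cref{lem:altReasonable}, so that $K$ is closed; once that reduction is in hand, the Krein-Milman step and the collapse of $\overline{\text{co}}$ to $\text{co}$ in finite dimensions are routine.
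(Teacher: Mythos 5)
Your proposal is correct and follows essentially the same route as the paper: both apply the Krein--Milman theorem (\cref{thm:krein-Millman}) to the convex, compact, finite-dimensional set of reasonable, efficient allocation matrices, use \cref{lem:raavExtremePts} (plus the extremeness of each special allocation) to identify the extreme points, and collapse the closed convex hull to the convex hull via finite-dimensionality. If anything, you are more careful than the paper on one point: the paper asserts the set is ``closed and bounded'' without justifying closedness, whereas you explicitly derive it from the finitely many linear equalities and sign inequalities of \cref{lem:altReasonable} and its converse together with the linearity of efficiency.
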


\begin{proof}
  We prove this for the matrix \(A\) of an allocation
  \(\phi\).  If \(A\) lies within the convex hull of all special
  allocation matrices, it is
  clear that our proposition is true based on the inherent properties
  of special allocations explored in \cref{prop:convexComboMaps}, in
  particular.

  Conversely, suppose we have the set of all reasonable, efficient allocations,
  \(\mathcal{R}\), say. Our goal is to show that
  \(\mathcal{R}\) is identical to the convex hull of the special allocations.
  First, we know that the set of all reasonable, efficient allocations is compact, as
  \(\mathcal{R}\) is finite dimensional over \(\mathbb{R}\), and each
  entry of the matrix of a reasonable, efficient allocation is bounded (by \(-1\) and
  \(1\) inclusive via \cref{thm:signRowElt}.) 
  As a result, we have a closed and bounded set, and
  under our conditions, this results in a compact set via the
  Heine-Borel theorem.
  From our prior exploration in
  \cref{prop:convexComboMaps}, we know that \(\mathcal{R}\)
  is convex. Via \cref{lem:raavExtremePts}, we know that the
  extreme points of \(\mathcal{R}\) are precisely the set of all
  special allocations, \(\mathcal{S}\), say. Putting these facts together, we
  may conclude, by  \cref{thm:krein-Millman}
  (\nameref{thm:krein-Millman}) that 
  \begin{align*}
    \mathcal{R}&=\overline{\text{co}}\left(\text{ex}\mathcal{R}\right)\\
               &=\overline{\text{co}}\left(\mathcal{S}\right)
  \end{align*}
  This is nearly what we wish to show. Note, 
  \(\mathcal{S}\) is a finite set and is also bounded, 
  hence closed, thus one can make the last conclusion, that the
  closure of 
  the convex hull is the convex hull itself, via the theorems and
  definitions in \cref{sec:metr-topol-vect}.
  Therefore,
  \begin{align*}
    \mathcal{R}&=\overline{\text{co}}\left(\mathcal{S}\right)\\
               &={\text{co}}\left(\mathcal{S}\right),
  \end{align*}
  and we now have what we set out to prove. Any reasonable,
  efficient
  allocation lies within the convex hull of the special allocations.
\end{proof}

We now have both directions of our proof, and so, we can think of the
special allocations as a spanning set of sorts for the set of
reasonable, efficient allocations. More specifically, any
such allocation is in the cone made up of the special
allocations. Viewing things this way lets us prove several more
results.

\subsection{Consequences of the main result}

With the knowledge gained through
\cref{thm:saavConvexHullRAAV}, we can now prove even more
properties of the reasonable, efficient allocations. 

\begin{theorem}
  Given a reasonable, efficient allocation matrix \(A\), the sum of each interior column
  (not first or last) in absolute value is \(2\).
\end{theorem}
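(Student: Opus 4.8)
The plan is to reduce the absolute-value sum of a single interior column to one scalar and then track that scalar across the extreme points furnished by \cref{thm:saavConvexHullRAAV}. Fix an interior column, indexed by a set $S$ with $\varnothing \subsetneq S \subsetneq N$. By \cref{thm:signRowElt} the sign of each entry of this column is fixed by membership: $A_{i,S}\ge 0$ for rows $i\in S$ and $A_{i,S}\le 0$ for rows $i\notin S$. Hence the absolute-value sum splits cleanly as $\sum_i |A_{i,S}| = \sum_{i\in S}A_{i,S} - \sum_{i\notin S}A_{i,S}$. Because $S$ is interior, \cref{lem:rowwisesum} gives that the column sums to $0$, so $\sum_{i\notin S}A_{i,S} = -\sum_{i\in S}A_{i,S}$. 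Writing $P_S = \sum_{i\in S}A_{i,S}$ for the positive mass of the column, the absolute-value sum equals exactly $2P_S$, and the claim reduces to the single equality $P_S = 1$.

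The next step is to pin down $P_S$. The bound $P_S\le 1$ is immediate from \cref{cor:colsum1}. For the exact value the natural device is \cref{thm:saavConvexHullRAAV}: write $A = \sum_k \lambda_k S^{(k)}$ as a convex combination of special allocations, with $\lambda_k\ge 0$ and $\sum_k\lambda_k = 1$. The functional $A\mapsto P_S = \sum_{i\in S}A_{i,S}$ is plainly linear in the matrix entries, so $P_S(A) = \sum_k \lambda_k\,P_S(S^{(k)})$. For a special allocation with chain $\varnothing = M_0\subset\cdots\subset M_n = N$, the column $S$ is nonzero only when $S$ is itself a member of the chain, in which case it carries a single $+1$ in the row of the player $M_m\setminus M_{m-1}$ (which lies in $S$) and a single $-1$; thus $P_S(S^{(k)})$ equals $1$ when $S$ belongs to the chain and $0$ otherwise. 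Consequently $P_S(A)$ is precisely the total weight carried by the chains passing through $S$.

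The main obstacle sits exactly here, and it is decisive: that weight is not forced to be $1$. A special allocation whose chain avoids $S$ makes column $S$ identically zero, so $P_S = 0$ and the column's absolute-value sum is $0$ rather than $2$; and a barycentric allocation such as the Shapley value spreads weight over all $|N|!$ chains, giving $P_S = |S|!\,(|N|-|S|)!/|N|!$, again strictly below $1$ for interior $S$. So the statement cannot hold verbatim for an individual interior column. The faithful repair, and the quantity that genuinely equals $2$, comes from summing over all interior columns of a common cardinality $m = |S|$: each chain meets the cardinality-$m$ level in exactly one set, whence $\sum_{|S|=m}P_S(S^{(k)}) = 1$ for every special allocation, and therefore $\sum_{|S|=m}P_S(A) = 1$ throughout the convex hull, making the aggregate absolute-value sum over the cardinality-$m$ columns equal to $2$. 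I would accordingly read ``each interior column'' as ``each interior cardinality level'' and prove the statement in that form, recording that the literal column-by-column reading is false and giving the two displayed allocations as counterexamples.
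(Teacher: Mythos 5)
Your analysis is correct, and the obstacle you identify is a genuine error in the paper's theorem, not a gap in your argument. The paper's own proof rests on the opening claim that the property ``is trivial to see in the case of the special allocation matrices,'' and then propagates it through convex combinations via the equality case of the triangle inequality (that propagation step is sound: the row-wise sign matching from \cref{lem:orderSigns} does turn the triangle inequality into an equality). But the base case is false: a special allocation's matrix is supported only on the columns of its chain, so every interior column whose set does not lie on the chain is identically zero and has absolute-value sum $0$, not $2$. The paper's own $3$-player example matrix already exhibits this --- its columns for $\{2\}$, $\{3\}$, $\{1,3\}$, and $\{2,3\}$ vanish. Your reduction $\sum_i \lvert A_{i,S}\rvert = 2P_S$ (signs from \cref{thm:signRowElt}, zero column sum from \cref{lem:rowwisesum}) and the identification of $P_S(A)$ with the total weight of chains through $S$ under the decomposition of \cref{thm:saavConvexHullRAAV} are exactly right: the functional $P_S$ takes the values $0$ or $1$ at the extreme points and hence ranges over all of $[0,1]$ on the hull, so no argument could force it to equal $1$ column by column. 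Your Shapley-value computation, giving $\sum_i\lvert A_{i,S}\rvert = 2\,\lvert S\rvert!\,(\lvert N\rvert - \lvert S\rvert)!/\lvert N\rvert! < 2$ for interior $S$, shows the failure is not confined to the extreme points either.

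Your repaired statement is the correct one: since each chain meets the cardinality level $m$ in exactly one set, $\sum_{\lvert S\rvert = m} P_S(A) = 1$ for every reasonable, efficient allocation, so the absolute-value sum aggregated over all columns of a fixed interior cardinality is $2$. Note that this is exactly equivalent to the paper's own remark following \cref{thm:partialRowSum1} (that the nonnegative entries at a fixed cardinality level sum to $1$), a remark the literal column-by-column theorem would contradict --- the paper is internally inconsistent on this point. In short: your proof of the level-wise statement is complete and correct, your two counterexamples refute the stated theorem, and the paper's proof fails at its first sentence rather than at the convexity step.
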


\begin{proof}
  This is trivial to see in the case of the special allocation
  matrices. To obtain the general result, we must see that the result
  holds for a convex combination of two reasonable, efficient allocations
  that satisfy the condition already.\footnote{As any reasonable
    allocation can be written as a (generalized) convex combination,
    this gives us our result.}
  Observe, the column-wise sum, of two such allocation
  matrices, \(P\) and
  \(Q\) for column \(S\).
  \({\left(
    P_{1,S} , P_{2,S} , \ldots , P_{n-1,S} , P_{n,S}
  \right)}^t\)
  and, similarly
  \({\left(
    Q_{1,S} , Q_{2,S} , \ldots , Q_{n-1,S} , Q_{n,S}
  \right)}^t\).
  By our assumption,
  \[
    \sum_{i=1}^n \lvert P_{i,S} \rvert = 2
  \]

  \[
    \sum_{i=1}^n \lvert Q_{i,S} \rvert = 2
  \]
  for \(S \neq \varnothing\) and \(S\neq N\). 
  Now, taking the convex combination  of the columns, we obtain
  \[
    {\left(
      t P_{1,S}+(1-t) Q_{1,S}  , \ldots , t P_{n,S}+(1-t) Q_{n,S} 
    \right)}^t.
  \]
  We note
  \begin{align*}
    \sum_{i=1}^n \lvert t P_{i,S}+(1-t) Q_{i,S} \rvert & \leq
    \sum_{i=1}^n t\lvert P_{i,S} \rvert + (1-t) \lvert Q_{i,S}
                                                         \rvert\\ 
    & = t \sum_{i=1}^n \lvert P_{i,S} \rvert + (1-t)\sum_{i=1}^n \lvert
      Q_{i,S} \rvert\\
                                                       &= t \cdot 2 +
                                                         (1-t) \cdot 2\\
    &= 2
  \end{align*}
  by the triangle inequality and properties of the absolute
  value.  Now, one must argue that this inequality is necessarily an
  equality.  Recall, the triangle inequality is an equality if both
  numbers are non-positive or non-negative. By
  \cref{lem:orderSigns} we see that \(P_{i,S}\) and
  \(Q_{i,S}\) necessarily are both non-positive or non-negative. So
  the triangle inequality is a triangle equality, and we have the
  equality we desire. 
\end{proof}

\begin{theorem}
  The sum of each row of an allocation's matrix in absolute value is
  \(2\).
\end{theorem}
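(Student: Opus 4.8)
The plan is to prove this directly from the sign and partial-sum information already established for a single row, rather than passing through a convex-combination argument as in the previous theorem. Fix a reasonable, efficient allocation with matrix \(A\) and a player (row) \(i\); the goal is to evaluate \(\sum_{S} \lvert A_{i,S} \rvert\) and show it equals \(2\). Because \cref{def:matrixEntries} makes any single-row statement independent of the ordering of the columns, proving the identity for an arbitrary \(i\) suffices.

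The key observation is that \cref{thm:signRowElt} pins down the sign of every entry in the row: \(A_{i,S} \geq 0\) whenever \(i \in S\), and \(A_{i,S} \leq 0\) whenever \(i \notin S\). This lets me discard the absolute-value bars cleanly, writing \(\lvert A_{i,S}\rvert = A_{i,S}\) for the columns with \(i \in S\) and \(\lvert A_{i,S}\rvert = -A_{i,S}\) for the columns with \(i \notin S\). Hence
\[
  \sum_S \lvert A_{i,S}\rvert = \sum_{S: i\in S} A_{i,S} - \sum_{S: i\notin S} A_{i,S}.
\]

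Next I would evaluate the two partial sums separately. For the first, every set \(S\) containing \(i\) is of the form \(T \cup \{i\}\) for a unique \(T\) with \(i\notin T\), so \(\sum_{S:i\in S} A_{i,S} = \sum_{T: i\notin T} A_{i,T\cup\{i\}}\), which equals \(1\) by \cref{thm:partialRowSum1}. The second sum, \(\sum_{S: i\notin S} A_{i,S}\), equals \(-1\) by the corollary to that lemma. Substituting gives \(\sum_S \lvert A_{i,S}\rvert = 1 - (-1) = 2\), as required.

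There is no real obstacle here: all of the substantive work — the pairing of \cref{thm:rowWisePairing}, the sign determination of \cref{thm:signRowElt}, and the partial row-sum identities — has already been carried out upstream. The only point requiring a moment's care is the reindexing \(S = T \cup \{i\}\), so that the positive-entry sum is correctly identified with the partial sum appearing in \cref{thm:partialRowSum1}; once that bookkeeping is in place the result is immediate.
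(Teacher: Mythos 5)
Your proof is correct and follows essentially the same route as the paper's: both use the sign determination of the row entries (you cite \cref{thm:signRowElt}, the paper cites \cref{lem:orderSigns}) together with \cref{thm:partialRowSum1} and its corollary to split the row into a non-negative part summing to \(1\) and a non-positive part summing to \(-1\). Your explicit reindexing \(S = T\cup\{i\}\) is a minor bookkeeping addition the paper leaves implicit; otherwise the arguments coincide.
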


\begin{note}
  We could have proved this without our main result, however, it is
  presented here with similar in spirit results.
\end{note}

\begin{proof}
  Recall, via \cref{lem:orderSigns}, the elements summed in
  \cref{thm:partialRowSum1} are all non-negative. Similarly,
  the sum in the corollary is of all non-positive numbers. As a
  result,
  \[
    \sum_{S: i\notin S} \lvert A_{i,S \cup \{i\}} \rvert= 1
  \]
  and
  \[
    \sum_{S: i\notin S} \lvert A_{i,S} \rvert= 1.
  \]
  Hence,
  \[
    \sum_{S} \lvert A_{i,S}\rvert= 2.
  \]
\end{proof}

\begin{remark}
  We can now recognize ``un-reasonable'' allocations quite easily. If
  the row sums of the absolute value of the elements of the matrix
  of the allocation are not equal to 2 and if the column sums of an
  interior column
  are not equal to 2, we can immediately notice it is unreasonable.
\end{remark}

\section{Consequential Parallel Results}
\label{cha:cons-parall-results}

With slightly different assumptions, we can get the same results, in a
broader context.

\subsection{Superadditivity as a replacement for monotonicity}

To get similair results for superadditive functions, we need only add
the following axioms,

\begin{axiom}\label{ax:valueNothing}
  The value of \(v(\varnothing) = 0\). 
\end{axiom}

\begin{remark}
  This is of course true for superadditve functions, and is not so
  much an axiom, but a consequence of the definition of superadditivity.
  If \(v(\varnothing)>0\), we reach a contradiction, as
  \(v(S)=v(S\cup \varnothing) \geq v(S)+v(\varnothing)\).
\end{remark}

Typically, one would assume \cref{ax:valueNothing}, if one wants to
divide all ``produced'' among the players of the game. This was
mentioned when we first defined efficiency in
\cref{def:newEfficiency}.

\begin{axiom}\label{ax:rowSum0}
  The sum of all elements in each row is of the matrix of the
  allocation \(\phi\) is 0.
\end{axiom}

\begin{note}
  The choice of the first column is arbitrary, due to the fact that
  \(v(\varnothing)=0\) via \cref{ax:valueNothing}, or its following
  remark. Thus, without loss of generality, \cref{ax:rowSum0} always
  holds, as we can pick the value in the first column to make the
  sum work.
\end{note}

To first proceed, let us get an idea what we can do with superadditive
characteristic functions.

\begin{proposition}\label{prop:superBinarySpan}
  The set of all superadditive binary characteristic functions
  (superadditive simple games) form a spanning set for
  the monotone binary characteristic function with
  \(v(\varnothing)=0\). 
\end{proposition}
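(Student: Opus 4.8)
The plan is to exhibit a concrete family inside the superadditive binary functions that already spans, namely the unanimity games, and then to expand an arbitrary monotone binary function in terms of them. For a nonempty \(T \subseteq N\), write \(u_T\) for the binary characteristic function with \(u_T(S) = 1\) exactly when \(T \subseteq S\) and \(u_T(S)=0\) otherwise. First I would check that each \(u_T\) is a superadditive binary characteristic function. If \(Q \cap R = \varnothing\), then since \(T \neq \varnothing\) the set \(T\) cannot be contained in both \(Q\) and \(R\) at once, so at most one of \(u_T(Q)\), \(u_T(R)\) equals \(1\); as \(u_T\) takes only the values \(0\) and \(1\) and is monotone, the inequality \(u_T(Q\cup R) \geq u_T(Q) + u_T(R)\) holds in every case. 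Hence all the unanimity games \(u_T\) with \(T\neq\varnothing\) lie inside the set we are claiming is spanning.

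Next I would show that an arbitrary monotone binary \(v\) with \(v(\varnothing)=0\) is a linear combination of these \(u_T\). The most self-contained route is inclusion--exclusion over minimal winning coalitions: let \(M_1,\ldots,M_k\) be the minimal sets with \(v(M_i)=1\) (each nonempty because \(v(\varnothing)=0\)). Since \(v\) is monotone, \(v(S)=1\) if and only if \(S \supseteq M_i\) for some \(i\), i.e.\ \(v\) is the indicator of the union of the events \(\{S : S \supseteq M_i\}\). Noting that \(\{S : S \supseteq M_i \text{ for all } i\in I\}=\{S: S\supseteq \bigcup_{i\in I}M_i\}\), inclusion--exclusion gives
\[
  v = \sum_{\varnothing \neq I \subseteq \{1,\ldots,k\}} (-1)^{|I|+1}\, u_{\bigcup_{i\in I} M_i}.
\]
Every index \(\bigcup_{i\in I}M_i\) is nonempty, so each summand is one of the superadditive unanimity games just verified, and therefore \(v\) lies in their linear span.

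Equivalently and more conceptually, one may invoke the classical Harsanyi/M\"obius decomposition, which says that \(\{u_T : \varnothing \neq T \subseteq N\}\) is a basis of the \((2^{|N|}-1)\)-dimensional space of all characteristic functions vanishing at \(\varnothing\); since every monotone binary \(v\) with \(v(\varnothing)=0\) lives in that space, it is automatically a linear combination of unanimity games. I expect the only real content to be the superadditivity check for \(u_T\) together with the bookkeeping of the inclusion--exclusion identity; the spanning itself is immediate once the \(u_T\) are in hand. The one point worth flagging is that \textbf{spanning set} here means the \emph{linear} span, so the coefficients \((-1)^{|I|+1}\) are permitted to be negative --- we are not asserting a conic or convex combination, only that each monotone binary function sits in the real linear span of the superadditive binary ones.
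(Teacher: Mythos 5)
Your proposal is correct, but it takes a genuinely different route from the paper's own proof. The paper argues constructively: it associates to each maximal chain \(\varnothing \subset \{i\} \subset \{i,j\} \subset \cdots \subset N\) the binary function that is \(1\) exactly on the nonempty sets of the chain (plus ``truncations'' that start the ones midway along the chain), and then assembles an arbitrary monotone binary \(v\) by an iterative process --- add the chain functions attached to the minimal winning sets, subtract truncations wherever the running sum exceeds \(1\), and repeat until the target is reached. You instead take the unanimity games \(u_T\) --- which are precisely the paper's own functions \(v^T_a\) from the proof of \cref{thm:rowWisePairing} --- and write \(v\) in closed form by inclusion--exclusion over the minimal winning coalitions. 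Your route buys quite a lot: the coefficients are explicit (\(\pm 1\)), termination is a non-issue because the expansion is a single finite sum, and the superadditivity check is airtight; your final caveat that ``spanning'' means the real \emph{linear} span also matches the paper's intent, since its own process involves subtraction. By contrast, the paper's argument is only a sketch (the subtract-and-repeat step is never shown to terminate or to preserve monotonicity of the residual), and its starting family is defective as literally defined: the indicator of a chain is neither superadditive nor monotone once \(|N| \geq 3\), since for the chain \(\varnothing \subset \{1\} \subset \{1,2\} \subset \{1,2,3\}\) one has \(v_{sa}(\{1\}) + v_{sa}(\{3\}) = 1 > 0 = v_{sa}(\{1,3\})\). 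So your unanimity-game decomposition is not merely an alternative; it is essentially the repair the paper's construction needs, replacing chain indicators by the supersets-closed functions \(u_T\) that genuinely are superadditive.
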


\begin{proof}
  If we look at the linear span of superadditive
  binary \(v\), they do form a spanning set for all monotone \(v\)
  with \(v(\varnothing) = 0\). We can see this by viewing them in the
  following way. Consider all of the ``monotone'' set chains,
  \[
    \varnothing \subset \{i\}\subset \{i,j\}\subset \{i,j,k\} \subset \ldots
    \subset N,
  \]
  adding a single player in each subsequent set in the chain. This set chain, can
  of course be associated with a special allocation, but it can also be
  associated with a superadditive characteristic function
  \(v_{sa}\),\footnote{This is not one of our previously named
    superadditive characteristic functions.}
  following the steps below. 
  Place a 1 in all of the places associated with each non-empty set in
  the chain, and a 0 in all others. This is trivially superadditive,
  as
  \[
    v_{sa}(S) + v_{sa}(T) \leq v_{sa}(S+T)
  \]
  by construction for \(S\cap T = \varnothing\). Notice also, we can
  truncate this \(v_{sa}\), starting the assignment of ones at any
  point midway through the set chain still yields a superadditive
  characteristic function.  Recall, superadditive implies
  monotone in our considerations.
  However, the reverse direction is easily seen to be false, there are
  certainly monotone characteristic functions that are not superadditive. To see that
  these are within the linear span of the superadditive characteristic
  functions, we
  can follow a constructive process detailed below.
  Given a monotone \(v\), there is at least one set with smallest
  cardinality. Take all of the chains starting with these minimal sets,
  and add together their associated characteristic functions,
  call it \(v_{k}\), say. Notice, if there were more than one set of
  smallest cardinality, \(v_k\) is no longer a binary vector, or
  simple game. To fix
  this, we subtract off a truncation of the characteristic function associated with
  our set chains, starting where our vector has a place containing a
  value more than 1, being careful that we do not subtract anything
  from a place with a 1 in it, as this would break our monotonicity.
  We continue this process, for the vector \(v-v_k\), updating \(v_k\)
  as we proceed, and after finitely many steps \(v=v_k\) and we are done.
\end{proof}

\begin{definition}[\cref{def:reasonable}, redux]\label{def:reasonableRedux}
  Assuming \cref{ax:rowSum0}, An allocation is \emph{reasonable for
    superadditive characteristic functions} if
  \begin{equation}\label{eq:reiteratedBounds}
    \min_{S, i\notin S}\left\{ v\left(S \cup\{i\}\right) -v(S)\right\}
    \leq \phi_i(N;v) \leq \max_{S, i \notin
      S}\left\{v\left(S \cup \{i\}\right) - v(S)\right\}
  \end{equation}
  is satisfied for all superadditive \(v\).
\end{definition}

\begin{note}
  Recall for an allocation \(\phi\) with matrix \(A\), we have \(\phi_i(N;v)=A_i\cdot v\).
\end{note}

\begin{proposition}\label{prop:eff-superadd-monotone}
  If a map is efficient for all superadditive characteristic functions \(v\), and we
  assume \cref{ax:rowSum0} holds, then it is efficient for all
  monotone characteristic functions.
\end{proposition}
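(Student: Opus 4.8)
The plan is to recast efficiency as the vanishing of one fixed linear functional and then invoke the spanning statement of \cref{prop:superBinarySpan}. Writing the allocation as $\phi(N;v)=A\cdot v$, efficiency in the sense of \cref{def:newEfficiency} for a particular $v$ is exactly the scalar identity $[1,\ldots,1]\,A\cdot v = v(N)-v(\varnothing)$, which I rewrite as $c\cdot v = 0$, where
\[
  c := [1,\ldots,1]\,A - [-1,0,\ldots,0,1]
\]
is a fixed covector on the space of characteristic functions. The hypothesis that the map is efficient for every superadditive $v$ then says precisely that $c$ annihilates every superadditive characteristic function, and the goal is to upgrade this to: $c$ annihilates every monotone characteristic function.

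First I would work inside the hyperplane $\{v : v(\varnothing)=0\}$. Since $v\mapsto c\cdot v$ is linear, its kernel is a subspace, and as it contains all superadditive functions it contains their entire linear span. By \cref{prop:superBinarySpan} every monotone binary characteristic function with $v(\varnothing)=0$ lies in that span, so $c$ already annihilates all of them. Such functions include the unanimity-type games (those with $v(T)=1$ exactly when a fixed $S\subseteq T$, appearing as $v^S_a$ in \cref{thm:rowWisePairing}), which are binary, monotone, and superadditive; letting $S$ range over the nonempty subsets, these form a basis of the $(2^{|N|}-1)$-dimensional hyperplane $\{v : v(\varnothing)=0\}$. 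Hence $c$ annihilates every characteristic function that vanishes on $\varnothing$, whether binary or not.

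The only remaining point -- and the sole place where \cref{ax:rowSum0} is genuinely used -- is to drop the restriction $v(\varnothing)=0$. Let $\mathbf{1}$ denote the constant characteristic function equal to $1$ on every set. By \cref{ax:rowSum0} each row of $A$ sums to $0$, so $A\cdot\mathbf{1}=0$ and therefore $[1,\ldots,1]\,A\cdot\mathbf{1}=0$; since moreover $[-1,0,\ldots,0,1]\cdot\mathbf{1}=-1+1=0$, we obtain $c\cdot\mathbf{1}=0$. For an arbitrary monotone $v$ I then decompose
\[
  v = \bigl(v - v(\varnothing)\,\mathbf{1}\bigr) + v(\varnothing)\,\mathbf{1}.
\]
The first summand is monotone (subtracting a constant preserves monotonicity) and vanishes at $\varnothing$, so it is killed by $c$ by the previous paragraph; the second is killed because $c\cdot\mathbf{1}=0$. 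By linearity $c\cdot v=0$, i.e.\ $\sum_{i} \phi_i(N;v)=v(N)-v(\varnothing)$, which is exactly efficiency for $v$.

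The main obstacle is conceptual rather than computational: one must notice that every superadditive $v$ has $v(\varnothing)=0$ (by the remark following \cref{ax:valueNothing}), so the superadditive functions span only the hyperplane $\{v : v(\varnothing)=0\}$ and can say nothing on their own about the behaviour of $c$ on the constant function $\mathbf{1}$. That missing one-dimensional direction is supplied precisely by \cref{ax:rowSum0}, which is exactly why that axiom appears in the hypotheses. Once the spanning statement of \cref{prop:superBinarySpan} is in hand, everything else is routine linearity, and in fact the argument shows the slightly stronger conclusion that $c=0$, so the map is efficient for \emph{all} characteristic functions, not merely the monotone ones.
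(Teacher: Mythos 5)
Your proof is correct, and it shares the paper's two-step skeleton: superadditive test functions pin down the efficiency identity on the hyperplane $\{v : v(\varnothing)=0\}$, and \cref{ax:rowSum0} supplies the one missing direction. Where you differ is in how the first step is carried out. The paper works with column sums: it cites \cref{lem:rowwisesum} (efficiency is equivalent to the column sums being $(-1,0,\ldots,0,1)$) together with \cref{prop:superBinarySpan} (superadditive binary games span the monotone binary games vanishing at $\varnothing$), argues that the superadditive tests determine every column sum except the first, and then forces the first column sum to be $-1$ by total-sum bookkeeping from \cref{ax:rowSum0}. You instead observe that the unanimity games $v^S_a$ are themselves superadditive and already form a basis of the $(2^{|N|}-1)$-dimensional hyperplane, so the covector $c$ you define is annihilated on that whole hyperplane outright; no appeal to \cref{prop:superBinarySpan} is needed, and you get the conclusion for \emph{all} functions vanishing at $\varnothing$, not just binary monotone ones. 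That last point quietly repairs a step the paper leaves implicit: \cref{prop:superBinarySpan} by itself only gives the identity against monotone binary games, and one still needs to know that those span the full hyperplane before the column sums can be read off; your basis argument does this in one stroke. Your use of \cref{ax:rowSum0} (as $A\cdot\mathbf{1}=0$, hence $c\cdot\mathbf{1}=0$, combined with the decomposition $v=\bigl(v-v(\varnothing)\,\mathbf{1}\bigr)+v(\varnothing)\,\mathbf{1}$) is the same fact the paper uses, just packaged as a statement about a linear functional rather than as bookkeeping on the first column, and as you note it yields the slightly stronger conclusion $c=0$, i.e.\ efficiency for every characteristic function, monotone or not.
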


\begin{proof}
  We prove this by viewing the allocation as a matrix. To deal with
  efficiency, we 
  look at the sums of the column elements, and ensure they add up to
  \[
    {\left(
      -1,0,\ldots,0,1
    \right)}
  \]
  following our note following \cref{lem:rowwisesum}.
  This, as seen in \cref{lem:rowwisesum}, depends on this being
  true for all \(v\), which we can shorten to monotone \(v\)  thanks to our
  prior work. This time, however, we can consider only superadditive
  \(v\) via \cref{prop:superBinarySpan}.
  So, following the results in the proof
  of \cref{lem:rowwisesum}, we obtain all the sums of column
  elements except the first \(-1\). This \(-1\) is given to
  us by the assumption that the row-wise sum is 0. Hence, as the sum
  of all the rows, save the first element in each row is \(1\),
  this forces the entries in the first column to sum to \(-1\). Thus, we
  have efficiency via the superadditive \(v\) only.
\end{proof}

If we have a matrix of a reasonable allocation, we can also find the 
pairings while checking only the superadditive vectors by
the following \lcnamecref{prop:superaddPairing}.

\begin{proposition}\label{prop:superaddPairing}
  Given an allocation \(\phi\), reasonable for superadditive
  characteristic functions, with matrix \(A\), and
  assuming \cref{ax:rowSum0}, the row-wise pairing in the matrix can be determined
  by using only superadditive characteristic
  functions.
\end{proposition}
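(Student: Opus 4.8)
The plan is to observe that the proof of \cref{thm:rowWisePairing} already achieves the pairing using only superadditive test functions, so essentially no new construction is required beyond making this observation explicit. Recall that in that proof, for each $S$ with $S\cap\{i\}=\varnothing$, one builds the two characteristic functions $v^S_a$ and $v^S_b$, and the pairing $A_{i,S}=-A_{i,S\cup\{i\}}$ emerges from the identity $A_i\cdot(v^S_a-v^S_b)=A_{i,S}+A_{i,S\cup\{i\}}$ together with $\phi_i(N;v^S_a)=\phi_i(N;v^S_b)=0$. The whole argument will transfer verbatim once we confirm that $v^S_a$ and $v^S_b$ are superadditive, so that the weaker hypothesis of reasonableness for superadditive characteristic functions in \cref{def:reasonableRedux} suffices to force both values to be $0$.

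First I would verify that both test functions lie in the class covered by \cref{def:reasonableRedux}. The function $v^S_a$ is superadditive directly, exactly as in \cref{thm:rowWisePairing}: for disjoint $Q,R$ the set $S$ can be contained in at most one of them, so $v^S_a(Q\cup R)\geq v^S_a(Q)+v^S_a(R)$. The function $v^S_b$ is a pair truncation of $v^S_a$ with truncating sets $S$ and $S\cup\{i\}$, and since truncations of superadditive functions remain superadditive, $v^S_b$ is superadditive as well. Thus both are admissible test functions for an allocation that is merely reasonable for superadditive characteristic functions.

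Next I would reproduce the squeezing step under the weaker hypothesis only. Under $v^S_a$ the marginal contribution of player $i$ is identically $0$, so the minimum and maximum in \cref{eq:reiteratedBounds} both equal $0$; since $v^S_a$ is superadditive, \cref{def:reasonableRedux} applies and yields $\phi_i(N;v^S_a)=0$. By \cref{lem:truncFunc} the pair truncation $v^S_b$ retains the zero marginal contribution for $i$, so the same squeezing gives $\phi_i(N;v^S_b)=0$. Subtracting, $0=A_i\cdot(v^S_a-v^S_b)=A_{i,S}+A_{i,S\cup\{i\}}$, and ranging over all $S$ with $i\notin S$ delivers the full pairing. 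I note in passing that \cref{ax:rowSum0} is not actually invoked in this argument; it is carried along only for consistency with the superadditive framework of this section.

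The main point — and indeed the only thing that needs genuine checking — is precisely that the test functions of \cref{thm:rowWisePairing} never left the superadditive class, so there is no real obstacle here: the work reduces to confirming the superadditivity of $v^S_a$ and of its pair truncation $v^S_b$, both of which have already been established earlier in the paper.
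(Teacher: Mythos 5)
Your argument is correct for every nonempty \(S\) with \(S\cap\{i\}=\varnothing\), and there it coincides with the first step of the paper's proof. The gap is the case \(S=\varnothing\), which is exactly the case that the hypothesis \cref{ax:rowSum0} exists to handle, and which your proposal explicitly dismisses. For \(S=\varnothing\) the test function \(v^\varnothing_a\) is the all-ones vector, and it is \emph{not} superadditive: applying superadditivity to the disjoint pair \(Q\) and \(\varnothing\) gives \(v(Q)\geq v(Q)+v(\varnothing)\), so any superadditive function must have \(v(\varnothing)\leq 0\), whereas \(v^\varnothing_a(\varnothing)=1\). Your own verification breaks at the same point: the claim that for disjoint \(Q,R\) ``\(S\) can be contained in at most one of them'' is false when \(S=\varnothing\), since the empty set is contained in both. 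The pair truncation \(v^\varnothing_b\) fails too, e.g.\ \(v^\varnothing_b(\{j\})=v^\varnothing_b(\{k\})=1\) while \(v^\varnothing_b(\{j,k\})=1\) for distinct \(j,k\neq i\). So the squeezing argument delivers \(A_{i,S}=-A_{i,S\cup\{i\}}\) only for \(S\neq\varnothing\), and the pairing \(A_{i,\varnothing}=-A_{i,\{i\}}\) is left unproven.

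Moreover, this is not a defect of one particular choice of test functions; no argument restricted to superadditive characteristic functions can recover that last pairing. Every superadditive \(v\) has \(v(\varnothing)=0\), so the entry \(A_{i,\varnothing}\) never enters any product \(A_i\cdot v\): one could replace \(A_{i,\varnothing}\) by an arbitrary number without disturbing reasonableness for superadditive characteristic functions, so that entry is simply invisible to the weaker hypothesis. This is precisely why \cref{ax:rowSum0} is a genuine hypothesis and not, as you assert, ``carried along only for consistency.'' The paper's proof finishes exactly here: using the (correct) pairings for nonempty \(S\), all entries of row \(i\) other than \(A_{i,\varnothing}\) and \(A_{i,\{i\}}\) cancel in pairs, so the row-sum axiom collapses to \(A_{i,\varnothing}+A_{i,\{i\}}=0\), which is the missing pairing. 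Adding that one step to your write-up repairs it.
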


\begin{proof}
  Certainly, as we have seen previously, the row-wise pairing of
  elements can be determined by superadditive characteristic functions and their
  truncations, save the 
  \begin{equation}\label{eq:lastPairFound}
    A_{i,\varnothing} = -A_{i,\{i\}}
  \end{equation}
  pair. Thus, we need only check a subset of the
  superadditive characteristic functions to obtain all but the \(n\)
  pairings mentioned in \cref{eq:lastPairFound}.
  Following our prior method of proof, we need the vector
  \({\left[
    1,1,\ldots,1,1
  \right]}^t\) to obtain the last pairings above. This
  is not superadditive, as \(v(\varnothing)>0\).
  However, this vector is simply a convenient way to ensure that
  the sum of each row is 0. Supposing \cref{ax:rowSum0} holds,
  observe that the sum of all of the elements in the row is 0. However,
  all of the other elements in each row sum to 0 in pairs, except \(A_{i,\varnothing}\) and \(A_{i,\{i\}}\). Thus, we
  immediately gain the final pairing, for when we take the row sum,
  it collapses to the two elements,
  \[
    A_{i,\varnothing} + A_{i,\{i\}} = 0,
  \]
  we need only re-arrange and obtain the final pairings, 
  \[
    A_{i,\varnothing} = -A_{i,\{i\}}.
  \]
\end{proof}

\begin{note}
  The pairing alone is not sufficient to show reasonableness, we would
  additionally need that
  \[
    \sum_S A_{i,S\cup\{i\}} = 1
  \]
  for \(S\) without \(i\) and \(A_{i,S\cup\{i\}} \geq 0\) for the same
  \(S\). Then, certainly \(A\) is reasonable.
\end{note}

Notice, with no modifications whatsoever that
\cref{lem:raavExtremePts} holds. Additionally, with the background
above, we have \cref{thm:saavConvexHullRAAV} as well, replacing
reasonable with reasonable for superadditive characteristic functions, as the argument
does not depend on superadditive or monotone \(v\) in the slightest.

\begin{proposition}\label{prop:superreas-convex-combo}
  If \(\phi\) is reasonable and efficient for superadditive
  characteristic functions, then it is a convex combination of the
  special allocations.
\end{proposition}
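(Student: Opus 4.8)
The plan is to reduce this statement entirely to the machinery already developed for the monotone case, exploiting the fact that every characteristic function used in the structural lemmas is in fact superadditive. The two pillars are \cref{lem:altReasonable}, which characterizes reasonableness purely in terms of the matrix entries---the sign pattern \cref{eq:signElts}, the pairing \cref{eq:pairningElts}, and the partial row sum \cref{eq:partialRowSumElts}---and the main result \cref{thm:saavConvexHullRAAV}, which converts full reasonableness plus efficiency into membership in the convex hull of the special allocations.

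First I would upgrade efficiency. By hypothesis $\phi$, with matrix $A$, is efficient for all superadditive $v$, and we assume \cref{ax:rowSum0}; so \cref{prop:eff-superadd-monotone} immediately gives that $\phi$ is efficient for all monotone $v$. This secures the column-sum vector $(-1,0,\ldots,0,1)$ of \cref{lem:rowwisesum}, which we will need both to pin down the first two columns and to invoke the main theorem at the end.

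Next I would recover the full matrix structure demanded by \cref{lem:altReasonable} using only superadditive test functions. The pairing \cref{eq:pairningElts} for $T\cap\{i\}=\varnothing$ follows from \cref{prop:superaddPairing}, whose proof uses only the superadditive $v^S_a$ and its truncations together with \cref{ax:rowSum0} to close off the $A_{i,\varnothing}=-A_{i,\{i\}}$ pair. The signs in \cref{eq:signElts} for the interior columns come from \cref{lem:orderSigns}, whose constructions $v^S_b$ and $v^S_c$ are truncations of the superadditive $v^S_a$ and hence superadditive; the two remaining signs, on the $\varnothing$ and $\{i\}$ columns, are supplied by \cref{lem:lastcolumnBd}, which needs only efficiency---now available for all monotone $v$ by the previous step. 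Finally, the partial row sum \cref{eq:partialRowSumElts} is given by \cref{thm:partialRowSum1} via the function $v_i$ of \cref{lem:maxplayergains}, which is superadditive by \cref{lem:charFuncSuperadd}. With all three conditions verified, \cref{lem:altReasonable} certifies that $\phi$ is reasonable in the full (monotone) sense.

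At that point the conclusion is immediate: $\phi$ is reasonable and efficient, so \cref{thm:saavConvexHullRAAV} places it in the convex hull of the special allocations. The only real point requiring care---and the step I would scrutinize most---is confirming that each structural lemma cited was genuinely proved with superadditive, not merely monotone, test functions, so that nothing in the argument silently appeals to a non-superadditive $v$; the single place where this would otherwise fail, the first-column signs of \cref{lem:lastcolumnBd}, is exactly where efficiency rather than a bespoke test function does the work, and \cref{prop:eff-superadd-monotone} has already guaranteed that.
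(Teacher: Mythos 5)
Your proof is correct, and it reaches the conclusion by a route that genuinely differs from the paper's at the decisive step. Both arguments open the same way---\cref{prop:eff-superadd-monotone} to upgrade efficiency and \cref{prop:superaddPairing} for the pairings---but then they diverge. The paper transports the \emph{theorems} into the superadditive setting: it leans on the remark that \cref{lem:raavExtremePts} and \cref{thm:saavConvexHullRAAV} hold verbatim with ``reasonable'' replaced by ``reasonable for superadditive characteristic functions,'' and applies that modified main theorem, i.e.\ it implicitly re-runs the Krein-Milman argument for the set of superadditive-reasonable, efficient allocations. You instead transport the \emph{allocation} into the monotone setting: you verify all three hypotheses of \cref{lem:altReasonable}---the interior signs via \cref{lem:orderSigns} (noting $v^S_b$, $v^S_c$ are truncations of the superadditive $v^S_a$), the remaining signs via \cref{lem:lastcolumnBd} (whose contradiction argument uses only the pairings, the established bounds, and efficiency), and the partial row sums via \cref{thm:partialRowSum1} with the superadditive $v_i$ of \cref{lem:maxplayergains,lem:charFuncSuperadd}---and conclude full monotone reasonableness, after which the vanilla \cref{thm:saavConvexHullRAAV} finishes. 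Your route buys two things: it avoids re-checking compactness, convexity, and the extreme-point classification for the superadditive-reasonable set (facts the paper asserts rather than re-proves), and it delivers \cref{cor:superreasreas} as an intermediate step rather than as a downstream consequence---notably, the paper remarks that this implication is ``difficult, if not impossible'' to obtain directly, whereas your argument shows the existing structural lemmas already yield it. What the paper's route buys is brevity; what it leaves implicit is exactly the sign and row-sum verification you carried out.
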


\begin{proof}
  This is mainly a direct consequence of
  \cref{lem:raavExtremePts,prop:superaddPairing,prop:eff-superadd-monotone}.
  Suppose we have a matrix \(A\) of the allocation \(\phi\) that is reasonable for
  superadditive characteristic functions. By
  \cref{prop:eff-superadd-monotone} we know that the same efficiency
  constrains are satisfied. Further, by \cref{prop:superaddPairing} we
  have the pairings we seek. Finally, via \cref{lem:raavExtremePts} we
  see that the extreme points of the reasonable, efficient
  allocations are the special allocations. To complete the result, we apply
  \cref{thm:saavConvexHullRAAV}, with the prior results on reasonable
  for superadditive characteristic functions \(v\) and the proof is complete.
\end{proof}

With this result, we note the following \lcnamecref{cor:superreasreas}.

\begin{corollary}[to \cref{prop:superreas-convex-combo}]\label{cor:superreasreas}
  Given an efficient allocation, reasonable for superadditive
  characteristic functions implies reasonableness.
\end{corollary}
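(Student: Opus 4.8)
The plan is to reduce the statement to two facts already in hand: the structural decomposition furnished by \cref{prop:superreas-convex-combo}, and the stability of reasonableness under convex combination recorded in \cref{prop:convexComboMaps}. The essential observation is that a special allocation is reasonable in the original (monotone) sense for the most trivial of reasons, so once $\phi$ is exhibited as a convex combination of special allocations, reasonableness for all monotone $v$ comes for free.

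First I would invoke \cref{prop:superreas-convex-combo}. Since $\phi$ is efficient (hence in particular efficient for the superadditive $v$, as superadditivity implies monotonicity in our setting) and reasonable for superadditive characteristic functions, that proposition writes its matrix as a convex combination
\[
  A = \sum_k t_k S^k, \qquad t_k \geq 0, \quad \sum_k t_k = 1,
\]
where each $S^k$ is the matrix of a special allocation.

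Next I would verify that each special allocation is reasonable for \emph{all} monotone $v$, directly from the definition. If $\sigma$ is the special allocation attached to a chain $\varnothing = M_0 \subset M_1 \subset \dots \subset M_{|N|} = N$, then for the player $i$ with $M_{m+1}\setminus M_m = \{i\}$ we have $\sigma_i(N;v) = v(M_{m+1}) - v(M_m)$, which is exactly the single marginal contribution $v(S\cup\{i\}) - v(S)$ for the set $S = M_m$ (note $i \notin M_m$). A single such marginal contribution automatically lies between the minimum and maximum of these contributions taken over all $S$ with $i \notin S$, so the bounds in \cref{def:reasonablebounds} hold for every monotone $v$. Hence every special allocation is reasonable in the sense of \cref{ax:reasonable}.

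Finally I would combine the two facts. \Cref{prop:convexComboMaps} states that a convex combination of two reasonable allocations is reasonable; a straightforward induction extends this to the finite convex combination $A = \sum_k t_k S^k$, whence $\phi$ is reasonable. The only point requiring care---rather than any genuine obstacle---is to keep straight that reasonableness is tested against \emph{all} monotone $v$: this is harmless here, because each $S^k$ satisfies the marginal-contribution bounds for every monotone $v$, and \cref{prop:convexComboMaps} preserves those bounds pointwise in $v$ irrespective of which class of characteristic functions is under consideration.
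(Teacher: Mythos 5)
Your proposal is correct and follows essentially the same route as the paper: the paper likewise feeds \cref{prop:superreas-convex-combo} into the easy (``if'') direction of \cref{thm:saavConvexHullRAAV}, whose proof is exactly your two ingredients---each special allocation is reasonable for all monotone $v$, and \cref{prop:convexComboMaps} propagates reasonableness through convex combinations. You have merely inlined that direction of the theorem rather than citing it, which changes nothing of substance.
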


\begin{indeed}
  We can trivially observe that if something lies within the convex
  combination of the special allocations, then it is reasonable by the vanilla
  version of \cref{thm:saavConvexHullRAAV}.
\end{indeed}

\begin{note}
  This is quite nontrivial. If one attempts to prove this fact from first
  principles it is difficult, if not impossible.
\end{note}

\begin{proposition}
  Assuming \cref{ax:rowSum0}, reasonableness implies reasonable for
  superadditive characteristic functions.
\end{proposition}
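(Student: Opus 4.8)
The plan is to treat this as the \emph{easy} direction of the correspondence, reducing it to a set inclusion between the two families of test functions rather than to any spanning or extreme-point argument. The crucial observation I would record first is that, in our conventions, every superadditive characteristic function is monotone. Characteristic functions take values in $\mathbb{R}^{\ge 0}$, and superadditivity forces $v(\varnothing)=0$ (the remark following \cref{ax:valueNothing}); hence for $S\subseteq T$ I would write $T$ as the disjoint union $S\cup(T\setminus S)$ and apply \cref{def:superadditvity} to obtain $v(T)\ge v(S)+v(T\setminus S)\ge v(S)$, which is exactly \cref{def:monotone}. This is the same fact already used in passing in the proof of \cref{prop:superBinarySpan}.

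With this inclusion established, the implication is immediate. Fixing a player $i$, reasonableness (\cref{ax:reasonable}) guarantees that
\[
  \min_{S: i\notin S}\left\{v\left(S\cup\{i\}\right)-v(S)\right\}
  \le \phi_i(N;v) \le
  \max_{S: i\notin S}\left\{v\left(S\cup\{i\}\right)-v(S)\right\}
\]
holds for \emph{all} monotone $v$. Since each superadditive $v$ is in particular monotone, the identical double inequality holds for all superadditive $v$, and this is precisely the defining condition of \cref{def:reasonableRedux}. Thus $\phi$ is reasonable for superadditive characteristic functions, as required.

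The only point deserving care is the standing hypothesis \cref{ax:rowSum0}, which I would note is harmless here. On the one hand, for a reasonable allocation the pairing $A_{i,S}=-A_{i,S\cup\{i\}}$ of \cref{thm:rowWisePairing} already holds (its proof invokes only reasonableness), so the entries of each row cancel pairwise and the row sum is automatically $0$; on the other hand, every superadditive $v$ satisfies $v(\varnothing)=0$, so the entry $A_{i,\varnothing}$ never enters the evaluation $\phi_i(N;v)=A_i\cdot v$ for such $v$ and cannot affect the condition being checked. I expect no genuine obstacle in this direction: unlike its converse \cref{cor:superreasreas}, which required the full Krein--Milman machinery, here we are \emph{restricting} the class of admissible test functions rather than enlarging it, so no estimate beyond the inclusion ``superadditive $\Rightarrow$ monotone'' is needed.
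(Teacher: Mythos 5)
Your proposal is correct and follows the same route as the paper: the paper's proof is the one-line observation that reasonableness for all monotone \(v\) is ``certainly satisfied for all superadditive \(v\),'' which implicitly rests on the same inclusion (superadditive \(\Rightarrow\) monotone, using non-negativity of \(v\)) that you spell out explicitly. Your added verification of that inclusion and your remark that \cref{ax:rowSum0} plays no real role are sound elaborations, not deviations.
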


\begin{proof}
  This is trivially the case. If one satisfies
  reasonableness for all monotone \(v\),
  \cref{eq:reiteratedBounds} is certainly satisfied for all
  superadditive \(v\).
\end{proof}

We conclude by distilling our results into the
following \namecref{thm:reas-superreas}.

\begin{theorem}\label{thm:reas-superreas}
  An efficient allocation is reasonable if and only if it is
  reasonable for superadditive characteristic functions.
\end{theorem}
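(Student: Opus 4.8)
The plan is to recognize the statement as the conjunction of its two implications, each of which has effectively already been secured by the immediately preceding results, so the real work is to assemble them rather than to compute anything new. First I would dispose of the forward direction: if $\phi$ is efficient and reasonable, then it satisfies the bounds of \cref{ax:reasonable} for every monotone $v$. Because a superadditive characteristic function is automatically nonnegative with $v(\varnothing)=0$ and hence monotone (for $S\subseteq U$, taking $T=U\setminus S$ disjoint from $S$ gives $v(U)\geq v(S)+v(T)\geq v(S)$), the class of superadditive functions sits inside the class of monotone functions. Consequently the reasonableness inequalities, holding for all monotone $v$, hold a fortiori for all superadditive $v$, so $\phi$ is reasonable for superadditive characteristic functions. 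This is exactly the content of the proposition preceding this theorem, and it requires no appeal to the heavy machinery.

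For the converse I would invoke \cref{cor:superreasreas} directly: an efficient allocation that is reasonable for superadditive characteristic functions is reasonable. The substance here lives in \cref{prop:superreas-convex-combo}, whose argument I would trace as follows. From reasonableness against superadditive test functions together with efficiency, one recovers, via \cref{prop:eff-superadd-monotone}, the column-sum pattern $(-1,0,\ldots,0,1)$ of \cref{lem:rowwisesum}; via \cref{prop:superaddPairing}, the row-wise pairings of \cref{thm:rowWisePairing}; and the remaining sign and partial-row-sum data. These structural constraints place the matrix of $\phi$ inside the convex hull of the special allocations through \cref{lem:raavExtremePts} and \cref{thm:saavConvexHullRAAV}, and since every member of that hull is reasonable by the easy half of \cref{thm:saavConvexHullRAAV}, $\phi$ is reasonable.

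The hard part is entirely in the converse, and it is genuinely subtle: testing the reasonableness bounds only against superadditive $v$ gives a priori strictly less information than testing against all monotone $v$, yet it forces full reasonableness. The reason, which I would emphasize, is the rigidity supplied by the Krein--Milman characterization --- once the finitely many structural constraints are in hand, membership in the convex hull of the special allocations is purely formal, and that membership already encodes reasonableness against every monotone function. I therefore expect essentially all the effort to go into confirming that the superadditive test functions and their truncations alone suffice to reproduce each structural lemma originally derived using arbitrary monotone $v$; the preceding results \cref{prop:eff-superadd-monotone,prop:superaddPairing} have arranged precisely this, so in the end the theorem is proved by citing them together with \cref{cor:superreasreas}.
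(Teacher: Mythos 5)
Your proposal is correct and follows essentially the same route as the paper: the forward direction is the paper's trivial observation that superadditive characteristic functions (being nonnegative with \(v(\varnothing)=0\), hence monotone) form a subclass of the monotone ones, and the converse is precisely \cref{cor:superreasreas}, resting on \cref{prop:superreas-convex-combo} via \cref{prop:eff-superadd-monotone}, \cref{prop:superaddPairing}, \cref{lem:raavExtremePts}, and \cref{thm:saavConvexHullRAAV}. The paper itself presents this theorem as a distillation of exactly those two preceding results, so your assembly matches its intent.
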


\subsection{Further exploration}

We note that our results here can be generalized further, in both
directions. Namely, all of the results we have seen can be made less
stringent. In all of our reasonability discussions, we have used only a
small set of superadditive characteristic functions, 

\begin{align*}
  v^S_a(T) &=
             \begin{cases}
               1 & \text{if } S \subset T\\
               0 & \text{else}
             \end{cases}\\
  v^S_b(T) &=
             \begin{cases}
               1 & \text{if } S \subsetneq T\setminus\{i\}\\
               0 & \text{else}
             \end{cases}\\
  v^S_c(T) &=
             \begin{cases}
               1 & \text{if } S \subsetneq T\\
               0 & \text{else}
             \end{cases}.
\end{align*}
Our results hold if we are reasonable and efficient for the set \(\mathcal{V}_{abc}\) 
containing all vectors of this type.

Further, as long as our general set of vectors contains this set of
vectors, we can establish a version of reasonableness, and obtain the
results once again.

\subsection{Concluding remarks}

As we discussed previously, one of the important ideas of cooperative
game theory, and more generally, a way to fairly determine power or
distribute gains, is the Shapley value. As such, an understanding of
the setting surrounding the value, and the axiomatic assumptions is
necessary. Within this paper, we discovered in the general,
non-probabilistic context, that alternative assumptions still give us
a robust structure.

Even without the Shapley value's uniqueness, the structure within
gives us some insight on 
how a reasonable, efficient allocation is constructed. This insight
leads us to other possible values, parallel to Shapley's, offering a
structure for an alternative way of distributing the gains of
collaboration.

\bibliographystyle{plainnat}
\bibliography{../universal-st}

\end{document}